\documentclass[11pt]{article}

\usepackage[margin=1in]{geometry}
\usepackage{amsmath,amssymb,amsthm,mathtools}
\usepackage{microtype}
\usepackage{enumitem}
\usepackage{xcolor}
\usepackage[colorlinks=true,linkcolor=blue!55!black,citecolor=blue!55!black,urlcolor=blue!55!black]{hyperref}
\usepackage[ruled,vlined]{algorithm2e}
\usepackage{tikz}
\usetikzlibrary{positioning,decorations.pathreplacing}

\SetKw{Return}{return}
\SetKw{NextTrial}{continue with the next trial}

\newtheorem*{theorem*}{Theorem}

\newtheorem{theorem}{Theorem}[section]
\newtheorem{lemma}[theorem]{Lemma}
\newtheorem{corollary}[theorem]{Corollary}
\newtheorem{proposition}[theorem]{Proposition}
\theoremstyle{definition}
\newtheorem{definition}[theorem]{Definition}

\newcommand{\Oh}{O^*}

\newcommand{\tw}{\operatorname{Tower}}

\title{$k$-Coloring is Faster than Computing the Chromatic Number}
\author{Or Zamir \\ Tel Aviv University}
\date{}

\begin{document}
\maketitle

\begin{abstract}
We prove that $k$-coloring on $n$-vertex graphs has a randomized algorithm running in time $ (2-\varepsilon_k)^n, $ where $\varepsilon_k>0$ for every fixed~$k$. Previously, only the cases~$k\leq 6$ were known to have faster solutions than the general~$\Oh\bigl(2^n\bigr)$ time algorithm of [Bj\"{o}rklund, Husfeldt, Koivisto, SICOMP 2009] that computes the chromatic number.
In fact, our algorithm solves the more general problem of $k$-list-coloring, even when the overall color palette is of unbounded size.

We resolve this long-standing open problem by generalizing and combining tools from the $(k+2)$-coloring to $k$-list-coloring reduction of [Zamir, ICALP 2021] and the hypergraph-containers based approach in [Zamir, STOC 2023]. Together with new algorithms for list-coloring instances mixing long and short color lists, this yields an iterable reduction from \((k+1)\)-list-coloring to \(k\)-list-coloring over arbitrary palettes. 
\end{abstract}

\newpage
\tableofcontents
\newpage

\section{Introduction}
The problem of $k$-coloring a graph, or determining its \emph{chromatic number} is one of the most fundamental and well-studied NP-complete problems. It was already listed as one of the first NP-complete problems in Karp's seminal 1972 paper~\cite{karp1972reducibility}. Similarly to $k$-SAT, the problem of $2$-coloring is polynomial, yet $k$-coloring is NP-complete for every $k\geq 3$~\cite{lovasz1973coverings,stockmeyer1973planar}.

There is substantial work exploring exponential-time worst-case algorithms for NP-Complete problems. A 2003 survey of Woeginger \cite{woeginger2003exact} covers and refers to dozens of papers exploring such algorithms for many problems including satisfiability, graph coloring, knapsack, TSP, maximum independent sets and more. A subsequent survey of Fomin and Kaski \cite{fomin2013exact} and a book of Fomin and Kratsch \cite{fomin2010exact} further cover the topic. More recently, the study of these exact running times has become closely connected with \emph{fine-grained complexity}. Rather than distinguishing only between polynomial and superpolynomial running times, fine-grained complexity supplements qualitative assumptions such as \(\mathrm{P}\neq\mathrm{NP}\) with quantitative hypotheses asserting that particular benchmark problems cannot be solved substantially faster than their best known algorithms~\cite{ImPa01,impagliazzoPaturiZane2001,cyganEtAl2016,vassilevskaWilliamsWilliams2018,abboudVassilevskaWilliams2014,backursIndyk2015,bringmann2014frechet,vassilevskaWilliams2018}.

For SAT, the straightforward enumeration algorithm runs in $O^*(2^n)$ time. On the other hand, it is known that for every fixed $k$ there exists a constant $\varepsilon_k>0$ such that $k$-SAT can be solved in $O^*\left(\left(2-\varepsilon_k\right)^n\right)$ time. This was first shown by Monien and Speckenmeyer in 1985 \cite{MoSp5}. Since then a long list of improvements for these $\varepsilon_k$ values have been published~\cite{Rodosek96,PPZ99,PPSZ05,schoning1999probabilistic,Hertli14a,Hertli14b,ScSt17,hansen2019faster}. Several of the most popular among the aforementioned fine-grained conjectures focus on the asymptotic behavior of these~$\varepsilon_k$ values~\cite{ImPa01,impagliazzoPaturiZane2001,vyasWilliams2021superStrong}.

The coloring problem is less well understood. The naive enumeration algorithm for $k$-coloring takes $O^*(k^n)$ time. Thus, at a first glance it is not even clear that computing the chromatic number takes ``only'' exponential time. Nonetheless, a simple dynamic-program computes the chromatic number in $O^*(3^n)$ time ~\cite{lawler1976note}. More sophisticated algorithms followed~\cite{moon1965cliques,paull1959minimizing,eppstein2001small,byskov2004enumerating}, until their culmination with a chromatic number algorithm running in $O^*(2^n)$ time by Bj{\"o}rklund, Husfeldt and Koivisto in 2006~\cite{BHK2009}.

For $k=3,4$, faster algorithms are known for the $k$-coloring problem, as well as for more general problems such as~$k$-list-coloring and even general~$(k,2)$-CSPs. Schiermeyer \cite{schiermeyer1993deciding} showed that $3$-coloring can be solved in $O^*(1.415^n)$ time. Beigel and Eppstein \cite{BeigelEppstein2005} gave algorithms solving $3$-coloring in $O^*(1.3289^n)$ time and $4$-coloring in $O^*(1.8072^n)$ time in 2005. These numbers were later improved~\cite{fomin2007improved,wu2024faster,meijer20233}. In~\cite{Zamir2021} these were extended to algorithms for~$5$-coloring and~$6$-coloring running in~$(2-\varepsilon)^n$ time, for some~$\varepsilon>0$, as well. Unlike $k$-SAT, though, for every $k>6$ the best known running time for $k$-coloring remained $O^*(2^n)$, the same as generally computing the chromatic number.

In this work, we fully resolve this long-standing gap by showing an algorithm with an improved exponent base for every fixed number of colors.
\begin{theorem*}
For every~$k\in \mathbb{N}$ there exists~$\varepsilon_k>0$ such that~$k$-coloring can be solved in~$(2-\varepsilon_k)^n$ with a randomized algorithm.
\end{theorem*}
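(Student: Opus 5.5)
We prove the stronger statement suggested in the abstract: for every fixed $k$ there is $\varepsilon_k>0$ such that $k$-list-coloring, where every vertex has a list of at most $k$ colors from an arbitrary (possibly unbounded) palette, can be decided in randomized time $(2-\varepsilon_k)^n$. Ordinary $k$-coloring is the special case in which all lists equal $\{1,\dots,k\}$. The plan is an induction on $k$. The base cases $k\le 2$ are polynomial: $2$-list-coloring reduces to 2-SAT. The inductive step is a reduction from $(k+1)$-list-coloring to $k$-list-coloring that costs only a $(2-\delta)^n$ overhead, where $\delta$ depends on $k$ and on $\varepsilon_k$. Iterating it yields $\varepsilon_k>0$ for every $k$. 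The bounds will be tower-type in $k$, which is acceptable since only positivity is claimed.

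\textbf{Inductive step, dense case.} Fix an instance of $(k+1)$-list-coloring. If some color $c$ appears in the lists of many vertices, say a set $S_c$ of at least $\alpha n$ vertices, then in a valid coloring the class of $c$ is an independent set inside $S_c$. We guess that class. On the remaining vertices we remove $c$ from every list, so vertices of $S_c$ now have lists of size at most $k$. This is the mechanism of the $(k+2)$-coloring to $k$-list-coloring reduction of [Zamir, ICALP 2021].

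To bound the cost of guessing, we enumerate only maximal independent subsets of $S_c$, or use a random-restriction variant of this. The number of guesses is controlled by $3^{|S_c|/3}$-type bounds, or better by hypergraph containers as in [Zamir, STOC 2023]. Containers handle the case where $S_c$ induces a graph with many edges: independent sets then lie in few small containers, and the branching cost is $2^{(1-\gamma)|S_c|}$.

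If instead $S_c$ is sparse, we use a different argument. A sparse graph has a large independent set, and for sparse pieces the residual problem is cheap. Either way, the vertices with lists of size $k+1$ shrink by a constant fraction of $n$, at a cost below the trivial $2^{\alpha n}$.

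\textbf{Inductive step, mixed case.} If no color is popular, each color appears in few lists. We then need an algorithm for instances that mix long lists (size $k+1$) and short lists (size at most $k$). On the set $L$ of long-list vertices we branch by randomly restricting each list to $k$ of its $k+1$ colors. A fixed valid coloring survives with probability $\bigl(k/(k+1)\bigr)^{|L|}$, and the resulting instance is solved with the $(2-\varepsilon_k)^n$ algorithm. This is good only when $|L|$ is small.

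When $|L|$ is large and the palette is spread out, we instead use a Björklund–Husfeldt–Koivisto-style inclusion–exclusion or a subset-DP over the colors. Low color multiplicity means each color class lives in a small vertex set, so that summing $2^{|S_c|}$ over colors, with a DP over vertices, beats $2^n$. Balancing these regimes by the parameters $\alpha$ and $|L|/n$ gives the $(2-\delta)^n$ bound.

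\textbf{Main obstacle.} The hard part is the dense-color branching in the regime where $S_c$ is neither sparse enough for the easy argument nor dense enough for containers to give a strong saving. In that regime we need a supersaturation or container lemma, for the relevant hypergraph of forbidden color patterns, that yields savings uniform over an unbounded palette. I would first try the container theorem applied to the graph induced on $S_c$, with the threshold tuned so that any intermediate density either gives containers of size $(1-\gamma)|S_c|$ or certifies enough sparsity to find a linear-size independent set.
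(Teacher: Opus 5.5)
\textbf{There is a genuine gap in the inductive step.} Your outer structure matches the paper's: 2-SAT for $k\le 2$, then a bootstrap from $k$- to $(k+1)$-list-coloring over arbitrary palettes. The gap sits exactly where you put the ``main obstacle'': guessing the whole color class $I$ of a popular color cannot pay for itself in general.

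The inductive hypothesis only bounds the residual by $a^{n-|I|}$, with $a=2-\varepsilon_k$ close to $2$. So you need $\sum_I a^{n-|I|}<2^n$ over all guesses, and this fails even when every degree is huge. For ordinary coloring every color has $S_c=V$, so the statement you are proving lives entirely in your dense case. Take about $n/3$ disjoint triangles and join them to a fixed complete multipartite graph with parts of size $D$. Every degree is at least $D$. Yet the one-vertex-per-triangle sets are $3^{n/3-O(1)}$ maximal independent sets of size about $n/3$. The sum is therefore at least $(3a^2)^{n/3-O(1)}$, which exceeds $2^n$ once $a>\sqrt{8/3}$.

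Containers do not help. A container $C\supseteq I$ only tells you which vertices cannot receive $c$. Enumerating candidate classes inside $C$ still costs up to $\sum_{I\subseteq C}a^{n-|I|}=a^{n-|C|}(1+a)^{|C|}$. For $|C|=(1-\gamma)n$ and $a$ near $2$ this is about $2^{\gamma n}3^{(1-\gamma)n}\gg 2^n$. Your ``branching cost $2^{(1-\gamma)|S_c|}$'' is never weighed against a matching saving. A large independent set in a sparse $S_c$ likewise says nothing about which set is the class. Also, when $S_c\neq V$ the vertices outside $S_c$ keep lists of size $k+1$, so the residual is not a $k$-list instance.

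The mixed case is also unsupported. The natural way to combine per-color tables is a subset convolution over all of $V$, which costs $2^n$ however small the domains $S_c$ are, and you give no alternative. The tool that does exploit restricted domains is the two-block Extensions-Sum bound (Theorem~\ref{lem:two-block-restricted}). It gives $\Oh(2^{|D_Q|}+2^{|D_R|})$, so it needs a split $P=Q\mathbin{\dot\cup}R$ with linearly many lists inside $Q$ and linearly many inside $R$. When no color is popular and $\alpha$ is small in terms of $k$, a uniformly random split does achieve this with constant probability, by a bounded-differences argument. So that case is repairable; the dense case is where your plan has no working mechanism.

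\textbf{The missing idea is never to guess color classes.} The paper colors only a sparse random seed of size about $(\ln\Delta_1/\Delta_1)n$ (Lemma~\ref{lem:seed}), whose enumeration cost is negligible. This shortens the list of every \emph{good} vertex, meaning one with some list color on more than $\Delta_1$ neighbors under the witness. Many low-degree vertices are handled by the bounded-degree theorem (Theorem~\ref{thm:low-degree}).

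When all lists are identical, as in ordinary coloring, every high-degree vertex is good. So these two tools already handle the example above, and the reduction from $(k+1)$-coloring to $k$-list-coloring is just Theorem~\ref{thm:published-gaps}(i). The real difficulty is the list-to-list step. Its hard case has linearly many high-degree active vertices $v$ whose neighbors are almost all colored \emph{outside} $L(v)$, which can only happen when lists differ.

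For that case the paper:
\begin{enumerate}[label=(\arabic*)]
\item samples vertex-disjoint stars $(v,T)$ with $|T|=r$;
\item puts each color in $R$ independently with probability $1/4$;
\item guesses $\ell$ stars and restricts their leaves to $Q$.
\end{enumerate}
This manufactures $r\ell$ vertices with lists in $Q$ and at least $r\ell$ centers with lists in $R$, so the two-block bound costs $\Oh(2^{n-r\ell})$. That beats the guessing cost $(2/\gamma)^\ell$, because $\gamma$ decays only like $4^{-k}(3/4)^r$ while the saving is $2^{r\ell}$. Your proposal has no mechanism that produces such complementary supported sets, and that is exactly what the list-varying case requires.
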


The proof establishes the result for the more general problem of \emph{List Coloring}. In the list coloring problem, we are given a graph~$G$ and lists $L(v)$ of possible colors for each vertex $v\in V(G)$, we are asked to find a proper coloring of $G$ such that each vertex $v$ is colored by some color from $L(v)$. In the $k$-list-coloring problem each list $L(v)$ is of size at most $k$. The set~$P=\cup_v L(v)$ of all possible colors is called \emph{the palette}. Note that~$k$-coloring is a special case of~$k$-list-coloring over the palette~$[k]$ in which~$\forall v. L(v)=[k]$. The~$\Oh(2^n)$ algorithm of~\cite{BHK2009} works also for list coloring.

We extend our algorithm also to $k$-list-coloring, even when the overall color palette is of unbounded size.
\begin{theorem*}
For every~$k\in\mathbb{N}$ there exists~$\varepsilon_k>0$ such that~$k$-list-coloring over an arbitrary palette can be solved in~$\Oh((2-\varepsilon_k)^n)$ with a randomized algorithm.
\end{theorem*}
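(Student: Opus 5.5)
We argue by induction on $k$, through a reduction from $(k+1)$-list-coloring to $k$-list-coloring over arbitrary palettes, as the abstract indicates. The base cases are $k\le 2$, which are polynomial: $2$-list-coloring reduces to $2$-SAT. Suppose $k$-list-coloring over any palette runs in $\Oh((2-\delta)^n)$ time. Take a $(k+1)$-list-coloring instance $(G,L)$ on $n$ vertices. Set a small threshold $\gamma=\gamma(k,\delta)>0$; our goal is an algorithm for $(G,L)$ running in $\Oh((2-\varepsilon)^n)$ time with $\varepsilon=\varepsilon(k,\delta)>0$.

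\textbf{Case 1: $G$ has a large independent set whose vertices share a color.} Suppose we find an independent set $I$ with $|I|\ge \gamma n$ and a color $c$ such that $c\in L(v)$ for many $v\in I$. We branch over the subset $S\subseteq I$ of vertices that receive color $c$. For vertices of $I$ not colored $c$, we delete $c$ from their lists. For vertices outside $I$ adjacent to $S$, we also delete $c$, since those lists shrink. Once enough lists have size $\le k$, most of the instance is a $k$-list-coloring instance. The $(k+1)$-lists that remain must be handled separately. This is the ``$(k+2)$-coloring to $k$-list-coloring'' mechanism of [Zamir, ICALP 2021], generalized here to lists. Its cost is $2^{|I|}$ times the $k$-list-coloring cost on the remaining vertices, and this must beat $2^n$. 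To get that, we will not enumerate all $S$. Instead we exploit that each $S$ fixes the color of $S$ and forbids $c$ on $I\setminus S$, so the $n-|I|$ remaining vertices are solved in $(2-\delta)^{n-|I|}$ time. The saving then comes from choosing $c$ and $I$ via averaging, so that the vertices needing branching are few relative to the gain. In practice we randomize: pick a random color $c$ and guess the color class of $c$ restricted to a set of bounded size.

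\textbf{Case 2: no such structure, so the instance is ``spread''.} Every color class is constrained, and we use the hypergraph-containers approach of [Zamir, STOC 2023]. A proper coloring partitions $V$ into independent sets $C_c=\{v: \text{colored }c\}$. Containers let us enumerate, in $2^{o(n)}$ or $(2-\varepsilon')^{n}$ time, a small family of sets with two properties: every independent set lies inside some container, and each container is ``almost independent''. When $G$ has no large dense-independent substructure, the number of relevant container families is small. Guessing the containers of the few largest color classes restricts the lists, which pushes us back into Case 1 or into an instance mixing short and long lists. For these mixed instances we need new algorithms. The short-list vertices ($|L(v)|\le k$) are handled by the inductive $k$-list-coloring algorithm. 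The long-list vertices are handled by an inclusion--exclusion/BHK-style $2^{m}$ algorithm over only the $m$ long-list vertices, combined via a fast subset-convolution-type join. A weighted combination of $(2-\delta)^{n-m}$ and $2^m$ with interaction terms gives $(2-\varepsilon)^n$ unless $m\approx n$. When $m\approx n$, every list is long and we are back in the structural dichotomy.

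\textbf{Main obstacle.} The hard step is the mixed long/short list algorithm over an \emph{unbounded} palette. Inclusion--exclusion over colors costs a factor equal to the palette size, which is fine. However, coupling the $k$-list-coloring subroutine with counting over long-list vertices is not straightforward, because the subroutine is a decision algorithm, not a counting one. My first attempt is to branch over the coloring restricted to the color classes touching long-list vertices. Each such class is an independent set in $G[\text{long}]$ extended into the short part, and we enumerate these via containers. A second difficulty is making the constant $\varepsilon_{k+1}$ depend only on $\delta=\varepsilon_k$ and $k$. This is needed so that the iteration yields a positive (tower-type decaying) $\varepsilon_k$ for every fixed $k$, and it requires careful parameter choices for $\gamma$ and for the container thresholds.
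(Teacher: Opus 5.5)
Your plan has a genuine gap. Neither case yields a concrete sub-$2^n$ algorithm, and the step you flag as the main obstacle cannot be patched as described.

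\textbf{Case~1.} After fixing which $S\subseteq I$ gets color $c$, only $I\setminus S$ and the neighbors of $S$ lose a color. Every other vertex keeps a list of size $k+1$, so the residual is not a $k$-list instance. Even if it were, it would live on the $n-|S|$ vertices outside $S$ (you drop $I\setminus S$), and $\sum_{S\subseteq I}(2-\delta)^{n-|S|}=(3-\delta)^{|I|}(2-\delta)^{n-|I|}$ exceeds $2^n$ for small $\delta$. Your fallback of guessing the class of $c$ on a bounded-size set shrinks many lists only if those few vertices meet most neighborhoods. That requires a high-degree hypothesis, which you never introduce.

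\textbf{Case~2.} You do not say what makes the container family small and the containers useful; in [Zamir, STOC 2023] this needed (almost) regularity. Moreover, over an unbounded palette a witness coloring can consist of linearly many tiny classes, so guessing containers of ``the few largest classes'' restricts almost nothing.

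\textbf{The mixed subroutine.} It is unnecessary when long lists are few: enumerate the colors of at most $\mu n$ long-list vertices and call the inductive algorithm, at cost $(k+1)^{\mu n}(2-\delta)^{n-\mu n}$. When long lists are many, you send the instance back to the dichotomy, which is circular.

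\textbf{What is missing is the paper's mechanism.}

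First, a degree dichotomy. If a constant fraction of vertices have bounded degree, the bounded-degree theorem of [Zamir, ICALP 2021] applies. Otherwise, sample a small random seed and enumerate its colorings. A max-list vertex with some own-list color on more than $\Delta_1$ neighbors (\emph{good}) then loses a color with high probability. If at most $\beta_0 n$ max-list vertices are bad, the few missed ones are enumerated and the residual is a \emph{pure} $k$-list instance. So the inductive decision algorithm is never coupled with counting.

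Second, the hard case: linearly many high-degree \emph{bad} vertices, whose neighbors are almost all colored outside their list. There the paper proceeds as follows.
\begin{enumerate}[label=(\roman*)]
\item Sample vertex-disjoint stars (a center $v$ and $r$ random neighbors $T$). Each is clean, $c(T)\cap L(v)=\varnothing$, with constant probability.
\item Randomly partition $P=Q\mathbin{\dot\cup}R$, putting each color in $R$ with probability $1/4$. A clean star then has $L(v)\subseteq R$ and $c(T)\subseteq Q$ with probability at least $4^{-(k+1)}(3/4)^r$, regardless of $|P|$.
\item Guess $\ell=\Theta(n)$ such stars and restrict their leaves to $Q$.
\end{enumerate}
This produces linear-size sets $A$ (lists in $Q$) and $\mathcal B_R$ (lists in $R$). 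The two-block consequence of Extensions-Sum then decides colorability in $\Oh(2^{n-|A|}+2^{n-|\mathcal B_R|})$ time. The guess costs only a constant times $(4/3)^r$ per selected star while saving $2^r$, so large $r$ gives $(2-\varepsilon)^n$ with $\varepsilon$ depending only on $k$ and $\delta$.

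This palette bipartition plus Extensions-Sum, not a join of a decision algorithm with inclusion--exclusion, is what replaces your mixed-instance subroutine.
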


We prove the result by bringing together two complementary lines of work: the reduction from \((k+2)\)-coloring to \(k\)-list coloring developed in~\cite{Zamir2021}, and the hypergraph-container framework introduced in~\cite{Zamir2022}. The key new technical contribution we introduce is a strengthening of the former reduction to an iterable bootstrap: a sub-\(2^n\) algorithm for \(k\)-list-coloring implies one for \((k+1)\)-list-coloring, with no restriction on the palette size. Starting from polynomial-time \(2\)-list-coloring and applying the bootstrap successively gives a \(\Oh((2-\varepsilon_k)^n)\) time algorithm for every fixed maximum list size \(k\).

\paragraph{Other related works.}

Algorithms for~$k$-coloring with sub-$2^n$ running times were developed for several restricted graph families, such as bounded-degree graphs~\cite{bjorklund2010trimmed}, sparse graphs~\cite{golovnev2016families}, graphs with many low-degree vertices~\cite{Zamir2021}, and almost-regular graphs~\cite{Zamir2022}.

Recently, Bj{\"o}rklund, Curticapean, Husfeldt, Kaski, and Pratt showed that if Strassen's generalized asymptotic rank conjecture holds, then a deterministic \(\Oh(1.99982^n)\)-time algorithm for computing the chromatic number exists~\cite{BCHKP2025}. This is a very strong and non-constructive conjecture (which even in its weaker form implies that matrix multiplication admits an~$n^{2+o(1)}$ time algorithm). This conjecture is also known to be incompatible with the Set Cover conjecture~\cite{BjorklundKaski2024,Pratt2024}. Thus, the above may be interpreted as further evidence against the conjecture rather than as a conditional improved coloring algorithm, depending on the reader's beliefs.

\section{Overview}\label{sec:overview}
Our proof builds on the reduction framework introduced in~\cite{Zamir2021}. The main structural tool there, which we use as a black-box, is that for every fixed \(\alpha,\Delta>0\), List Coloring admits a sub-\(2^n\) algorithm whenever at least \(\alpha n\) vertices have degree at most \(\Delta\). Consequently, after arbitrarily choosing the constants, we may focus on graphs in which almost every vertex has large degree.

In such a graph, we may sample a small random set of vertices and color them naively (say, by enumerating over all options). That small set sees much of the graph as its neighbors, due to their high degrees. In particular, for any color that occurs many times in the neighborhood of a vertex, the random set is likely to contain a neighbor with that color, which in turn deletes this option from the vertex's list (assuming we colored the small sampled set correctly). We refer to that as the \emph{easy solver} used throughout the paper.

This reasoning gave the two reductions of~\cite{Zamir2021}: a sub-\(2^n\) algorithm for \(k\)-list-coloring implies such an algorithm for both \((k+1)\)-coloring and \((k+2)\)-coloring. The first is immediate from the above: the easy solver gets rid of at least one color for each vertex, reducing the list of color options per vertex by one. The second reduction already encounters an important obstruction. Repeatedly hitting the neighborhood of a high-degree vertex need not reveal several colors: almost all of its neighbors may receive the same color in a fixed coloring. That failure case is nevertheless useful: if a vertex has many neighbors but nearly all of them are of the same color, then we may sample a subset of its neighbors and then guess they all receive one common color and contract them. The resulting reduction, together with the previously known algorithms for $4$-list-coloring, gave the first sub-\(2^n\) algorithms for  \(5\)- and \(6\)-coloring~\cite{Zamir2021,BeigelEppstein2005}.

The first new obstacle appears when extending the argument to \((k+3)\)-coloring.  The neighborhood of a high-degree vertex may now be concentrated on two colors rather than one.  A similar reduction then does not allow us to guess that many of a vertex's neighbors are of identical color - but instead, that they must be colored with one of two specific colors. Thus, we can produce many vertices with shortened lists of size two. Treating their two possible colors by direct branching would lead to the usual~$2^n$ bound and thus consume precisely the saving we hope to obtain. On the other hand, if all vertices had lists of size two, then this would be an instance of $2$-list-coloring or $2$-SAT which is polynomial time solvable. This leads us to isolate a binary-list interpolation problem: how much faster does list-coloring become when a linear number of vertices have lists of size two?

We solve this problem by grouping vertices that share the same two-element list and separating the use of these two colors from the remaining colors.  The resulting algorithm runs in \(\Oh(2^{n-b/\binom K2})\) time when \(b\) vertices have lists of size at most two over a palette of size \(K\).  This interpolation between polynomial-time \(2\)-list-coloring and general list-coloring supplies exactly the additional saving needed for the three-step reduction.

This interpolation algorithm is particularly simple and is stated in a self-contained manner. Suppose \(b\) vertices have lists of size at most two. Ignoring technical details such as singleton lists, some pair \(Q\subseteq P\) is the list of at least \(b/\binom K2\) vertices; remove this most frequent pair class and call the remaining vertex set \(H\). Using an algorithm of~\cite{BHKK2007,BHK2009} that decides for each induced subgraph whether it is colorable in the same~$\Oh(2^n)$ time needed for solving coloring on the entire graph, we determine simultaneously which \(X\subseteq H\) are colorable using only colors outside \(Q\). For each such \(X\), the complement is restricted to the two colors of \(Q\) and is therefore solved by \(2\)-SAT in polynomial time.  The two parts of the colorings combine without any compatibility check as they are on disjoint sets of colors. The running time is
\[
   \Oh\!\left(2^{\,n-b/\binom K2}\right)
   =\Oh\!\left(2^{n-b}q_K^b\right),
   \qquad
   q_K=2^{\,1-1/\binom K2}<2.
\]
Together with the previous gap-two reduction this yields a sub-\(2^n\) algorithm for \((k+3)\)-coloring from one for \(k\)-list-coloring. In particular, the \(4\)-list-coloring algorithm of~\cite{BeigelEppstein2005} gives the first such algorithm for \(7\)-coloring.

The interpolation theorem also isolates a useful limitation of this argument. The tempting endpoint \(q=1\), under which binary-list vertices would contribute no exponential cost, is impossible assuming ETH: We construct a reduction from Traxler's bounded-frequency \((d,2)\)-CSP lower bound~\cite{Traxler2008}, by showing that such CSPs can be encoded as list-coloring instances in which most lists are of size two. On the other hand, we do not rule out a palette-independent constant \(1<q<2\). The general reduction below bypasses this interpolation question and also gives a sub-\(2^n\) algorithm for \(k\)-list-coloring over an arbitrarily large palette.

For the general reduction, the three-step warm-up appears insufficient: Binary lists have a polynomial-time endpoint, but there is no analogous reason that an instance containing many lists of size three or four should be easier than a general instance. For the full generalization then we combine the sampling reduction and the terminal algorithm instead of generalizing either one in isolation.

Consider one step of this recursive reduction: we assume a sub-\(2^n\) algorithm for \((k-1)\)-list-coloring, and aim to solve \(k\)-list-coloring, in both cases over an arbitrary palette \(P\). Unless the easy solver already applies, there are linearly many high-degree vertices whose lists have size exactly \(k\), but whose neighborhoods do not contain any sufficiently frequent color from their own lists. 
This gives us a useful primitive, similar to the one used in the reductions of~\cite{Zamir2021}: If we sample a random vertex~$v$ and a constant~$r$ neighbors of it, then with good probability all of these neighbors should be colored with colors \emph{not} appearing in~$v$'s color list~$L(v)$.
We thus sample a small linear number of vertex-disjoint stars, each consisting of a center \(v\) and a constant number \(r\) of its neighbors. With constant probability, a linear subset of the sampled stars should have all their leaves colored outside the center's color list.

We now randomly partition the palette into \(P=Q\mathbin{\dot\cup}R\), placing each color in \(R\) with probability \(1/4\). For each such star, the probability that its center's entire list lies in \(R\) and its leaves' correct colors lie in \(Q\) is at least \(4^{-k}(3/4)^r\), independently of the size of \(P\). An expectation argument gives a linear number of successful stars with constant probability, even though different stars may share colors. We cannot recognize which stars are successful, so we repeatedly guess a sufficiently small linear subcollection and restrict its leaves to \(Q\). Choosing \(r\) large enough makes the saving from these leaves outweigh the cost of guessing, while the center lists contained in \(R\) supply the other supported set. Thus, rather than merely producing many vertices with somewhat shorter lists, the reduction produces two sets with opposite palette restrictions:
\[
   \mathcal A_Q=\{v:L(v)\subseteq Q\},
   \qquad
   \mathcal B_R=\{v:L(v)\subseteq R\}.
\]
This structure replaces the size-two lists used in the warm-up.

It remains to exploit these two sets algorithmically.  Every vertex eventually colored from \(Q\) must lie outside \(\mathcal B_R\), while every vertex colored from \(R\) must lie outside \(\mathcal A_Q\). Hence we know two large, generally overlapping, supersets containing the two parts of the eventual coloring.  The difficulty is to determine how the vertices in their overlap should be divided between the two palettes, and then to combine the resulting colorings.  In the binary interpolation algorithm this combination was particularly simple, because one side was solved by \(2\)-SAT.  For larger sets \(Q\), no such argument is available.

Fortunately, the Extensions-Sum machinery developed in~\cite{Zamir2022} for the graph-container approach solves exactly this more general combination problem.  One of its main consequences is that an instance with the two supported sets above can be solved in
\[
   \Oh\!\left(
      2^{\,n-|\mathcal A_Q|}
      +2^{\,n-|\mathcal B_R|}
   \right)
\]
time.  Informally, this is comparable to solving the coloring problem separately over the possible \(Q\)-colored part and the possible \(R\)-colored part, despite the fact that their allowed vertex sets overlap. Both supported sets constructed by the reduction have linear size, so the terminal algorithm is faster than \(2^n\).

Of course, many details are swept under the rug in this high-level overview. In particular, the algorithm of course has no access to the eventual coloring and we thus cannot tell which vertices have many neighbors colored by colors from their lists and which do not. The full algorithm solves this by making a sequence of randomized guesses and setting the parameters in a way that guarantees the benefit from succeeding in such a guess outweighs the probability of failing in it.

Thus, we end up with a reduction from list-coloring to list-coloring (rather than the previous reductions from \emph{coloring} to list-coloring),
\[
   (k-1)\text{-list-coloring}
   \quad\Longrightarrow\quad
   k\text{-list-coloring}
\]
with a sub-\(2^n\) running time for every fixed \(k\geq3\), independently of the palette size. This reduction can then be bootstrapped to construct sub-$2^n$ algorithms for every~$k$.

\subsection{Organization and reading guide}
Section~\ref{sec:background} collects the black-box ingredients and the elementary reductions used later: normalization and decision-to-search, the simultaneous coloring of all induced list sub-instances of~\cite{BHKK2007}, the bounded-degree theorem of~\cite{Zamir2021}, and the two-block consequence of Extensions-Sum from~\cite{Zamir2022}. Section~\ref{sec:warmup} proves the three-step warm-up reduction, develops useful tools to be later used in the general reduction, gives the binary-list interpolation algorithm; in Appendix~\ref{sec:eth} we establish its ETH-based limitation. This section is also intended to motivate all of the objects in the general proof. Section~\ref{sec:general} proves the list-to-list bootstrap over arbitrary palettes and then iterates it to obtain the main theorem. Section~\ref{sec:discussion} is finally used for discussion and listing some remaining open problems.
In Appendix~\ref{sec:constants} we informally repeat the proof with some tedious bookkeeping to understand the asymptotic behavior of the quantitative constants our algorithm achieves; it is not necessary for proving any of our results but meant to be used as a baseline for future quantitative improvements. 

\section{Background}\label{sec:background}

The terminology used throughout the paper is standard. For a graph $G$ we denote by $V(G)$ and $E(G)$ its vertex-set and edge-set, respectively. Throughout the paper, $n$ is used to denote $|V(G)|$. For a subset $V'\subseteq V(G)$ we denote by $G[V']$ the sub-graph of $G$ induced by $V'$. For $v\in V$ we denote by $\deg(v)$ the degree of $v$ in $G$, by $N(v)$ the set of neighbors of $v$, and by $N[v]:=N(v)\cup\{v\}$. All logarithms are base two, except that $\ln$ denotes the natural logarithm.

The notation $\Oh(\cdot)$ suppresses factors polynomial in the input size. The maximum list size and all sampling parameters are fixed constants; the palette is fixed only where explicitly stated. We may discard colors that appear in no list, so a $k$-list-coloring instance has $|P|\leq kn$. We allow all algorithms to have an exponentially small error-probability. We remark that as the output of coloring algorithms is verifiable, all errors can be assumed to be one-sided.

\subsection{Coloring, List Coloring, and Constraint Satisfaction Problems}

In the \emph{$k$-coloring problem}, we are given a graph $G$ and need to decide whether there exists a $k$-coloring $c:V(G)\rightarrow [k]$ of $G$, such that for every $(u,v)\in E(G)$ we have $c(u)\neq c(v)$. If a graph has a $k$-coloring, we say that it is $k$-colorable. In the \emph{chromatic number problem}, we are given a graph $G$ and need to compute $\chi(G)$, the minimal integer $k$ for which $G$ is $k$-colorable.

For a \emph{palette} $P$, which we would usually take to be~$P=[K]$ for some integer~$K$, a list-coloring instance (over~$P$) is a graph $G$ with a nonempty list $L(v)\subseteq P$ at each vertex. A list coloring is a map $c:V(G)\to P$ satisfying $c(v)\in L(v)$ and $c(u)\neq c(v)$ on every edge~$(u,v)\in E(G)$. In the $k$-list-coloring problem (over~$P$) all lists have size at most $k$.

In a general $(a,b)$-CSP (Constraint Satisfaction Problem, see \cite{kumar1992algorithms} or \cite{schoning1999probabilistic} for a complete definition and discussions) we are given a list of \emph{constraints} on the values of subsets of size $b$ of $n$ distinct $a$-ary variables, and need to decide whether there exists an assignment of values to the variables for which all constraints are satisfied. A general constraint on a set $x_1,\ldots,x_b$ of $a$-ary variables is a subset $T$ of the $a^b$ possible assignments in $\{x_1,\ldots x_b\}\rightarrow[a]$. The constraint is satisfied by an assignment $c$, possibly on more variables, if the restriction of~$c$ to ${\{x_1,\ldots x_b\}}$ is in $T$. When~$b=2$, that is, every constraint is on a pair of variables, then without loss of generality all constraints are of the form ``$c(x_1)\neq c_1 \;\vee\;c(x_2)\neq c_2$'' for variables~$x_1,x_2$ and colors~$c_1,c_2$.

Every instance of $k$-coloring is also an instance of $k$-list-coloring (over palette $P=[k]$). Furthermore, every instance of $k$-list-coloring, irrespective of the palette size, is also a~$(k,2)$-CSP. Not to be confused with $k$-SAT, on the other hand, which is an example of a $(2,k)$-CSP.

Both $k$-coloring and $k$-list-coloring can be described as set-partition problems (using $k$ independent sets for $k$-coloring, and one allowed independent set per palette color for list-coloring); this reformulation is crucial to the~$\Oh(2^n)$ algorithm solving them both, independently of the value of~$k$~\cite{BHK2009}. The more general $(k,2)$-CSP problem though, does not have such a formulation, and indeed Traxler~\cite{Traxler2008} proved that if the Exponential Time Hypothesis (ETH) holds, then there exists a constant~$\alpha>0$ such that~$(k,2)$-CSP requires~$k^{\alpha n}$ time.

\subsection{From deciding colorability to finding a coloring}
\label{sec:decision-to-search}

The algorithms in this paper are phrased primarily as decision algorithms: they determine whether a given graph or list instance is colorable.  This does not prevent us from returning an explicit coloring.  For ordinary $k$-coloring, a standard self-reduction converts any decision algorithm into a search algorithm with only polynomial overhead; e.g., see Lemma~2.2 of~\cite{Zamir2021}.  That reduction repeatedly adds nonedges whose addition preserves $k$-colorability, until the resulting graph is edge-maximal $k$-colorable, which is simply the complement of a disjoint union of cliques, and its color classes can thus be read off directly.

In the list-coloring setting used here, there is a simpler self-reduction that does not alter the graph and only shrinks lists.  It therefore preserves the palette, the maximum list size, all degree conditions, and every other promise on the underlying graph.

\begin{lemma}[Decision-to-search for List Coloring]
\label{lem:list-decision-to-search}
Fix a palette $P$.  Suppose that list-coloring over~$P$ of an $n$-vertex instance can be decided in time $T(n)$.  Then, whenever the instance is colorable, an explicit list coloring can be found in
\[
   \Oh\bigl(T(n)\bigr)
\]
time.
\end{lemma}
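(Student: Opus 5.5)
The plan is the standard list-shrinking self-reduction. First I run the decision algorithm on the input instance $(G,L)$; if it reports ``not colorable'' we stop. Otherwise I process the vertices $v_1,\dots,v_n$ in an arbitrary order, maintaining the invariant that the current instance $(G,L')$, with $L'(u)\subseteq L(u)$ for every $u$, is colorable. At step $i$, I go through the colors $c\in L'(v_i)$ one at a time. For each $c$ I form the instance obtained by replacing $L'(v_i)$ with $\{c\}$ and all other lists unchanged, and I query the decision algorithm on it. I fix the first $c$ for which the answer is ``colorable'' and set $L'(v_i)=\{c\}$.

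Such a $c$ always exists. Any list coloring of the current colorable instance assigns $v_i$ some color $c^*\in L'(v_i)$, so the instance with $L'(v_i)=\{c^*\}$ is colorable. Hence the invariant is preserved. After $n$ steps every list is a singleton and the instance is still colorable. So the map sending each vertex to the unique element of its list is a proper list coloring of the original instance, because each new list is contained in the original one.

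For the running time, the graph is never changed, and lists only shrink and remain subsets of $P$. So every query is an $n$-vertex list-coloring instance over $P$, and every promise of the original instance (maximum list size, degree conditions, palette) is inherited, which is what legitimizes calling the assumed algorithm. The number of queries is at most $\sum_v |L(v)|\le n\cdot|P|$, and after discarding unused colors $|P|\le kn$. The total is therefore a polynomial number of calls to the decision algorithm, each taking $T(n)$ time, giving $\Oh(T(n))$.

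The only point needing care is the randomized setting. The decision algorithm has exponentially small one-sided error, and a union bound over polynomially many calls keeps the overall error exponentially small. Alternatively, we can verify the final coloring in polynomial time and repeat on failure. Since errors are one-sided (a ``colorable'' answer can in principle be wrong only if we allow two-sided error), I would simply verify the output at the end and, if it is invalid, rerun the procedure. This costs only polynomial overhead.
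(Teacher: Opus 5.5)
Your proposal is correct and is essentially the paper's proof: the same list-shrinking self-reduction, querying singleton restrictions vertex by vertex without altering the graph, so the palette, list-size and degree promises are all inherited. The only difference is cosmetic and concerns the randomized case. The paper explicitly amplifies each of the at most $\sum_v |L(v)|$ queries to error $e^{-n}/(1+\sum_v|L(v)|)$ and verifies the final coloring. This also cleanly covers the event you leave implicit, where a false negative causes every color at some vertex to be rejected.
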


\begin{proof}
Maintain a list assignment $L'$ for which the current instance is known to be colorable, initially $L'=L$.  Process the vertices one at a time. At a vertex $v$, for every $c\in L'(v)$, query the decision algorithm on the instance obtained by replacing $L'(v)$ with the singleton $\{c\}$. At least one such restriction is colorable, so an exact decision algorithm identifies a color that may be fixed at $v$.  After all vertices have been processed, every list is a singleton and these singleton colors form the required coloring. For a randomized decision algorithm, amplify each of the at most $\sum_v |L(v)|$ queries to error at most $e^{-n}/(1+\sum_v |L(v)|)$ and verify the final coloring. A union bound then gives exponentially small failure probability and polynomial overhead.
\end{proof}

Consequently, throughout the paper we may present only the decision version of each algorithm.  Whenever an explicit coloring is needed, we apply Lemma~\ref{lem:list-decision-to-search}.  Notice also that singleton restrictions never enlarge a list or expand the color palette.

\subsection{Normalization of list-coloring instances}

We use the following elementary normalization.

\begin{lemma}[Normalization]\label{lem:normalization}
Let $(G,L)$ be a list-coloring instance.
\begin{enumerate}[label=(\roman*)]
\item Deleting an edge $(u,v)$ with $L(u)\cap L(v)=\varnothing$ preserves exactly the set of list colorings.  We denote by \emph{(N)} the process of repeating this operation as long as such an edge exists.
\item Repeatedly fixing a singleton $L(v)=\{c\}$, deleting $v$, and deleting $c$ from all neighboring lists either produces an empty list, certifying infeasibility, or produces in polynomial time an equivalent residual instance with no singleton lists.  A coloring of the residual instance extends uniquely to a complete coloring of~$G$.
\end{enumerate}
Neither operation enlarges a list.
\end{lemma}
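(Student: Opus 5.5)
Both parts reduce to checking, one operation at a time, that the set of list colorings is preserved (or transformed bijectively), and then arguing termination and running time by a potential function.

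For part (i), take an edge $(u,v)$ with $L(u)\cap L(v)=\varnothing$. Any map $c$ with $c(u)\in L(u)$ and $c(v)\in L(v)$ automatically satisfies $c(u)\neq c(v)$, so the edge constraint is implied by the list constraints. Removing the edge therefore changes neither the lists nor the set of proper list colorings. The process (N) removes an edge at each step and never touches lists, so it terminates after at most $|E(G)|$ steps. Each step preserves the coloring set, hence so does the whole process, and no list is enlarged.

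For part (ii), consider a single step: a vertex $v$ with $L(v)=\{c\}$. Every list coloring of $(G,L)$ has $c(v)=c$, so every neighbor $u$ of $v$ avoids $c$. Hence the list colorings of $(G,L)$ are in bijection with the list colorings of the instance $(G-v,L')$, where $L'(u)=L(u)\setminus\{c\}$ for $u\in N(v)$ and $L'(u)=L(u)$ otherwise. The bijection is restriction, with inverse extension by $v\mapsto c$. The check is that an extension is proper: edges not incident to $v$ are fine by the residual coloring, and for $u\in N(v)$ we have $c(u)\in L'(u)\not\ni c$. If some $L'(u)$ becomes empty, the residual instance has no coloring, so neither does the original, which certifies infeasibility.

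Iterating this step, each step deletes a vertex, so there are at most $n$ steps, each polynomial. When the process stops, either a list is empty or no singleton list remains. Composing the bijections gives the unique-extension claim: each deleted vertex receives its forced color, in reverse order of deletion. Lists only lose colors, so no list is enlarged.

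The only point needing care is this composition. Uniqueness holds because each deleted vertex's color was forced at the moment it was deleted, and the extended map respects the original lists because $L'\subseteq L$. There is no real obstacle here.
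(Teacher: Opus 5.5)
Your proof is correct and follows the same route as the paper's. In (i), the edge constraint is implied by the disjoint lists; in (ii), each singleton-propagation step is an equivalence (your restriction/extension bijection), and iterating it at most $n$ times, then composing, gives termination and the unique extension. The paper also remarks that edges deleted by (N) stay disjoint as lists later shrink, but your composition of equivalences already covers this.
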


\begin{proof}
For (i), endpoints whose lists are disjoint can never be assigned the same color, so their edge imposes no constraint.  For (ii), every feasible coloring must assign $v$ the unique color $c$; properness then forbids $c$ at each neighbor.  Thus one propagation step is an equivalence, and induction proves the claim for the entire sequence.  At least one vertex is deleted at each step, so the procedure is polynomial.  Reversing these assignments reconstructs a coloring of the original instance. Because subsequent operations only shrink lists, the endpoints of every edge deleted by~(N) remain disjoint.
\end{proof}

In the warm-up we use both parts and call this \emph{full normalization}. The general proof in Section~\ref{sec:general} deliberately uses only rule~(N).

\subsection{Coloring all induced subinstances for the price of one}

The following list-coloring extension of a result by Bj\"orklund, Husfeldt, Kaski, and Koivisto~\cite{BHKK2007} will be used in the binary interpolation argument. In~\cite{BHKK2007}, the authors show that a simple modification of the $\Oh(2^n)$ time graph algorithm of~\cite{BHK2009} gives similar running time to find all $k$-colorable induced subgraphs of a given graph. While not explicitly written in their work, the same modification extends in a straightforward manner to the $\Oh(2^n)$ list-coloring algorithm from~\cite{BHK2009}. We include, briefly, this extension here for completeness.

\begin{lemma}[All induced list subinstances]\label{lem:all-subsets}
Fix a palette $P$.  Given an instance of list-coloring on a graph $G$ with lists contained in $P$, one can determine simultaneously, for every $X\subseteq V(G)$, whether $G[X]$ is list colorable, in $\Oh(2^{|V(G)|})$ time.
\end{lemma}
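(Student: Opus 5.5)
We would follow the inclusion--exclusion approach of~\cite{BHK2009}, adapted to lists, and replace the single final evaluation by a zeta (superset-sum) transform over all subsets. For each color $c\in P$, let $V_c=\{v: c\in L(v)\}$. For $Y\subseteq V(G)$, let $i_c(Y)$ be the number of independent sets of $G$ contained in $Y\cap V_c$, where the empty set is included. A list coloring of $G[X]$ is the same as a choice, for each color $c$, of an independent set $I_c\subseteq V_c$, such that the sets $I_c$ are disjoint and their union is $X$; empty sets are allowed for unused colors. By the standard inclusion--exclusion identity for exact covers, the number of tuples $(I_c)_{c\in P}$ with $I_c\subseteq X\cap V_c$ independent and $\bigcup_c I_c = X$ equals
\[
   a(X)=\sum_{Y\subseteq X}(-1)^{|X\setminus Y|}\prod_{c\in P} i_c(Y).
\]
This counts covers rather than partitions. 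However, $G[X]$ is list colorable if and only if $a(X)>0$, since any cover can be refined to a partition by shrinking the sets. This holds because subsets of independent sets are independent and remain inside $V_c$.

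The steps, in order, are as follows.

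\emph{Step 1.} Compute $i_c(Y)$ for all $Y$ and all $c$. Note that $i_c(Y)=i(Y\cap V_c)$, where $i(Z)$ is the number of independent sets of $G$ inside $Z$. We compute $i(Z)$ for all $Z\subseteq V(G)$ once, using the recursion $i(Z)=i(Z\setminus\{v\})+i(Z\setminus N[v])$ for any $v\in Z$, which costs $\Oh(2^n)$ time. Each $i_c(Y)$ is then a table lookup.

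\emph{Step 2.} Form $f(Y)=\prod_{c\in P} i_c(Y)$ for every $Y$. This costs $O(|P|)=O(kn)$ arithmetic operations per $Y$ after normalization; in general $|P|\le\sum_v|L(v)|$ is polynomial.

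\emph{Step 3.} Compute the Möbius transform $a(X)=\sum_{Y\subseteq X}(-1)^{|X\setminus Y|}f(Y)$ simultaneously for all $X$. The fast subset-sum (Yates) algorithm does this in $O(2^n n)$ arithmetic operations. Finally, output, for each $X$, whether $a(X)>0$.

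The main obstacle is bookkeeping rather than ideas. The numbers $f(Y)$ can be as large as $2^{n|P|}$, so each arithmetic operation on them costs polynomial time; this is absorbed into $\Oh$ but must be checked. If one wants to avoid large integers, one can instead work modulo a random prime of polynomial bit length. This gives one-sided error, which the paper explicitly allows, and exponentially small error after repetition. One must also check that the inclusion--exclusion formula is correctly localized to each $X$. This holds because every quantity $i_c(Y)$ for $Y\subseteq X$ depends only on $G[X]$, so $a(X)$ is exactly the cover count for the subinstance $G[X]$ with its inherited lists.
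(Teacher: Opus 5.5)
Your proof is correct, but it takes a different and slightly more elementary route than the paper.

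The paper takes, for each color $c$, the indicator $f_c$ of independent sets contained in $V_c$. It then computes the iterated \emph{disjoint} subset convolution $*_{c\in P}f_c$ using the fast subset convolution of~\cite{BHKK2007}, so that $F(X)$ is exactly the number of list colorings of $G[X]$.

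In the language of~\cite{BHKK2007}, you instead compute the iterated \emph{covering} product of the same functions. Your $i_c$ is precisely the zeta transform of $f_c$, and after taking pointwise products a single M\"obius transform gives $a(X)$ for every $X$ at once. This buys you two simplifications: you avoid the ranked transforms that underlie subset convolution, and you need only one table of independent-set counts, since $i_c(Y)=i(Y\cap V_c)$. The price is that $a(X)$ counts covers rather than colorings. You repair this correctly by noting that the allowed families are downward closed, so any cover can be trimmed to a partition.

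For the decision table the lemma asks for, the two approaches are equivalent. Both are deterministic with exact integers of polynomial bit length, and both incur only a factor polynomial in $|P|$, so neither needs a bound on the palette. The random-prime variant is unnecessary, and you should avoid it: Theorem~\ref{thm:binary-interpolation}, which uses this lemma, is stated as deterministic. Moreover, making every entry correct simultaneously would require a union bound over all $2^{|V(G)|}$ sets $X$, and hence a prime of roughly $|V(G)|$ bits.
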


\begin{proof}
For each $c\in P$, define a function on subsets of $V(G)$ by
\[
 f_c(S)=1
 \quad\Longleftrightarrow\quad
 S\text{ is independent and }c\in L(v)\text{ for every }v\in S.
\]
We take $f_c(\varnothing)=1$, so a color class may be unused.  Every explicit table~$f_c(\cdot)$ is easily filled in $\Oh(2^{|V(G)|})$ time. The iterated disjoint subset convolution
\[
   F=*_{c\in P}f_c
\]
counts, for every $X$, ordered partitions of $X$ into allowed independent color classes.  Thus $F(X)>0$ exactly when $G[X]$ is list colorable. The fast subset convolution of~\cite{BHKK2007} computes all values of the convolution in $\Oh(2^{|V(G)|})$ arithmetic operations, with a factor polynomial in $|P|$ for the iterated convolutions. The counts are at most $|P|^{|V(G)|}$, so their bit lengths are polynomial in the input size. No bound on the palette size is needed.
\end{proof}

The ordinary-coloring version of Lemma~\ref{lem:all-subsets} is stated explicitly in~\cite[Section~4.3]{BHKK2007}: they observe that computing $f^{*k}(S)$ for every $S\subseteq V(G)$ finds all $k$-colorable induced subgraphs in $\Oh(2^{|V(G)|})$ total time.  The minor extension here is to use a different function $f_c$ for each color, thereby encoding the vertex lists.

\subsection{Main tool I:  Coloring with bounded-degrees}

The first of the main tools we use from previous works is a theorem from~\cite{Zamir2021}, which shows that coloring and list-coloring can be solved in sub-$2^n$ time whenever the graph has at least (any) constant fraction of vertices whose degrees are bounded by (any) constant.

\begin{definition}
For $\alpha\in[0,1]$ and $\Delta\geq0$, a graph is $(\alpha,\Delta)$-bounded if at least $\alpha |V(G)|$ of its vertices have degree at most $\Delta$.
\end{definition}

We cite the following theorem. Note that it is phrased for list-coloring and permits an arbitrary color palette.

\begin{theorem}[$(\alpha,\Delta)$-bounded List Coloring, Theorem 1.3 of~\cite{Zamir2021}]\label{thm:low-degree}
For every fixed $k,\alpha,\Delta>0$, there is a constant $C_{k,\alpha,\Delta}<2$ such that $k$-list-coloring on an $n$-vertex $(\alpha,\Delta)$-bounded graph can be solved in
\[
   \Oh\bigl(C_{k,\alpha,\Delta}^{n}\bigr)
\]
time, regardless of the size of the color palette $P$.
\end{theorem}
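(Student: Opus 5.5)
Theorem~\ref{thm:low-degree} is cited from~\cite{Zamir2021}, so strictly it needs no new proof here. Still, I would reconstruct its argument along the following lines, since the same mechanism reappears later in the paper.

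Let $S$ be the set of at least $\alpha n$ vertices of degree at most $\Delta$. The first step is to extract an independent set $I\subseteq S$ that also has no common neighbors inside $I$, i.e.\ $N[u]\cap N[v]=\varnothing$ for distinct $u,v\in I$. Greedy selection in $G^2[S]$, where degrees are at most $\Delta+\Delta^2$, gives $|I|\ge \alpha n/(1+\Delta+\Delta^2)=:\beta n$. The vertices of $I$ then have pairwise disjoint closed neighborhoods, each of size at most $\Delta+1$.

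The second step is to exploit these vertices through the set-partition formulation behind Lemma~\ref{lem:all-subsets}. The natural approach is a modified inclusion--exclusion or subset convolution, in the spirit of~\cite{BHK2009}, that runs over subsets of $V\setminus I$ rather than $V$. For each $v\in I$, once the coloring of $N(v)$ is fixed, whether $v$ can be colored depends only on whether $L(v)\setminus c(N(v))$ is nonempty. Because $|N(v)|\le\Delta<$ or comparable to $k$, the vertex $v$ can be ``absorbed'' into its neighborhood.

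Concretely, I would enumerate, for each $v\in I$, a bounded amount of information about $N(v)$: which of its at most $\Delta$ neighbors receive colors from $L(v)$, and which ones. This is at most $(k+1)^{\Delta}$ patterns per vertex. Alternatively, one can use a randomized trick: guess a color $c_v\in L(v)$ for $v$ with probability at least $1/k$, then place $v$ into class $c_v$. Pure guessing costs $k^{|I|}$, which is too much when compared against a saving of only $2^{|I|}$ for $k\ge2$.

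The better route is therefore the one in~\cite{Zamir2021}. In the sum over $X\subseteq V$ in the $2^n$ inclusion--exclusion, the contribution of each local block $N[v]$ is computed jointly, so that only $2^{|N[v]|}-\delta$ effective patterns of $X\cap N[v]$ are needed instead of $2^{|N[v]|}$. One shows that some patterns cancel or are redundant: the parity or structure of $X\cap\{v\}$ can be summed out analytically given $X\cap N(v)$. The block then contributes $(2^{\Delta+1}-1)$ rather than $2^{\Delta+1}$, which gives a base of $2\cdot(1-2^{-(\Delta+1)})^{\beta/(\Delta+1)}<2$ overall.

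The main obstacle is this local saving. The count of colorings must remain exactly computable while the sum over $v$'s membership is eliminated, and this must work for arbitrary palettes without a dependence on $|P|$. I would first try handling $v$ via polynomial interpolation in the number of colors available to $v$. The point is to express $v$'s contribution as a function of which colors of $L(v)$ are used on $N(v)$, since at most $\Delta$ colors are used there, and then to avoid branching on $v$ at all.
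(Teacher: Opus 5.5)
The paper does not reprove this theorem, since it is cited from \cite{Zamir2021}. However, your reconstruction fails at its first step, and that step is where the real difficulty lies. You claim $G^2[S]$ has maximum degree at most $\Delta+\Delta^2$, but this is false. Two vertices $u,v\in S$ are adjacent in $G^2$ whenever they share a neighbor $w$, and $w$ need not lie in $S$, so $\deg(w)$ can be arbitrarily large. Your bound would require the neighbors of $S$ to have degree at most $\Delta$ as well.

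Here is a concrete counterexample. Let $n/2$ vertices each be adjacent to exactly the same $\Delta$ hub vertices, and let the remaining vertices induce any graph of minimum degree larger than $\Delta$. This graph is $(1/2,\Delta)$-bounded, yet any two low-degree vertices share all their neighbors, so $|I|\le 1$.

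Worse, the sets $X$ meeting $N[v]$ for every low-degree $v$ number $(1-2^{-\Delta})2^n+O(2^{n/2})$. So in this instance the domination-type saving you are aiming for is only a constant factor. The instance is easy for a different reason: enumerating the $k^{\Delta}$ colorings of the hubs isolates all the low-degree vertices.

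The missing ingredient is the subset-removal lemma of \cite{Zamir2021} (Theorem~1.8 there), which the paper refers to in Appendix~\ref{sec:constants}. There one takes only an independent set $S$ of low-degree vertices, with $|S|\ge\alpha n/(\Delta+1)$, and the lemma handles overlapping neighborhoods. It chooses a set of shared neighbors whose colors are enumerated, at a cost of a factor $k$ per vertex. Afterwards, many vertices of $S$ have pairwise disjoint (possibly empty) residual neighborhoods, each saving at least a factor $1-2^{-(\Delta+1)}$. There are enough of these to pay for the enumeration with a linear surplus.

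The constant $C_{\mathrm{rem}}=\ln k/z_\Delta$ in Appendix~\ref{sec:constants} is exactly this exchange rate. The lemma's guarantee is also why $2-C_{k,\alpha,\Delta}$ is doubly exponentially small in $\Delta$. Without such a lemma, your argument only covers graphs in which linearly many low-degree vertices already have pairwise disjoint closed neighborhoods, such as bounded-degree graphs \cite{bjorklund2010trimmed}.

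Separately, your second step never says why the patterns with $X\cap N[v]=\varnothing$ may be discarded, and ``summing out $X\cap\{v\}$ analytically'' is not a mechanism. You need some canonical restriction that makes every contributing $X$ meet every $N[v]$. One option is to require each color class to be a maximal independent set within its allowed domain.
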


We use this theorem as-stated without modifications. We remark that this Theorem was derived via the introduction of a certain combinatorial subset removal lemma, which results in the advantage~$\left(2-C_{k,\alpha,\Delta}\right)$ having a rather bad dependence on the parameters $k,\alpha,\Delta$; we go back to discuss these explicit constants in Section~\ref{sec:constants}.

Theorem~\ref{thm:low-degree} was used in~\cite{Zamir2021} to construct reductions from~$(k+1)$ and~$(k+2)$ -coloring to $k$-list-coloring. In combination with the sub-$2^n$ algorithms for $3,4$-list-coloring (and in fact, even for~$(4,2)$-CSPs) of~\cite{BeigelEppstein2005} this resulted in the first sub-$2^n$ algorithms for $5,6$-coloring.

\begin{theorem}[Theorems~1.4 and~1.5 of~\cite{Zamir2021}]
\label{thm:published-gaps}
Let~$k$ be a fixed integer. For every~$\delta>0$ there is a~$\delta'>0$ such that the following hold.
\begin{enumerate}[label=(\roman*)]
\item If $(k-1)$-list-coloring is solvable in $\Oh((2-\delta)^n)$ time, then $k$-coloring is solvable in $\Oh((2-\delta')^n)$ time; this reduction is deterministic.
\item If $(k-2)$-list-coloring is solvable in $\Oh((2-\delta)^n)$ time, then $k$-coloring is solvable with exponentially small one-sided error in $\Oh((2-\delta')^n)$.
\end{enumerate}
\end{theorem}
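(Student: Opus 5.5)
The plan is to reconstruct the two reductions of~\cite{Zamir2021} on top of Theorem~\ref{thm:low-degree}, as sketched in Section~\ref{sec:overview}. Fix small constants $\alpha,\Delta$ (depending on $k,\delta$). If $G$ is $(\alpha,\Delta)$-bounded, we call Theorem~\ref{thm:low-degree} with list size $k$ over the palette $[k]$ and get time $\Oh(C_{k,\alpha,\Delta}^n)$ with $C_{k,\alpha,\Delta}<2$. Otherwise at least $(1-\alpha)n$ vertices have degree greater than $\Delta$, and we use them. We take $\delta'$ to be the minimum of the savings obtained in the two cases.

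\textbf{Part (i).} Take $S$ to be a greedily chosen set of high-degree vertices with pairwise disjoint closed neighborhoods, extended to a maximal such set; this makes the construction deterministic. We enumerate all $k^{|S|}$ colorings of $S$. In the correct branch, every neighbor of $s\in S$ loses the color $c(s)$ and so has a list of size at most $k-1$.

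To cover the other vertices, observe that maximality means every high-degree vertex lies within distance two of $S$. Alternatively, choose $S$ with $|S|=\beta n$ so that $N(S)$ covers a set of at least $\gamma n$ high-degree vertices. The vertices whose lists shrink to size $k-1$ are handled by the $(k-1)$-list algorithm. Any remaining uncovered vertices of full list size must be few; if not, we enumerate them directly, or we balance the choice so that the $k$-list vertices form a set $U$ of size at most $\epsilon n$.

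On $U$ we could combine Lemma~\ref{lem:all-subsets} with the $(k-1)$-list solver. The simpler option is to branch over one color to exclude for each vertex of $U$, at cost $k^{|U|}$.

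Overall the running time is
\[
k^{\beta n}\cdot k^{\epsilon n}\cdot (2-\delta)^{n}.
\]
Choosing $\Delta$ large lets $\beta,\epsilon$ be tiny relative to $\delta$: a vertex of degree $\Delta$ covers $\Delta$ others, so $|S|\approx n/\Delta$ suffices.

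\textbf{Part (ii).} Here we pick a random set $S$ of size about $n\log\Delta/\Delta$ so that every high-degree vertex has many neighbors in $S$ with high probability, and we enumerate colorings of $S$. For each high-degree $v$ there are two cases.
\begin{itemize}[label=$\bullet$]
\item If the colors on $N(v)$ have at least two values, each appearing at a $\ge\eta$ fraction, then with high probability $S$ hits both, and $L(v)$ drops to $k-2$.
\item Otherwise some color $c$ covers a $1-\eta$ fraction of $N(v)$. In that case we sample a few neighbors of $v$, guess that they all share one color, and contract them. This lowers the vertex count, and the saving from contraction pays for the guess, whose success probability is about $(1-\eta)^r$ while $r-1$ vertices are saved.
\end{itemize}
Iterating gives an instance where most vertices have lists of size $k-2$. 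We then finish as in part (i). Repeating independently yields exponentially small one-sided error.

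\textbf{Main obstacle.} The hard step is the accounting in part (ii). The probability of a correct contraction guess has to be weighed against the $2^{r-1}$ factor saved, and this must be uniform over vertices that land in different cases. I would set $r$ large and $\eta\ll 1/r$, so that $(1-\eta)^{r}\cdot 2^{r-1}$ is exponentially large. I would also bound the number of contractions needed from below by a linear quantity via a Chernoff bound.
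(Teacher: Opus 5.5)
The paper imports this statement from~\cite{Zamir2021} without proof. Its own rehash of the mechanism is Lemma~\ref{lem:seed} together with the lucky-move scheme in the proof of Lemma~\ref{lem:pair-bootstrap}. Your outline follows that mechanism, but in each part the step that carries the content is missing.

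\textbf{Part (i).} A maximal family of high-degree vertices with pairwise disjoint closed neighbourhoods only puts every high-degree vertex within distance two of $S$, not adjacent to it. For example, take a disjoint union of copies of $K_{t,t}$ with $t>\Delta$. Any two vertices of a copy have intersecting closed neighbourhoods, so every maximal $S$ has one vertex per copy, and about half of all vertices keep lists of size $k$. Your fallback (``each vertex covers $\Delta$ others, so $|S|\approx n/\Delta$'') ignores overlaps. What you need is a greedy partial dominating set. If $Y$ is the set of still-undominated high-degree vertices, then $\sum_x|N(x)\cap Y|>\Delta|Y|$, so some $x$ is adjacent to more than $\Delta|Y|/n$ of them. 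After $\lceil (n/\Delta)\ln\Delta\rceil$ steps at most $n/\Delta$ remain. This gives, deterministically, $|S|\le n\ln\Delta/\Delta+1$ and $|U|\le(\alpha+1/\Delta)n$. Your bound $k^{|S|+|U|}(2-\delta)^n$ is then sub-$2^n$ for large $\Delta$ and small $\alpha$, and only after fixing these do you invoke Theorem~\ref{thm:low-degree} with these $\alpha,\Delta$.

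\textbf{Part (ii).} As described, the iteration cannot be analysed. Contractions shorten no list, so they never produce ``an instance where most vertices have lists of size $k-2$''. Moreover, whether a vertex has a dominant neighbour colour is defined by the unknown witness, so the algorithm cannot tell when to stop contracting and start seeding.

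The missing piece is an explicit schedule with a product bound. Each trial alternates one call to the easy solver with one random contraction. The easy solver is Theorem~\ref{thm:low-degree}, or a seed followed by the $(k-2)$-list algorithm, which works when at most $\beta_0 n$ vertices have a dominant colour. After $m=\lfloor dn/(r-1)\rfloor$ completed contractions, the trial calls the exact $\Oh(2^{n'})$ algorithm of~\cite{BHK2009} on the contracted graph. This terminal call is where your factor $2^{r-1}$ per move is actually collected.

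When the easy condition fails, a uniformly random center has a dominant colour only with probability about $\beta_0$. So a move is guaranteed to preserve the witness only with probability $p=\Omega(\beta_0)$, not $(1-\eta)^r$. Thus $r$ must satisfy $\log_2(1/p)<r-1$ and be fixed after $\beta_0$ (hence after $\delta$), and $\Delta$ must be fixed after $r$. Backward induction over the moves (no Chernoff bound is needed) gives success probability at least $p^m/2$ per trial. The total cost is $\Oh\bigl(p^{-m}(C^n+2^{n-m(r-1)})\bigr)$.

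The easy-solver base $C\ge C_{k,\alpha,\Delta}$ depends on $\Delta$ and may be arbitrarily close to $2$. Therefore the fraction $d$ must be chosen after $C$. Contracting ``until easy'' while paying $C^n$ in each of $p^{-\Theta(n/r)}$ trials need not beat $2^n$.
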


\subsection{Main tool II: Two-block color restrictions and Extensions-Sum}

The second external tool comes from the partition-container framework of~\cite{Zamir2022}.  That work introduced algorithmic applications of the hypergraph container method, as well as a problem called Extensions-Sum in order to turn structural information supplied by graph containers into savings in inclusion-exclusion algorithms. A container may restrict an individual color class (corresponding to an independent set) to a proper subset of the vertices, but the usual inclusion-exclusion formula still contains $2^n$ terms. Extensions-Sum was introduced to exploit the fact that the contribution associated with a color depends only on its allowed vertex set.

More precisely, suppose that every color $c$ in a palette $P$ is assigned an allowed domain $V_c\subseteq V(G)$, and we ask for a proper coloring $\varphi$ satisfying
\[
   \varphi(v)=c \quad\Longrightarrow\quad v\in V_c.
\]
This is exactly a list-coloring (over palette $P$) instance under the additional restrictions
\[
   L(v)\subseteq \{c\in P:v\in V_c\}.
\]
Lemma~3.14 of~\cite{Zamir2022} expresses the relevant inclusion-exclusion computation as an Extensions-Sum instance, while Lemma~3.11 constructs its input tables.  Observation~3.23 shows that colors whose domains have small union may be merged into a single Extensions-Sum function, and Lemma~3.16 evaluates the resulting two-function instance in time equal, up to polynomial factors, to the sum of the two table sizes.  Together, these results give the following black-box statement.

\begin{theorem}[Two-block restricted coloring]\label{lem:two-block-restricted}
Let $P=Q\mathbin{\dot\cup}R$ be an arbitrary palette, and let $V_c\subseteq V(G)$ be the allowed domain of each color $c\in P$.  Define
\[
   D_Q=\bigcup_{c\in Q}V_c,
   \qquad
   D_R=\bigcup_{c\in R}V_c.
\]
Whether $G$ admits a proper coloring in which every color $c$ is used only on $V_c$ can be decided in
\[
   \Oh\bigl(2^{|D_Q|}+2^{|D_R|}\bigr)
\]
time and exponential space.
\end{theorem}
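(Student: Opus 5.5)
We plan to prove the two-block statement through inclusion--exclusion, splitting it into an independent computation for each palette block. A proper restricted coloring is the same as a cover of $V=V(G)$ by sets $I_c$, $c\in P$, where each $I_c$ is independent and $I_c\subseteq V_c$; overlaps can be resolved by keeping one color per vertex. Let $a_c(Y)$ be the number of independent subsets of $V_c\cap Y$. Then inclusion--exclusion says that a coloring exists if and only if
\[
   \sum_{Y\subseteq V}(-1)^{|V\setminus Y|}\prod_{c\in P}a_c(Y)\neq 0 .
\]
Since $a_c(Y)=a_c(Y\cap V_c)$, the product splits into a $Q$-factor $A_Q(Y)=\prod_{c\in Q}a_c(Y\cap D_Q)$, which depends only on $Y\cap D_Q$, and an $R$-factor $A_R(Y)$, which depends only on $Y\cap D_R$. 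This splitting is the content of Observation~3.23 of~\cite{Zamir2022}. Vertices outside $D_Q\cup D_R$ cannot be colored, so if any exist we answer no; otherwise $D_Q\cup D_R=V$.

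Next we build the tables. For each $Z\subseteq D_Q$, the values $a_c(Z)$ can be tabulated for all $Z$ simultaneously by a zeta transform over subsets of $D_Q$. This takes $\Oh(2^{|D_Q|})$ time per color, giving a polynomial factor in $|P|\le kn$ overall; this is Lemma~3.11 of~\cite{Zamir2022}. We then multiply the factors to obtain the table $A_Q$ on $2^{D_Q}$, and build $A_R$ on $2^{D_R}$ in the same way.

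The main obstacle is evaluating
\[
   S=\sum_{Y\subseteq V}(-1)^{|V\setminus Y|}A_Q(Y\cap D_Q)\,A_R(Y\cap D_R)
\]
without enumerating all $2^n$ sets $Y$. Let $I=D_Q\cap D_R$, and write $Y\cap D_Q=Z_Q$, $Y\cap D_R=Z_R$; these two sets agree on $I$. We group the sum by the common intersection $J=Y\cap I$:
\[
   S=\sum_{J\subseteq I}\Bigl(\sum_{Z_Q\cap I=J}\pm A_Q(Z_Q)\Bigr)\Bigl(\sum_{Z_R\cap I=J}\pm A_R(Z_R)\Bigr),
\]
where each sign is determined by the size of the corresponding exclusive part. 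The inner sum $g_Q(J)$ marginalizes away the coordinates in $D_Q\setminus I$. We compute $g_Q(J)$ for all $J$ together in $\Oh(2^{|D_Q|})$ time: either we process one coordinate of $D_Q\setminus I$ at a time, each step merging two sub-tables with signs, or we simply loop over $Z_Q$ and add $\pm A_Q(Z_Q)$ into the entry $g_Q(Z_Q\cap I)$. We compute $g_R$ on $2^{I}$ in the same way in $\Oh(2^{|D_R|})$ time. The final pointwise product and sum over $J$ costs $2^{|I|}\le 2^{|D_Q|}$. This is exactly the two-function Extensions-Sum evaluation of Lemma~3.16 of~\cite{Zamir2022}, and its cost is the sum of the two table sizes.

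The integers involved have polynomially many bits, since they are at most $2^{n|P|}$. So the total running time is $\Oh(2^{|D_Q|}+2^{|D_R|})$ with exponential space, as required.
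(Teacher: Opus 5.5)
Your argument is correct and follows essentially the paper's route: the paper proves this statement only by citing Lemmas~3.14 and~3.11, Observation~3.23 and Lemma~3.16 of~\cite{Zamir2022}, and you have unpacked exactly those steps (inclusion--exclusion, zeta-transform tables, merging each block's colors into one factor, and the two-function evaluation by marginalizing the exclusive coordinates of $D_Q$ and $D_R$). One sign fix is needed: if the signs inside $g_Q$ and $g_R$ come only from the exclusive parts $D_Q\setminus I$ and $D_R\setminus I$, then the shared coordinates' contribution to $(-1)^{|V\setminus Y|}$ is lost, so the final sum must be $\sum_{J\subseteq I}(-1)^{|I\setminus J|}g_Q(J)\,g_R(J)$, or that factor must be absorbed into one of the two sides.
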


In~\cite{Zamir2022}, partition containers were constructed precisely to prove that the domains of the color classes must admit such a two-block grouping with non-trivial size bounds, leading to sub-$2^n$ coloring algorithms for regular and almost-regular graphs; see Theorems~3.24 and~3.25 therein.

For later use, we rephrase the above statement in the language of forbidden vertex sets.

\begin{corollary}[Two-block list-coloring]
\label{cor:two-block-list}
Let $P=Q\mathbin{\dot\cup}R$, and consider a list-coloring instance over a palette~$P$ on an $n$-vertex graph. Suppose that $A,B\subseteq V(G)$ satisfy
\[
   L(v)\subseteq Q \quad\text{for every }v\in A,
   \qquad
   L(v)\subseteq R \quad\text{for every }v\in B.
\]
Then list colorability can be decided in
\[
   \Oh\bigl(2^{n-|A|}+2^{n-|B|}\bigr)
\]
time and exponential space.  In particular, if $|A|,|B|\geq\eta n$, the running time is $\Oh(2^{(1-\eta)n})$.
\end{corollary}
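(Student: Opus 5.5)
We derive Corollary~\ref{cor:two-block-list} directly from Theorem~\ref{lem:two-block-restricted}. The only work is to choose color domains $V_c$ that encode the lists and then check that the unions $D_Q$ and $D_R$ are small enough. For every color $c\in P$ we define
\[
   V_c=\{v\in V(G): c\in L(v)\}.
\]
A proper coloring $\varphi$ in which each color $c$ is used only on $V_c$ satisfies $\varphi(v)\in L(v)$ for every $v$. Conversely, every list coloring uses each color only inside its domain. So the two problems have exactly the same solutions. (The extra constraint $L(v)\subseteq\{c: v\in V_c\}$ mentioned before the theorem holds with equality here.)

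Next we bound the unions. Take $v\in B$. Then $L(v)\subseteq R$, so $v$ belongs to no $V_c$ with $c\in Q$. Hence $D_Q=\bigcup_{c\in Q}V_c\subseteq V(G)\setminus B$, and $|D_Q|\le n-|B|$. In the same way, a vertex $v\in A$ has $L(v)\subseteq Q$, so it lies in no $V_c$ with $c\in R$. Hence $D_R\subseteq V(G)\setminus A$ and $|D_R|\le n-|A|$.

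Applying Theorem~\ref{lem:two-block-restricted} then gives running time $\Oh(2^{|D_Q|}+2^{|D_R|})\le\Oh(2^{n-|A|}+2^{n-|B|})$ with exponential space. Note that the pairing is crossed: $B$ controls the $Q$-side and $A$ controls the $R$-side. This is harmless, since the bound is symmetric in the two terms. Finally, if $|A|,|B|\ge\eta n$, each term is at most $2^{(1-\eta)n}$, which gives the stated $\Oh(2^{(1-\eta)n})$ bound.

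There is no substantive obstacle. The only point needing care is that building the domains $V_c$ takes time polynomial in $n$ and $|P|$, and $|P|\le kn$ after discarding unused colors, so this cost is absorbed in $\Oh$. The palette being arbitrary causes no trouble, because the theorem's bound depends only on $|D_Q|$ and $|D_R|$.
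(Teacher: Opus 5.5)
Your proposal is correct and follows the paper's own proof: you take $V_c=\{v:c\in L(v)\}$, observe $D_Q\subseteq V(G)\setminus B$ and $D_R\subseteq V(G)\setminus A$, and invoke Theorem~\ref{lem:two-block-restricted}. The paper additionally rejects at once if some list is empty; your equivalence already covers this case, since such a vertex lies in no $V_c$ and so no admissible coloring exists.
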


\begin{proof}
Set $V_c=\{v:c\in L(v)\}$.  No color in $R$ is allowed on $A$, and no color in $Q$ is allowed on $B$.  Consequently,
\[
   D_R\subseteq V(G)\setminus A,
   \qquad
   D_Q\subseteq V(G)\setminus B.
\]
If a list is empty, reject immediately. Otherwise the result follows immediately from Theorem~\ref{lem:two-block-restricted}.
\end{proof}

\section{Warm-up: Extending the reduction from two to three steps}\label{sec:warmup}
As a first step, we rephrase the 2-step reductions of~\cite{Zamir2021} (from $(k+2)$ or $(k+1)$ -coloring to $k$-list-coloring) in a cleaner way that would later be useful for us in the generalization. We then extend it one step further and get a reduction from $(k+3)$-coloring to $k$-list-coloring; this already gives the first sub-$2^n$ algorithm for $7$-coloring.

Zamir's reductions follow from the~$(\alpha,\Delta)$-bounded coloring result cited as Theorem~\ref{thm:low-degree}: Given an instance of~$(k+1)$-coloring, we consider two options. If the graph is already~$(\alpha,\Delta)$-bounded for any chosen constants, then we have a sub-$2^n$ algorithm. Otherwise, nearly all vertices (a~$(1-\alpha)$ fraction, and we can take a very small constant~$\alpha>0$) have degrees that are large (larger than a constant~$\Delta$ of our choice). In that second case, we can sample a very small subset of vertices and enumerate over their correct colors. As nearly all vertices have many neighbors, this small subset is likely to hit a neighbor of the vast majority of them. In particular, most vertices lose one color option (as we already colored one of their neighbors) which puts us, modulo technical details, in a $k$-list-coloring instance.

Extending this idea to~$(k+2)$-coloring already faces an obstruction: Even if we sample a large enough subset to hit the neighborhood of each high-degree vertex more than once, it could be that in the correct coloring all of these neighbors would be assigned the same color. Thus, the resulting color lists may remain of size~$(k+1)$ and not get smaller as we sample more neighbors. When that happens for a vertex, though, it means that in the correct coloring most of its neighbors are supposed to be colored by the same color. This is useful on its own: we are able to sample many neighbors of a high-degree vertex and ``guess'' that they are all supposed to have the same color and thus can be contracted into a single vertex. As this reduces the number of vertices in the graph, a careful analysis results in the desired reduction.

Attempting to push this one step further to~$(k+3)$-coloring results in a scarier-looking obstruction, which was not yet resolved in~\cite{Zamir2021}: If high-degree vertices have their neighbors partitioned, roughly equally, between \emph{two} color classes (rather than one), then a small sampled subset of vertices will only reduce the size of lists by~$2$; at the same time, unlike the one dominant color case, knowing (or guessing) that many vertices are of one of two possible colors seems insufficient to assist a sub-$2^n$ algorithm. This is because finding the correct color among these two possible colors is still costing a factor of~$2$ for each such vertex.

Sweeping all technical details under the rug, the above sketch leaves us with a clean problem: Given a list-coloring instance in which a reasonable fraction of the vertices have only two colors in their list, can we determine colorability in sub-$2^n$ time? This sounds rather promising, as if all lists were of size two, then this would simply be an instance of $2$-SAT which can be solved in polynomial-time. It is thus reasonable to hope an interpolation between the size-two and general-size list algorithms can result in a faster algorithm in these settings.

Indeed, in Section~\ref{sec:binary} we prove that given a list-coloring instance over a palette~$P$ of fixed size~$K$, such that at least~$b$ out of the~$n$ vertices in the graph have lists of size at most two, we can determine colorability in \[ \Oh\left(2^{n-b} \cdot q^{b}\right) \] time, for $q=2^{1-1/{K \choose 2}}<2$.

At first glance, one could hope achieving the same result with~$q=1$ is possible, as that would match the natural interpolation with the polynomial-time algorithm for~$2$-SAT when~$b=n$. In Section~\ref{sec:eth} we prove that this is impossible: assuming the Exponential Time Hypothesis (ETH), any such algorithm must have~$q>1$. On the other hand, while the~$q$ we achieve depends on the palette size~$K$, we do not rule out an algorithm in which~$1<q<2$ is independent of~$K$. This gap remains open, but the general reduction in Section~\ref{sec:general} avoids it when all lists have bounded size (regardless of the palette size).

\subsection{The seed-shortening lemma}\label{sec:warmup-seed}

We begin by rehashing (and slightly generalizing) a central lemma in the reduction of~\cite{Zamir2021}, which will be useful for both the warm-up and the general algorithm.

Let $P$ be an arbitrary palette and fix $k\geq2$. Suppose $(k-1)$-list-coloring has an algorithm of base $1\leq a<2$, uniformly over arbitrary palettes, and consider a $k$-\emph{list}-coloring instance. We say a vertex is \emph{active} if its list has size exactly $k$. Relative to a fixed witness coloring $c$, call an active vertex \emph{good} if some color in its list occurs on more than $\Delta_1$ of its neighbors in the coloring~$c$, and \emph{bad} otherwise. Note that we cannot algorithmically classify vertices to good and bad as we do not know the coloring~$c$.

\begin{lemma}[Seed shortening]\label{lem:seed}
There are constants $\beta_0>0$, $\Delta_1$, and $C_S<2$ depending only on $k,a$ such that the following holds. There is an algorithm with running time $\Oh(C_S^n)$ with the following property: relative to every fixed witness coloring $c$, if at most $\beta_0n$ active vertices are bad, it returns a coloring with probability at least $1-\exp(-\Theta(n))$.
\end{lemma}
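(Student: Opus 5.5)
The plan is to split on whether the graph is $(\alpha,\Delta)$-bounded, as in the reductions of~\cite{Zamir2021}, and otherwise to sample a small seed that removes a color from almost every good active vertex.

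\textbf{Bounded case.} First fix a small $\alpha>0$ and a large threshold $\Delta$, both depending on $k,a$. If at least $\alpha n$ vertices have degree at most $\Delta$, I invoke Theorem~\ref{thm:low-degree} directly. This runs in $\Oh(C_{k,\alpha,\Delta}^n)$ time for any palette.

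\textbf{Seed step.} Otherwise all but $\alpha n$ vertices have degree above $\Delta$. I sample a seed $S$, including each vertex independently with probability $p$, and restart if $|S|>2pn$. I then enumerate all list colorings of $G[S]$; there are at most $k^{2pn}$ of them. For each one, I delete the seed and remove each seed color from the lists of its neighbors. Correctness of the enumeration follows because the restriction of $c$ to $S$ is one of the enumerated colorings.

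\textbf{Effect on good vertices.} Fix the witness $c$ and a good active vertex $v$. Then some $x\in L(v)$ satisfies $|N(v)\cap c^{-1}(x)|>\Delta_1$. The probability that $S$ misses all these neighbors is at most $(1-p)^{\Delta_1}\le e^{-p\Delta_1}$. Choose $\Delta_1$ large enough that $e^{-p\Delta_1}\le \gamma$ for a tiny $\gamma$. Then the expected number of good active vertices outside $S$ that keep a full list is at most $\gamma n$. By Markov's inequality, with constant probability at most $2\gamma n$ such vertices remain.

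\textbf{Residual instance.} In the residual instance (under the correct seed coloring), every remaining vertex with a list of size $k$ is one of three kinds:
\begin{itemize}
\item a bad vertex, of which there are at most $\beta_0 n$;
\item an unlucky good vertex, of which there are at most $2\gamma n$;
\item a non-active vertex, which already has a shorter list, so it contributes nothing.
\end{itemize}
Note that non-active vertices already have lists of size at most $k-1$, and the seed only shrinks lists. So at most $m=(\beta_0+2\gamma)n$ vertices have size-$k$ lists.

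\textbf{Finishing.} I branch over the at most $k^{m}$ ways to fix the color of each such vertex (singleton lists; by Lemma~\ref{lem:normalization}(ii) these can be propagated, or simply kept as size-one lists). The result is a $(k-1)$-list-coloring instance on at most $n-|S|$ vertices, which I solve in $\Oh(a^{n})$ time. I do not need a degree condition here, since the assumed algorithm works on all instances and over arbitrary palettes.

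\textbf{Running time.} The total is
\[
\Oh\bigl(k^{2pn}\cdot k^{(\beta_0+2\gamma)n}\cdot a^{n}\bigr).
\]
I choose $p$, $\gamma$, $\beta_0$ small enough that $k^{2p+\beta_0+2\gamma}a<2$. This fixes $\Delta_1\approx p^{-1}\ln(1/\gamma)$, and $\Delta$ should be taken at least $\Delta_1$. I take $C_S$ to be the maximum of this base and $C_{k,\alpha,\Delta}$.

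\textbf{Amplification.} A single trial succeeds with constant probability: $|S|\le 2pn$ holds with high probability, and the Markov event holds with constant probability. Repeating $\Theta(n)$ independent trials gives failure probability $\exp(-\Theta(n))$. Any coloring found is verified before it is output, and Lemma~\ref{lem:list-decision-to-search} lets me return an explicit coloring.

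\textbf{Main obstacle.} The main difficulty is only the bookkeeping of constants. The good vertices are defined relative to the fixed $c$, so all probabilities must be computed with respect to that single coloring.

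\textbf{Role of $\alpha$.} I first expected a separate loss from low-degree vertices, but it disappears. A good vertex has more than $\Delta_1$ neighbors, so it has high degree automatically, and the argument never uses that the graph is not $(\alpha,\Delta)$-bounded. The bounded case is therefore optional, and I would keep it only for consistency with the structure of~\cite{Zamir2021}.
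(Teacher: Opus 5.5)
Your plan follows the paper's proof closely: a random seed, enumeration of its colorings, enumeration of colors on the active vertices that were not shortened, and a call to the $(k-1)$-list algorithm on the residual. You are also right that the $(\alpha,\Delta)$-bounded branch is superfluous; the paper's proof of this lemma never invokes Theorem~\ref{thm:low-degree}.

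There is, however, a genuine gap in the running-time analysis. You prove the bound of at most $m=(\beta_0+2\gamma)n$ remaining size-$k$ lists only for the branch where $S$ receives the witness colors $c|_S$, and only on the Markov event. The algorithm does not know which seed coloring is the witness one. It branches over $k^{(\#\text{size-}k\text{ lists})}$ assignments on \emph{every} seed coloring $\varphi$.

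For a wrong $\varphi$, the number of unshortened active vertices can be far larger than $m$. A seed neighbor $u$ of an active vertex $v$ may be assigned a color in $L(u)\setminus L(v)$, which removes nothing from $L(v)$. For example, with an unbounded palette, take $L(v)=\{1,\dots,k\}$ and $L(u)=\{1,k+1,\dots,2k-1\}$: the witness may color $u$ with $1$ while $\varphi$ uses $k+1$. A single branch leaving close to $n-|S|$ size-$k$ lists already costs about $k^{n-|S|}$. This exceeds $2^n$ for $k\geq3$ and small $p$. So your displayed bound $k^{2pn}k^{(\beta_0+2\gamma)n}a^n$ does not bound the algorithm as you described it.

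The repair is to make the cap part of the algorithm. For each seed coloring $\varphi$, compute the set $W_\varphi$ of active vertices outside $S$ that have no seed neighbor colored from their own list. Abandon the branch if $|W_\varphi|>(\beta_0+2\gamma)n$.

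Correctness is unaffected. On the event that $|S|\le 2pn$ and at most $2\gamma n$ good vertices are missed, the witness branch satisfies the cap. Every retained branch then costs at most $k^{|W_\varphi|}a^{n-|S|-|W_\varphi|}\le k^{(\beta_0+2\gamma)n}a^n$, so your calculation goes through.

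This is exactly the abort on $|W_\varphi|>(\beta_0+4/\Delta_1)n$ in the paper's proof. With it, your independent choice of $p$ and $\gamma$, followed by $\Delta_1\approx p^{-1}\ln(1/\gamma)$, works as well as the paper's coupled choice $\theta=\ln\Delta_1/\Delta_1$.
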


\begin{proof}
Choose every vertex independently with probability
\[
   \theta=\frac{\ln\Delta_1}{\Delta_1}
\]
to form a vertex subset~$Z\subseteq V(G)$ which we call a \emph{seed}. Abort if $|Z|>4\theta n$; otherwise enumerate all proper list-respecting colorings of $G[Z]$.  For a seed coloring $\varphi$, let $W_\varphi$ be the active vertices outside $Z$ that are not adjacent to a seed vertex colored by a color belonging to their own lists.  Abort (the enumeration on the specific~$\varphi$) if
\[
   |W_\varphi|>(\beta_0+4/\Delta_1)n.
\]
Otherwise enumerate all proper list-respecting colorings of $W_\varphi$ that are compatible with $\varphi$. Delete the explicitly colored vertices and their used colors from neighboring lists, aborting if some residual list becomes empty.  Every remaining active vertex now loses a color, so the residual instance has maximum list size $k-1$ and can be passed to the assumed algorithm for~$(k-1)$-list-coloring.

On the branch agreeing with $c$, a good active vertex is missed with probability at most
\[
   (1-\theta)^{\Delta_1}\leq \Delta_1^{-1}.
\]
The expected number of missed good active vertices is at most $n/\Delta_1$, while $\mathbb E\left[|Z|\right]=\theta n$.  Markov's inequality bounds the probability of each of the events
\[
 |Z|>4\theta n,
 \qquad
 \#\{\text{missed good active vertices}\}>4n/\Delta_1
\]
by $1/4$.  Hence, with probability at least $1/2$, neither event occurs. If there are at most $\beta_0n$ bad active vertices, then on that event the witness branch satisfies
\[
  |W_{c|_Z}|\leq(\beta_0+4/\Delta_1)n,
\]
so it is not aborted.  The enumerated coloring of $Z\cup W_{c|_Z}$ is exactly the restriction of $c$, and consequently the residual $(k-1)$-list-coloring instance is colorable.

Put
\[
   \sigma=
   4\frac{\ln\Delta_1}{\Delta_1}
   +\beta_0+\frac4{\Delta_1}.
\]
For a fixed non-aborted seed $Z$ and a seed coloring $\varphi$, the calls generated by all colorings of $W_\varphi$ cost at most
\[
  k^{|W_\varphi|}a^{n-|Z|-|W_\varphi|}.
\]
There are at most $k^{|Z|}$ seed colorings.  Since $k/a\geq1$, summing over all retained branches gives
\begin{align*}
 \sum_{\varphi}
   k^{|W_\varphi|}a^{n-|Z|-|W_\varphi|}
 &\leq k^{|Z|}a^{n-|Z|}
        (k/a)^{\max_\varphi|W_\varphi|}\\
 &\leq a^n(k/a)^{\sigma n}.
\end{align*}
Generating and checking the partial colorings is bounded by the same expression up to polynomial factors.  Thus the work over all branches is $\Oh(a^n(k/a)^{\sigma n})$. We may thus choose $\Delta_1$ large enough and then $\beta_0$ small enough so that $C_S:=a(k/a)^\sigma<2$ (this is true as we can make~$\sigma$ as small a constant as we want, and as~$a<2$).

The residual algorithm can be amplified to find a coloring with probability at least $1/2$, so one seed trial succeeds with probability at least $1/4$. Repetition amplifies this to the claimed success probability. Note that the reduction does not enlarge the palette. 
\end{proof}

Once the parameters $\Delta,\alpha$ are fixed, define the \emph{easy solver} $E$ as follows.  If at least $\alpha n$ vertices have degree at most $\Delta$, use Theorem~\ref{thm:low-degree}; otherwise use the seed solver. It has base $C<2$ and, for a suitable success probability (which can be amplified as needed by repetition), succeeds relative to a witness coloring whenever either the low-degree condition holds or at most $\beta_0n$ active vertices are bad.

\subsection{Reducing to an instance with many size-two lists}
\label{sec:pair-complement}
We now extend the reductions of~\cite{Zamir2021} by one more step, constructing a reduction from~$(k+3)$-coloring to~$k$-list-coloring, via the aforementioned size-two lists speedup which we then analyze in the next section. This section also rephrases the reductions of~\cite{Zamir2021} in a significantly more useful manner for our later generalization; instead of analyzing all three steps of the reduction simultaneously, we assume as a black-box the existing two-step reduction and analyze only the one additional step. That is, fix a palette~$P=[K]$, the previous result shows that a sub-$2^n$ algorithm for~$(K-2)$-list-coloring over~$[K]$ implies a sub-$2^n$ algorithm for~$K$-coloring (equivalently,~$K$-list-coloring over~$[K]$). Thus, it suffices to prove that a sub-$2^n$ algorithm for $(K-3)$-list-coloring over~$[K]$ implies a sub-$2^n$ algorithm for~$(K-2)$-list-coloring over~$[K]$.

Start with an instance of $(K-2)$-list-coloring over~$[K]$.  For an active vertex \(v\), that is a vertex with~$|L(v)|=(K-2)$, the complement of the list is a computable set of two colors
\[
   Q_v:=P\setminus L(v).
\]
If \(v\) is bad relative to a witness coloring, then only boundedly many neighbors of \(v\) receive colors from \(L(v)\). Consequently, when \(v\) also has sufficiently large degree, a constant sample from its neighborhood is likely to consist entirely of vertices whose witness colors lie in the pair \(Q_v\).  Restricting the sampled vertices to \(Q_v\) then preserves the witness with good probability and shrinks several lists to size at most two.

The point of taking a sample of size \(r\), rather than a single neighbor, is quantitative: if we guessed that the vertex~$v$ is bad successfully then we create \(r\) short lists. By choosing \(r\) large enough, the saving supplied by the binary-list interpolation theorem outweighs the cost of succeeding in this guess.

\begin{lemma}[Pair-complement bootstrap]\label{lem:pair-bootstrap}
Fix a palette \(P\) of size \(K\geq4\).  If \((K-3)\)-list-coloring over~$P$ has an algorithm running in \(\Oh(a^n)\) time for some \(a<2\), then \((K-2)\)-list-coloring over~$P$ has an algorithm running in \(\Oh((2-\varepsilon)^n)\) time for some \(\varepsilon>0\) that depends only on~$K,a$.
\end{lemma}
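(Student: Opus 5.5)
We combine the easy solver of Section~\ref{sec:warmup-seed} with a star-sampling guess, and finish with the binary-list interpolation algorithm of Section~\ref{sec:binary}. That algorithm decides an instance over $P$ with at least $b$ vertices of list size at most two in $\Oh(2^{n-b}q^b)$ time, where $q=2^{1-1/\binom K2}<2$.

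First fix $\Delta_1,\beta_0,C_S$ as in Lemma~\ref{lem:seed}, applied with $k=K-2$ and base $a$. Then fix constants $\alpha,\Delta\geq\Delta_1$ and run the easy solver $E$. If $G$ is $(\alpha,\Delta)$-bounded, or if the witness $c$ has at most $\beta_0 n$ bad active vertices, then $E$ succeeds in time $\Oh(C^n)$ with $C<2$. Otherwise there are more than $\beta_0 n$ bad active vertices, and at least $(\beta_0-\alpha)n\geq \beta_0 n/2$ of them have degree above $\Delta$; we take $\alpha\leq\beta_0/2$. Call these vertices \emph{useful}.

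For a useful vertex $v$, at most $K\Delta_1$ of its neighbors receive colors from $L(v)$. Every other neighbor is colored from $Q_v=P\setminus L(v)$, a set of two colors. So if we choose $r$ uniformly random neighbors of $v$, all of them are colored from $Q_v$ with probability at least $(1-K\Delta_1 r/\Delta)\geq 1/2$, once we pick $\Delta\gg K\Delta_1 r$.

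The second branch proceeds as follows. We pick a set $S$ of $m=\gamma n$ random active high-degree vertices, with $\gamma$ a small constant. For each $v\in S$ we pick $r$ random neighbors and greedily keep only stars whose leaf sets are disjoint from each other and from the centers. Because $\gamma r\ll\beta_0$, most stars survive; we will need to control collisions here. A constant fraction of the centers are useful, so in expectation at least about $\gamma\beta_0 n/4$ stars are \emph{successful}, meaning the center is useful and all leaves are colored in $Q_v$. By Markov's inequality, with constant probability at least $\gamma\beta_0 n/8$ stars are successful.

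We cannot tell which stars are successful, so we enumerate subsets $T\subseteq S$ of size exactly $t=\gamma\beta_0 n/8$. For each leaf $u$ of a star in $T$ with center $v$, we intersect $L(u)$ with $Q_v$, then reject or fully normalize. Each such branch has at least $b=rt$ vertices with lists of size at most two. We solve it with the interpolation algorithm in time $\Oh(2^{n-rt}q^{rt})$. The total cost is
\[
\binom{\gamma n}{t}\,2^{n}(q/2)^{rt}\leq 2^n\Bigl((8e/\beta_0)\,(q/2)^{r}\Bigr)^{t},
\]
and this is $(2-\varepsilon)^n$ once $r$ is large enough that $(q/2)^r<\beta_0/(16e)$. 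Note that the order of choices matters here. We choose $r$ after $\beta_0$, which is allowed because $\beta_0$ depends only on $K,a$ and not on $\Delta$. Then we choose $\Delta$ large in terms of $r$, then $\alpha$, then $\gamma$. Repeating the whole procedure polynomially many times amplifies the constant success probability.

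The main obstacle is the bookkeeping needed to make the stars vertex-disjoint while still guaranteeing linearly many successful stars with constant probability. Leaves of different stars can coincide, and a leaf might itself be a center. I would handle this by choosing $\gamma r^2\ll\beta_0$, so that the expected number of collisions, which is at most about $(\gamma n)^2 r^2/(\text{typical neighborhood overlap})$, is too small to matter. Failing that, I would discard every star that intersects another one; since $\gamma r$ is small, a Markov bound on the number of leaves hit twice shows this is affordable. The success events are also correlated across stars, so the argument should rest on linearity of expectation plus Markov's inequality rather than on independence.
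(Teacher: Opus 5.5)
\textbf{There is a genuine gap in the step you defer as bookkeeping.} That step builds linearly many pairwise vertex-disjoint stars, linearly many of them successful, by sampling all $\gamma n$ stars at once in $G$ and then discarding collisions. How often leaves collide depends on how many centers share neighbors, not on $\gamma r$. The condition $\deg(v)>\Delta$ gives no control over this.

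Here is an instance where it fails. Fix $Q=\{q_1,q_2\}\subseteq P$ and $x\in P\setminus Q$. Let $G_0$ be any colorable instance of maximum degree less than $\min\{r,\Delta_1\}$ whose lists all equal $P\setminus Q$. Add an independent set $U$ of $\Delta+1$ vertices, each joined to every vertex of $G_0$ and given the list $\{q_1,x\}$, so rule (N) keeps these edges. The witness colors $U$ by $q_1$.
\begin{itemize}
\item Every vertex of $G_0$ is active, has degree above $\Delta$, and is bad. So for large $n$ neither easy condition holds.
\item Each star centered in $G_0$ has a leaf in $U$, because its center has fewer than $r$ neighbors in $G_0$. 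Each star centered in $U$ contains its center. Hence any vertex-disjoint family has at most $\Delta+1$ stars.
\item Dropping disjointness does not help. A successful star centered in $G_0$ has all its leaves in $U$, since its $G_0$-neighbors carry colors of $L(v)$. So all usefully restricted leaves lie in a set of constant size, and the interpolation call saves nothing.
\end{itemize}
On this family your analysis gives no guarantee, and the star branch yields no saving.

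\textbf{The missing idea is to re-examine the easy/hard dichotomy after every committed star.} The paper's proof does this by performing one lucky move at a time on the current, fully normalized instance and rerunning $E$ before each move. Rule (N) then does two jobs.
\begin{itemize}
\item An edge $uv$ survives only if $L(u)\cap L(v)\neq\varnothing$, so replacing $L(u)$ by $L(u)\cap Q_v$ strictly shortens it. Hence every completed move lowers the potential $n-b/\binom K2$ by at least $r/\binom K2$, even when a leaf is hit again (it becomes a singleton and is deleted). No disjointness is needed at all.
\item Restricted leaves lose their edges to vertices with list $P\setminus Q_v$. In the example, already after one witness-preserving move the degrees in $G_0$ drop to at most $\Delta$, and the low-degree branch of $E$ takes over.
\end{itemize}
The guessing is then paid for by random repetition: success probability $p_0^m$ against the saving $2^{-mr/\binom K2}$.

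Alternatively, as in Section~\ref{sec:general}, you can keep your disjoint-star plan with subset enumeration. To do so, sample each new star in the residual graph $G-W$. Whenever $\alpha n$ residual vertices have degree at most $\Delta$, switch to a low-degree exit that enumerates the colorings of $W$ and calls Theorem~\ref{thm:low-degree}. This requires $|W|\le\delta n$, with $\delta$ chosen after the base of that theorem is fixed. With that exit added, the rest of your argument goes through: the reverse-Markov bound, the binomial count of subsets, and the order of constants.
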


\begin{proof}
Apply Lemma~\ref{lem:seed} with \(k=K-2\), and run the full normalization (of Lemma~\ref{lem:normalization}) before the algorithm starts and after every step. Let \(\beta_0,\Delta_1,C_S\) be the constants supplied by that lemma. Set
\[
   \alpha=\frac{\beta_0}{2},
   \qquad
   M=\binom K2,
   \qquad
   p_0=\frac{\beta_0}{8}.
\]
Choose an integer \(r\geq2\) sufficiently large that
\[
   \lambda:=
   \frac{M\log_2(1/p_0)}{r}<1.
\]
Finally, put
\[
   B=(K-2)\Delta_1,
   \qquad
   \Delta=r(1+B).
\]

For a current instance on \(s\) vertices (the number of vertices may decrease from the original~$n$ due to the normalization steps), define the \emph{easy solver} \(E\) as follows.  If at least \(\alpha s\) vertices have degree at most \(\Delta\), apply Theorem~\ref{thm:low-degree}; otherwise apply the seed solver from Lemma~\ref{lem:seed}.  Both alternatives are exponential algorithms with a base smaller than \(2\).  Thus there is a constant~$C<2$ such that \(E\) runs in \(\Oh(C^s)\) time and succeeds with probability at least \(\frac12\), relative to any fixed witness, whenever either

\begin{enumerate}[label=(\roman*)]
\item at least \(\alpha s\) vertices have degree at most \(\Delta\); or
\item at most \(\beta_0s\) active vertices are bad.
\end{enumerate}

For an instance \(I\), let \(n(I)\) be its number of vertices and let \(b(I)\) be the number of vertices whose lists have size at most two.  Define
\[
   w(I):=n(I)-\frac{b(I)}M.
\]
In the following Section~\ref{sec:binary}, we show that \(I\) can be solved deterministically in \(\Oh(2^{w(I)})\) time.  We next describe a \emph{lucky move} that decreases \(w\).

Let \(I\) be a fully normalized current instance.  Choose a uniformly random vertex \(v\), and abort the move unless \(v\) is active and \(\deg(v)>\Delta\).  For such a vertex, let
\[
   Q_v=P\setminus L(v);
\]
because \(|L(v)|=K-2\), the set \(Q_v\) has size two.  Choose a uniformly random \(r\)-element subset \(T\subseteq N(v)\), and simultaneously replace
\[
   L(u)\quad\text{by}\quad L(u)\cap Q_v
   \qquad (u\in T).
\]
Reject the move if an empty list is produced; otherwise apply full normalization.  We call a move that is neither aborted nor rejected \emph{completed}.

Fix a witness coloring \(c\) of the current instance.  Suppose that neither of the two easy conditions (i) or (ii) above holds.  There are then more than \(\beta_0s\) bad active vertices, while fewer than \(\alpha s\) vertices have degree at most \(\Delta\).  Hence more than
\[
   (\beta_0-\alpha)s=\frac{\beta_0s}{2}
\]
vertices are simultaneously bad, active, and of degree greater than \(\Delta\).  The probability that the random vertex \(v\) is one of these vertices is therefore greater than \(\beta_0/2\).

\begin{figure}[t]
\centering
\begin{tikzpicture}[
    x=1cm,y=1cm,
    every node/.style={font=\small},
    center/.style={
      circle,draw=black,very thick,fill=black!8,
      minimum size=9mm,inner sep=0pt
    },
    nbr/.style={
      circle,draw=black!60,line width=.35pt,
      minimum size=3.3mm,inner sep=0pt
    },
    exceptional/.style={nbr,fill=black!15},
    qone/.style={nbr,fill=black!38},
    qtwo/.style={nbr,fill=black!75},
    edge/.style={draw=black!20,line width=.35pt},
    sample/.style={draw=black,line width=1.1pt}
]

\node[center] (v) at (0,0) {\(v\)};
\node[align=center,fill=white,inner sep=2pt] at (0,1.18) {
  \(L(v)=P\setminus Q_v\)\\[-1pt]
  \(\deg(v)>\Delta\)
};

\node[exceptional] (a11) at (-3.25, 1.05) {};
\node[exceptional] (a12) at (-2.75, 1.05) {};
\node[exceptional] (a21) at (-3.25, 0.30) {};
\node[exceptional] (a22) at (-2.75, 0.30) {};
\node at (-3.00,-0.24) {\(\vdots\)};
\node[exceptional] (ak1) at (-3.25,-0.85) {};
\node[exceptional] (ak2) at (-2.75,-0.85) {};

\node[left=4pt of a11] {\(a_1\)};
\node[left=4pt of a21] {\(a_2\)};
\node[left=4pt of ak1] {\(a_{K-2}\)};

\foreach \u in {a11,a12,a21,a22,ak1,ak2}
  \draw[edge] (v)--(\u);

\node[align=center] at (-3.00,1.72) {
  \(\leq\Delta_1\) of each\\[-1pt]
  \(a_i\in L(v)\)
};

\node[align=center] at (-3.00,-1.42) {
  at most \(B=(K-2)\Delta_1\) in total
};

\foreach \j in {1,...,5}{
  \foreach \i in {1,...,7}{
    \pgfmathsetmacro{\xx}{2.75+0.67*(\i-1)+0.13*mod(\j,2)}
    \pgfmathsetmacro{\yy}{1.40-0.70*(\j-1)}
    \pgfmathtruncatemacro{\parity}{mod(\i+\j,2)}
    \ifnum\parity=0
      \node[qone] (q-\i-\j) at (\xx,\yy) {};
    \else
      \node[qtwo] (q-\i-\j) at (\xx,\yy) {};
    \fi
    \draw[edge] (v)--(q-\i-\j);
  }
}

\foreach \i/\j in {2/1,5/2,3/3,6/4,4/5}{
  \draw[sample] (q-\i-\j) circle[radius=2.35mm];
}

\node[align=center] at (4.85,2.02) {
  Witness colors in \(Q_v=\{q_1,q_2\}\)
};

\draw[
  decorate,
  decoration={brace,mirror,amplitude=5pt}
] (2.57,-1.75)--(7.00,-1.75)
  node[midway,below=7pt] {
    \(> \Delta - B\) neighbors
  };

\end{tikzpicture}
\caption{An eligible vertex \(v\) under the witness \(c\).}
\label{fig:pair-complement-neighborhood}
\end{figure}

Condition on choosing such a vertex (which we call \emph{eligible}).  Since \(v\) is bad, each color in \(L(v)\) appears on at most \(\Delta_1\) of its neighbors under \(c\). Thus at most
\[
   (K-2)\Delta_1=B
\]
neighbors of \(v\) receive a witness color in \(L(v)\); every other neighbor receives a witness color in \(Q_v\).  Expose the members of the sampled set \(T\) one at a time.  At any point during the \(r\) draws, there are at least
\[
   \Delta-r\geq rB
\]
neighbors remaining.  At most \(B\) of the remaining neighbors have witness colors outside \(Q_v\). See Figure~\ref{fig:pair-complement-neighborhood} for illustration. Each draw therefore has conditional success probability at least \(1-1/r\), and
\[
 \Pr\!\left[c(T)\subseteq Q_v
       \,\middle|\, v\text{ is eligible}\right]
 \geq(1-1/r)^r\geq\frac14.
\]
Together with the probability of choosing an eligible vertex, this shows that whenever the state is not in one of the easy options (i) or (ii), a lucky move preserves the fixed witness with probability at least
\[
   \frac{\beta_0}{2}\cdot\frac14=p_0.
\]
On this event every list intersection contains the witness color, so the move is not rejected and the subsequent normalization also preserves the witness. After completing a lucky move we have
\[
   w(I')\leq w(I)-\frac rM,
\]
where \(I'\) is the normalized instance after the move.  Indeed, because rule~(N) was exhausted before the move, the edge \(uv\) implies \(L(u)\cap L(v)\neq\varnothing\) for every \(u\in T\).  Since \(Q_v\) is disjoint from \(L(v)\), intersecting \(L(u)\) with \(Q_v\) strictly shortens \(L(u)\).  On a completed move the resulting list is nonempty and has size at most two. Perform all \(r\) intersections before propagating singletons (as part of the full normalization). If \(|L(u)|\geq3\) before the move, then \(u\) becomes a short-list vertex, which decreases \(w\) by \(1/M\).  If \(|L(u)|=2\), then strict shortening makes \(u\) a singleton.  Its subsequent deletion decreases \(n(I)\) and \(b(I)\) by one, and therefore decreases \(w\) by
\[
   1-\frac1M\geq\frac1M.
\]
The vertices in \(T\) are distinct, so these \(r\) contributions add. All further singleton propagation can only decrease \(w\): deleting a short-list vertex decreases it by \(1-1/M\), while shortening a longer list to size at most two decreases it by \(1/M\).  This proves the claimed decrease in~$w$.

Let \(n\) be the number of vertices of the original input and define
\[
   d=\min\left\{\frac12,\frac{1-\log C}{2\lambda}\right\}>0,
   \qquad
   m=\left\lfloor\frac{dMn}{r}\right\rfloor.
\]
If \(m=0\), then \(n\) is bounded by a constant depending only on the fixed parameters, and the instance may be solved by exhaustive search. Assume henceforth that \(m\geq1\).

Consider the following \emph{trial} starting from the original instance and performing at most~$m$ lucky moves: At each of the \(m\) lucky move attempts, first run the easy algorithm \(E\). If \(E\) returns a coloring, return it.  Otherwise perform one lucky move; an abort or rejection ends the current trial.  After \(m\) completed lucky moves, invoke the binary interpolation (Theorem~\ref{thm:binary-interpolation}, proven in Section~\ref{sec:binary}) on the remaining instance.

To prove correctness, fix a witness coloring \(c\) of the original instance. Call a current state \emph{witness-compatible} if every remaining list contains the color assigned by the restriction of \(c\).  We claim, by (backward) induction on \(h\), that from any witness-compatible state with \(h\) lucky moves remaining, the rest of the trial succeeds with probability at least
\[
   \frac{1}{2} p_0^h.
\]
For \(h=0\), the deterministic interpolation algorithm succeeds with probability one, which is at least \(\frac{1}{2}\).  Suppose \(h\geq1\).  If the current state is easy, \(E\) succeeds with probability at least \(\frac{1}{2}\), which is at least \(\frac12 p_0^h\).  Otherwise, the lucky move preserves the witness with probability at least \(p_0\), after which the induction hypothesis applies.

In particular, one trial succeeds with probability at least \(\frac12 p_0^m\). Run
\[
   R=
   \left\lceil\frac{2n}{p_0^m}\right\rceil
\]
independent trials.  On a yes-instance the probability they all fail is
\[
   \left(1-\frac12p_0^m\right)^R
   \leq
   \exp\left(-\frac12p_0^mR\right)
   \leq e^{-n}.
\]

It remains to verify that the amplified running time is still sub-\(2^n\).  By the decrease in~$w$ at each completed move, a trial reaching its terminal call satisfies
\[
   w(I_m)
   \leq n-\frac{mr}{M}
   \leq(1-d)n+O(1).
\]
Thus the terminal call costs \(\Oh(2^{(1-d)n})\), whereas all easy calls within one trial together cost \(\Oh(C^n)=\Oh(2^{\log C \cdot n})\).  The lucky moves take only polynomial time.  Moreover,
\[
   p_0^{-m}
   \leq
   2^{(dMn/r)\log_2(1/p_0)}
   =2^{\lambda dn}.
\]
After multiplication by the number of trials, the exponential parts of the easy-call and terminal costs are therefore bounded by
\[
   2^{(\log C +\lambda d)n}
   \quad\text{and}\quad
   2^{(1-d+\lambda d)n},
\]
respectively.  By the choice of \(d\) and by \(\lambda<1\),
\[
   \log C+\lambda d
   \leq\frac{1+\log C}{2}<1,
   \qquad
   1-d+\lambda d
   =1-(1-\lambda)d<1.
\]
Hence, for
\[
   \gamma=
   \max\{\log C+\lambda d,\;1-(1-\lambda)d\}<1,
\]
the total running time is \(\Oh(2^{\gamma n})\).  Equivalently it is \(\Oh((2-\varepsilon)^n)\), where \(\varepsilon=2-2^\gamma>0\).

Every successful subroutine returns a coloring of a restriction of the original instance; forced assignments are then restored in reverse order.  Since the algorithm only shrinks lists, deletes only edges whose endpoint lists are disjoint, and verifies every returned coloring, it never accepts a no-instance.  Its only possible error is the false-negative event bounded above by~$e^{-n}$.  This proves the lemma.
\end{proof}

Thus, together with the previous reductions from~\cite{Zamir2021} we get the following.
\begin{corollary}[Three-color gap]\label{cor:gap-three}
For every fixed integer $k\geq4$, a sub-$2^n$ algorithm for $(k-3)$-list-coloring implies a sub-$2^n$ algorithm for $k$-coloring.  In particular, the known $4$-list-coloring algorithm of~\cite{BeigelEppstein2005} yields such an algorithm for $7$-coloring.
\end{corollary}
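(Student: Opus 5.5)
The plan is to chain Lemma~\ref{lem:pair-bootstrap} with part~(ii) of Theorem~\ref{thm:published-gaps}, working throughout over the fixed palette $P=[k]$. Set $K=k\geq4$. By hypothesis, $(k-3)$-list-coloring has an algorithm of base $a<2$. The first step is to make sure this algorithm can be applied to instances whose palette is exactly $[k]$.

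\textbf{Palette restriction.} The hypothesis is either uniform over arbitrary palettes, or at least holds over $[k]$. In the latter case, note that the instances produced inside the reductions only ever shrink lists and never enlarge the palette (Lemma~\ref{lem:list-decision-to-search}, Lemma~\ref{lem:normalization}, Lemma~\ref{lem:seed}). So every subinstance passed to a subroutine is a $(k-3)$-list-coloring instance over $[k]$, and the assumed algorithm applies to it.

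\textbf{The two reductions.} Apply Lemma~\ref{lem:pair-bootstrap} with $K=k$. This turns the base-$a$ algorithm for $(K-3)$-list-coloring over $[K]$ into an algorithm of base $2-\varepsilon$ for $(K-2)$-list-coloring over $[K]$, where $\varepsilon$ depends only on $k$ and $a$. Next, invoke Theorem~\ref{thm:published-gaps}(ii) with $\delta=\varepsilon$. This turns the $(k-2)$-list-coloring algorithm into a sub-$2^n$ algorithm for $k$-coloring, with exponentially small one-sided error.

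\textbf{Main obstacle.} The one point needing care is that Theorem~\ref{thm:published-gaps}(ii) is stated with a general $(k-2)$-list-coloring oracle, whereas Lemma~\ref{lem:pair-bootstrap} only provides one over the palette $[k]$. I would argue that the reduction of~\cite{Zamir2021}, which starts from a $k$-coloring instance, only produces list instances with lists contained in $[k]$. This is because it colors seeds, contracts vertices into a single vertex, and deletes colors from lists, none of which introduces new colors. Hence a $(k-2)$-list-coloring oracle over $[k]$ suffices.

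Alternatively, one can avoid relying on this property of the cited reduction and redo the argument directly. In the style of Lemma~\ref{lem:pair-bootstrap}, the step from $(K-2)$- to $(K-1)$- to $K$-list-coloring over $[K]$ follows by two applications of the seed-shortening Lemma~\ref{lem:seed} together with the easy solver. The catch is that Lemma~\ref{lem:seed} alone requires few bad vertices. So I would rely on the cited theorem, with the palette remark above.

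\textbf{Error probabilities and the case $k=7$.} Errors compose, since every accepted output is a verified coloring, so the combined algorithm still has one-sided error. For the ``in particular'' clause, take $k=7$ with the $O^*(1.8072^n)$ algorithm of~\cite{BeigelEppstein2005} for $4$-list-coloring. That algorithm applies to $(4,2)$-CSPs and therefore to any palette.
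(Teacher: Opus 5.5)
Your proposal is correct and is essentially the paper's argument: apply Lemma~\ref{lem:pair-bootstrap} with $K=k$, then the two-step reduction of Theorem~\ref{thm:published-gaps}(ii). The palette-$[k]$ subtlety you flag is exactly what the paper uses implicitly when it says that the cited reduction only needs a $(K-2)$-list-coloring algorithm over $[K]$, and your handling of one-sided error and of the $k=7$ case via the $(4,2)$-CSP algorithm of Beigel and Eppstein also matches.
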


For completeness, we include a pseudo-code of the entire reduction described in Lemma~\ref{lem:pair-bootstrap} in Algorithm~\ref{alg:pair-complement}.
\vspace{11pt}

\begin{algorithm}[H]
\caption{The pair-complement bootstrap}
\label{alg:pair-complement}
\KwIn{A list-coloring instance \(I=(G,L)\) over~$[K]$ with \(|L(v)|\leq K-2\) for every \(v\)}
\KwOut{\(\mathsf{YES}\) or \(\mathsf{NO}\)}

Let \(m,R,r,\Delta\) be the constants and parameters fixed in the proof\;

\(I_0\gets\textsc{FullNormalization}(I)\)\;
\If{\(I_0\) is infeasible}{
    \Return \(\mathsf{NO}\)\;
}

\If{\(m=0\)}{
    \Return \(\textsc{BinaryInterpolation}(I_0)\)\;
}

\For{\(t\gets1\) \KwTo \(R\)}{
    \(I\gets I_0\)\;

    \For{\(j\gets1\) \KwTo \(m\)}{
        \If{\(\textsc{EasySolver-}E(I)\) finds a coloring}{
            \Return \(\mathsf{YES}\)\;
        }

        Choose \(v\) uniformly at random from \(V(I)\)\;

        \If{\(|L_I(v)|\neq K-2\) or \(\deg_I(v)\leq\Delta\)}{
            \NextTrial\;
        }

        \(Q_v\gets [K]\setminus L_I(v)\)\;
        Choose a uniformly random \(r\)-element set
        \(T\subseteq N_I(v)\)\;

        Simultaneously replace
        \(L_I(u)\) by \(L_I(u)\cap Q_v\) for every \(u\in T\)\;

        \If{one of the resulting lists is empty}{
            \NextTrial\;
        }

        \(I\gets\textsc{FullNormalization}(I)\)\;

        \If{\(I\) is infeasible}{
            \NextTrial\;
        }
    }

    \If{\(\textsc{BinaryInterpolation}(I)\) returns \(\mathsf{YES}\)}{
        \Return \(\mathsf{YES}\)\;
    }
}

\Return \(\mathsf{NO}\)\;
\end{algorithm}

\vspace{11pt}
The instruction to continue with the next trial abandons the current inner loop and returns to the outer repetition loop.

\subsection{Interpolation algorithm for instances with many size-two lists}
\label{sec:binary}

The remaining component is the black-box theorem we used in the proof of Lemma~\ref{lem:pair-bootstrap}: a faster algorithm for solving list-coloring instances over a palette of size~$K$ where many of the lists are of size (at most) two. The next theorem is stated independently of the reduction and may be useful elsewhere.

\begin{theorem}[Binary-list interpolation]\label{thm:binary-interpolation}
Fix a palette $P$ of size $K\geq2$.  If a list-coloring instance over~$P$ on $n$ vertices has $b$ vertices with lists of size at most two, it can be solved deterministically in
\[
   \Oh\!\left(2^{n-b/\binom K2}\right)
\]
time.  A coloring can be recovered within the same bound.
\end{theorem}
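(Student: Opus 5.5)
We follow the sketch from the overview. First we apply full normalization (Lemma~\ref{lem:normalization}) to remove singleton lists. This only deletes vertices, and deleting a short-list vertex lowers $n$ by one and $b$ by at most one. Since $1\geq 1/\binom K2$, the quantity $n-b/\binom K2$ never increases, so we may assume that every list has size at least two. Each of the $b$ short-list vertices then has a list equal to some two-element subset of $P$, and there are $\binom K2$ such subsets. By pigeonhole, some pair $Q=\{q_1,q_2\}\subseteq P$ is the exact list of a set $S$ with $|S|\geq b/\binom K2$. Set $H=V(G)\setminus S$.

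Next we separate the palette into $Q$ and $P\setminus Q$. Consider the list instance on $G[H]$ with lists $L(v)\setminus Q$; this may make some lists empty. We apply Lemma~\ref{lem:all-subsets} to it. To handle empty lists, we simply never allow such vertices in $X$: the functions $f_c$ already exclude them. This computes, for every $X\subseteq H$, whether $G[X]$ can be list-colored using only colors outside $Q$, in $\Oh(2^{|H|})=\Oh(2^{n-|S|})$ time.

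Now fix an $X$ that passes this test, and let $Y=V(G)\setminus X$. In $Y$, every vertex must receive a color from $Q$. This includes all of $S$, whose lists are exactly $Q$, together with each vertex of $H\setminus X$, whose allowed set is $L(v)\cap Q$; if that set is empty, $X$ is rejected. Each vertex of $Y$ therefore has at most two options, so deciding whether $G[Y]$ is colorable is a $2$-SAT instance, or equivalently a bipartiteness-type check. It takes polynomial time.

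The main point, and the only real correctness step, is that the two parts combine freely. Every edge between $X$ and $Y$ joins a vertex colored outside $Q$ to a vertex colored inside $Q$, so it is automatically proper. Conversely, any list coloring $c$ of $G$ arises in this way by taking $X=\{v: c(v)\notin Q\}$. Note that $X\subseteq H$, because vertices of $S$ must be colored from $Q$. Hence $G$ is colorable if and only if some $X\subseteq H$ passes both tests. The total cost is $\Oh(2^{|H|})$ for Lemma~\ref{lem:all-subsets} plus $2^{|H|}$ polynomial-time $2$-SAT calls, which gives $\Oh(2^{n-b/\binom K2})$.

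To recover a coloring, we take a successful $X$, read off the $2$-SAT assignment on $Y$, and obtain a coloring of $G[X]$ by self-reduction. For the latter we can use Lemma~\ref{lem:list-decision-to-search} with the decision procedure above. Alternatively, we can find the coloring of $G[X]$ directly by backtracking through the subset-convolution tables within the same bound. Finally, we restore the normalized singletons in reverse order. The algorithm is deterministic throughout.
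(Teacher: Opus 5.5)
Your proposal is correct and follows essentially the same route as the paper: normalize, pick the most frequent binary list $Q$ by pigeonhole, tabulate all $X\subseteq H$ that are colorable from the lists $L(v)\setminus Q$ via Lemma~\ref{lem:all-subsets}, and finish each such $X$ with a $2$-SAT call on its complement, using that the two sides draw on disjoint palettes. One small slip: singleton propagation does not only delete vertices, it also removes colors from neighboring lists and can turn a list of size three into a short list, increasing $b$; this only lowers $n-b/\binom K2$, so your conclusion stands, and the paper's proof handles this case explicitly.
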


The algorithm we present is rooted at the following simple observation: Since there are only~$K$ colors in the palette, many of the~$b$ short-list vertices have exactly the same list of two colors. Let's call these special colors~$c_1$ and~$c_2$. If we are given a partial coloring of~$G$ in which only vertices with non-special colors in~$P\setminus\{c_1,c_2\}$ are colored -- then we can test if this partial coloring can be extended to a full coloring in polynomial time; that is because the remaining instance is simply an instance of $2$-list-coloring (or in fact, $2$-coloring as it is entirely over the palette~$\{c_1,c_2\}$). Furthermore, the test above \emph{does not depend} on the partial coloring at all - just on which vertices are colored by any color out of~$\{c_1,c_2\}$, and which are not. Hence, we can remove the vertices with the special short-list from the graph, and then use the algorithm cited in Lemma~\ref{lem:all-subsets} to find all induced subgraphs that are colorable with only colors from~$P\setminus\{c_1,c_2\}$; then attempt extending each such option to a full coloring in polynomial time.

\begin{proof}[Proof of Theorem~\ref{thm:binary-interpolation}]
First exhaust singleton propagation as part of a full normalization step. If a singleton is deleted, or if deleting its color shortens another list, the potential $n-b/\binom K2$ cannot increase, where here $b$ counts all lists of size at most two.  More explicitly, deleting a short-list vertex decreases the potential by $1-1/\binom K2$, and shortening a longer list to size at most two decreases it by $1/\binom K2$. After propagation every remaining short list has size exactly two.  It therefore suffices to prove the claim in that normalized case; relabel its parameters as $n,b$.

Let $M=\binom K2$.  Among the possible binary lists, choose a pair $Q\subseteq P$ occurring on a largest set
\[
   S=\{v:L(v)=Q\}.
\]
Then $s:=|S|\geq b/M$.  Put $H=V(G)\setminus S$.

Keep the original lists $L$ unchanged and define a separate outside-list instance on $H$ by
\[
   L_{\mathrm{out}}(v)=L(v)\setminus Q.
\]
An empty auxiliary list is allowed: it simply makes every subgraph containing that vertex infeasible in the outside-list coloring instance. By Lemma~\ref{lem:all-subsets}, in $\Oh(2^{|H|})$ total time we know, for every $X\subseteq H$, whether $G[X]$ is colorable from $L_{\mathrm{out}}$.

For every such colorable $X$, restrict each vertex of $V(G)\setminus X$ from its original list to $L(v)\cap Q$ and test the resulting instance by 2-SAT.  This is solved exactly and deterministically in polynomial time. Given a full coloring, take $X$ to be the vertices of $H$ colored outside $Q$.  Conversely, an outside-$Q$ coloring of $X$ and a $Q$-coloring of its complement combine because endpoints of an edge crossing the cut use disjoint palettes.

\begin{figure}[t]
\centering
\begin{tikzpicture}[
    x=1cm,y=1cm,
    every node/.style={font=\small},
    basevertex/.style={
      circle,draw=black!70,line width=.4pt,
      minimum size=4.1mm,inner sep=0pt
    },
    xvertex/.style={
      basevertex,fill=white,line width=.9pt
    },
    binaryvertex/.style={
      basevertex,fill=black!48
    },
    svertex/.style={
      basevertex,fill=black!48,
      double=white,double distance=.8pt,line width=.8pt
    },
    edge/.style={draw=black!25,line width=.45pt}
]

\path[fill=black!3,draw=black!60,line width=.7pt]
  (-5.90,.05)
  .. controls (-5.90,1.35) and (-5.05,1.95) .. (-3.65,1.95)
  .. controls (-2.15,1.88) and (.55,2.05) .. (1.48,1.38)
  .. controls (2.02,.92) and (2.02,-.82) .. (1.48,-1.30)
  .. controls (.55,-1.88) and (-1.80,-1.63) .. (-3.55,-1.72)
  .. controls (-5.05,-1.75) and (-5.90,-1.25) .. (-5.90,.05)
  -- cycle;

\path[fill=black!8,draw=black!60,line width=.7pt]
  (2.35,.05)
  .. controls (2.35,1.25) and (3.12,1.72) .. (4.23,1.72)
  .. controls (5.43,1.72) and (6.18,1.18) .. (6.18,.02)
  .. controls (6.18,-1.16) and (5.42,-1.67) .. (4.22,-1.67)
  .. controls (3.08,-1.67) and (2.35,-1.15) .. (2.35,.05)
  -- cycle;

\path[
  fill=white,draw=black!70,line width=.8pt,
  dash pattern=on 1.2pt off 2.1pt,line cap=round
]
  (-5.35,-.10)
  .. controls (-5.35,.48) and (-4.72,.72) .. (-3.62,.72)
  .. controls (-2.48,.72) and (-1.90,.45) .. (-1.90,-.10)
  .. controls (-1.90,-.79) and (-2.45,-1.08) .. (-3.55,-1.08)
  .. controls (-4.72,-1.08) and (-5.35,-.79) .. (-5.35,-.10)
  -- cycle;

\coordinate (x1) at (-4.82,-.05);
\coordinate (x2) at (-4.02,.23);
\coordinate (x3) at (-3.05,.02);
\coordinate (x4) at (-4.45,-.63);
\coordinate (x5) at (-3.46,-.54);
\coordinate (x6) at (-2.40,-.48);

\coordinate (h1) at (-1.08,.18);
\coordinate (h2) at (-.10,.40);
\coordinate (h3) at (.92,.18);
\coordinate (h4) at (-1.05,-.68);
\coordinate (h5) at (.02,-.86);
\coordinate (h6) at (1.10,-.55);

\coordinate (s1) at (3.04,.05);
\coordinate (s2) at (4.10,.34);
\coordinate (s3) at (5.22,.04);
\coordinate (s4) at (3.38,-.70);
\coordinate (s5) at (4.48,-.55);
\coordinate (s6) at (5.48,-.76);

\foreach \u/\w in {
  x1/x2,x1/x4,x2/x3,x2/x5,x3/x5,x3/x6,x4/x5,x5/x6,
  x2/h1,x3/h1,x3/h2,x5/h4,x6/h1,x6/h4,
  h1/h2,h1/h4,h2/h3,h2/h5,h3/h6,h4/h5,h5/h6,
  h2/s1,h3/s2,h3/s4,h5/s1,h6/s4,h6/s5,
  s1/s2,s1/s4,s2/s3,s2/s5,s3/s5,s3/s6,s4/s5,s5/s6}{
    \draw[edge] (\u)--(\w);
}

\foreach \u in {x1,x2,x3,x4,x5,x6}
  \node[xvertex] at (\u) {};

\foreach \u in {h1,h2,h3,h4,h5,h6}
  \node[binaryvertex] at (\u) {};
\foreach \u in {s1,s2,s3,s4,s5,s6}
  \node[svertex] at (\u) {};

\node[font=\small\bfseries] at (-2.05,2.30) {
  \(H=V(G)\setminus S\)
};
\node[font=\small\bfseries] at (4.25,2.12) {
  \(S=\{v:L(v)=Q\}\)
};

\node[align=center] at (-3.62,1.25) {
  \(X\subseteq H\)\\[-1pt]
  \(L_{\mathrm{out}}(v)=L(v)\setminus Q\)
};
\node[align=center] at (0,1.25) {
  \(H\setminus X\)\\[-1pt]
  lists \(L(v)\cap Q\)
};
\node[align=center] at (4.25,1.22) {
  Every list is \(Q\)
};

\end{tikzpicture}
\caption{The decomposition used by the binary-list interpolation algorithm.}
\label{fig:binary-interpolation-nested}
\end{figure}

There are $2^{|H|}=2^{n-s}$ subsets and polynomial work per subset.  Thus the running time is at most
\[
   \Oh(2^{n-s})
   \leq \Oh\!\left(2^{n-b/M}\right).
\]
After finding a successful $X$, we can also find an explicit coloring of it in~$\Oh(2^{|X|})$ time.  This reconstructs the outside-$Q$ coloring in at most $\Oh(2^{|H|})$ additional time; 2-SAT returns the complementary coloring. Thus a full coloring is recovered within the same bound.
\end{proof}

For completeness, we include the pseudo-code in Algorithm~\ref{alg:binary-interpolation} and an illustration of the decomposition in Figure~\ref{fig:binary-interpolation-nested}.
\vspace{11pt}

\begin{algorithm}[H]
\caption{Binary-list interpolation}
\label{alg:binary-interpolation}
\KwIn{A list-coloring instance \(I=(G,L)\) over a fixed palette \(P\)}
\KwOut{\(\mathsf{YES}\) if \(I\) is list colorable, and
       \(\mathsf{NO}\) otherwise}

\(I\gets\textsc{FullNormalization}(I)\)\;

\If{\(I\) is infeasible}{
    \textbf{return} \(\mathsf{NO}\)\;
}

Choose a pair \(Q\in\binom{P}{2}\) maximizing
\(
   \bigl|\{v\in V(G):L(v)=Q\}\bigr|
\).

Set
\(
   S\gets\{v\in V(G):L(v)=Q\},
   \;
   H\gets V(G)\setminus S.
\)

For every \(v\in H\), set
\(
   L_{\mathrm{out}}(v)\gets L(v)\setminus Q
\).

\(\mathcal A\gets
  \textsc{AllInducedListSubinstances}
  (G[H],L_{\mathrm{out}})\)\;

\For{each \(X\subseteq H\) with
     \(\mathcal A[X]=\mathsf{YES}\)}{
    For every \(v\in V(G)\setminus X\), set
    \(
       L_Q(v)\gets L(v)\cap Q.
    \)

    \If{\(\textsc{2SAT}
          (G[V(G)\setminus X],L_Q)=\mathsf{YES}\)}{
        \textbf{return} \(\mathsf{YES}\)\;
    }
}

\textbf{return} \(\mathsf{NO}\)\;
\end{algorithm}

\vspace{11pt}
The running time in Theorem~\ref{thm:binary-interpolation} can equivalently be written as
\[
   \Oh\!\left(2^{n-b}q_K^b\right),
   \qquad
   q_K=2^{\,1-1/\binom K2}<2.
\]
Thus a vertex with a general list contributes a factor of~$2$, whereas a vertex with a list of size at most two contributes the smaller factor $q_K$.  Since an instance in which every list has size at most two is solvable in polynomial time, it is natural to ask whether the endpoint $q=1$ can be attained.  This would correspond to both desired endpoints of the interpolation: at~$b=0$, we simply solve a standard list-coloring instance in~$\Oh(2^n)$ time, and at~$b=n$ we solve a $2$-list-coloring instance in polynomial time.

In Appendix~\ref{sec:eth} we show that this optimistic endpoint $q=1$ is impossible under the Exponential Time Hypothesis. In fact, our reduction rules out a whole interval of constants larger than one.  The palette used by the reduction depends on one fixed CSP domain size, but not on the number of variables.  Consequently, the lower bound applies even if the hidden constants in the list-coloring algorithm are allowed to depend arbitrarily on the fixed palette.

\subsection{Reflection time}

This section provides a good understanding of both the technical basis for the general algorithm as well as the limitation of the binary-list interpolation approach.

In terms of generalization, at first glance we may be discouraged from extending the present reduction further: While a list-coloring instance with many lists of size two seems like a naturally easier problem, due to the polynomial-time algorithm for the extreme case of only such size two lists, the same is no longer true for larger lists. There is no apriori reason to believe that an instance with many lists of size three or four is easier than a general instance. A glimpse of hope though emerges from the recursive structure of our constructions. If we already have a sub-$2^n$ algorithm for $k$-list-coloring over~$[K]$, we may hope that an instance with larger lists in which many lists are guaranteed to be of sizes at most~$k$ might still have a faster solution. Materializing this hope is not straightforward though, and a combination of tools appearing in the size-two list interpolation algorithm with tools from the hypergraph container approach in~\cite{Zamir2022} is needed to do so. We do that in Section~\ref{sec:general}.

As for the obstruction, we make the following observation. The interpolation constant $q_K=2^{\,1-1/\binom K2}<2$ we obtain in Section~\ref{sec:binary} approaches two as $K\rightarrow\infty$. It remains open whether some palette-independent $1<q<2$ is possible. The general reduction below bypasses this dependence: when \emph{every} list has bounded size (and not just the short ones), we present a different savings mechanism. Thus, the palette-independent interpolation question remains open, but is not needed for $k$-list-coloring over arbitrary palettes. In that interpolation question the other lists may have arbitrarily large sizes, whereas our algorithm requires a fixed bound on every list.

\section{The general algorithm}
\label{sec:general}

We aim to generalize the reduction of Section~\ref{sec:warmup} in a natural way: we construct a reduction from a sub-$2^n$ algorithm for~$(k-1)$-list-coloring to a sub-$2^n$ algorithm for~$k$-list-coloring, for every fixed~$k\geq3$, even over an arbitrary palette~$P$. In the warm-up, we constructed such reductions only for~$K-2\leq k\leq K$ over a fixed palette~$[K]$.

Cumbersome (and many) technical details aside, the warm-up reduction can be described as follows: Unless the instance has an `easy' structure we already have a fast solution for, we find in it many vertices~$v$ that have lists~$L(v)$ of the maximum possible size~$k=(K-2)$ and that also have many distinct neighbors whose correct color must be in the smaller complement lists~$Q_v=[K]\setminus L(v)$. There, the size of these complements is~$|Q_v|=2$.
Then, we presented an algorithm to solve such instances with many size-two lists.

While tempting to generalize just the first part and again encapsulate the second part into some natural black-box statement, we are unable to do so. Intuitively, it truly is not clear if list-coloring instances with many lists of size three or four are easier to solve than general ones. Instead, we take inspiration, or parts, from \emph{both} components at the same time to construct the general reduction.

As before, the only case not solved by the `easy' routines is when we have many vertices~$v$ for which nearly all of their neighbors should be colored with a color \emph{outside} of the list~$L(v)$.
Using random guessing, we construct a large set of vertices with maximum-size lists~$L(v)$ and sample a large yet constant subset of their neighbors, we keep these sampled stars vertex-disjoint. 
With constant probability, linearly many stars from the aforementioned \emph{constellation} have their leaves colored outside their center's list in a fixed witness coloring. 

We then partition the palette randomly into two sets~$Q,R$. As every center list has bounded size, we expect a constant fraction of the constellation to have their center lists contained in~$R$ and their leaves' witness colors in~$Q$. 
Guessing a small enough subcollection of the constellation and restricting their leaves to~$Q$ gives us two disjoint sets of colors such that each is avoided by a known set of linearly many vertices, which is the desired property to use the algorithm of Theorem~\ref{lem:two-block-restricted}. 

The cost of guessing is paid for by taking sufficiently many leaves per star. Now, we have a graph with many vertices that are forbidden from using colors in~$\emptyset \neq Q\subsetneq P$, and also many vertices that are forced to use only colors from~$Q$. This now sounds reminiscent of the situation we tackle in the interpolation algorithm of Section~\ref{sec:binary}: There are~$(1-\varepsilon)n$ known vertices (for some~$\varepsilon>0$) which are a superset of the part of the graph that would eventually be colored by colors of~$P\setminus Q$, and similarly, there are~$(1-\varepsilon)n$ vertices which are a superset of the part that would be colored by colors of~$Q$. Unlike Section~\ref{sec:binary} though, ``stitching'' the two parts together -- or finding which colorable subgraphs of each are compatible with each other -- is more difficult. Fortunately, the machinery in~\cite{Zamir2022} solves \emph{exactly} this problem; as discussed and cited in Theorem~\ref{lem:two-block-restricted}, we \emph{are} able to solve that ``stitching'' problem in time comparable to enumerating over subsets of the two super-sets separately.

\begin{figure}[hb]
\centering
\begin{tikzpicture}[
  x=1cm,
  y=1cm,
  every node/.style={text=black},
  outer boundary/.style={
    draw=black!72,
    fill=white,
    line width=0.85pt
  },
  set boundary/.style={
    draw=black!68,
    line width=0.70pt
  },
  vertex/.style={fill=black!60}
]

\def\outerpath{%
  (-3.08,-0.28)
  .. controls (-3.03,-1.13) and (-2.26,-1.53) .. (-1.33,-1.52)
  .. controls (-0.49,-1.66) and (0.37,-1.47) .. (1.18,-1.55)
  .. controls (2.28,-1.56) and (3.01,-1.03) .. (3.08,-0.18)
  .. controls (3.21,0.76) and (2.43,1.43) .. (1.43,1.49)
  .. controls (0.53,1.62) and (-0.31,1.44) .. (-1.17,1.53)
  .. controls (-2.23,1.55) and (-3.03,0.72) .. cycle}

\newcommand{\vertexset}{%
  \foreach \p in {
    (-2.66,-0.18),(-2.45,0.48),(-2.36,-0.76),(-2.15,0.88),
    (-1.42,0.22),(-1.12,-0.75),(-0.72,0.72),(-0.36,-0.08),
    (-0.12,0.95),(0.22,-0.85),(0.62,0.34),(1.05,-0.32),
    (2.15,-0.88),(2.36,0.76),(2.45,-0.48),(2.66,0.18)}
    \fill[vertex] \p circle (1.45pt);
}

\begin{scope}[shift={(-4.18,0)}]
  \path[outer boundary] \outerpath;

  \begin{scope}
    \clip \outerpath;

    \path[fill=black!8,even odd rule]
      \outerpath
      (1.05,-0.01) ellipse [x radius=3.18,y radius=2.03];
    \path[fill=black!27,even odd rule]
      \outerpath
      (-1.05,-0.01) ellipse [x radius=3.18,y radius=2.03];

    \draw[draw=black!72,line width=0.75pt,densely dotted]
      (1.05,-0.01) ellipse [x radius=3.18,y radius=2.03];
    \draw[draw=black!68,line width=0.75pt,dashed]
      (-1.05,-0.01) ellipse [x radius=3.18,y radius=2.03];
  \end{scope}

  \vertexset
  \path[draw=black!72,line width=0.85pt] \outerpath;

  \node[font=\small\bfseries] at (0,2.02)
    {(a) Supported sets inside \(V(G)\)};

  \path[set boundary,fill=black!8]
    (-2.72,-2.07) rectangle ++(0.25,0.25);
  \node[anchor=west,font=\small] at (-2.35,-1.945)
    {\(\mathcal A_Q:\ L(v)\subseteq Q\)};

  \path[set boundary,fill=black!27]
    (0.40,-2.07) rectangle ++(0.25,0.25);
  \node[anchor=west,font=\small] at (0.77,-1.945)
    {\(\mathcal B_R:\ L(v)\subseteq R\)};
\end{scope}

\begin{scope}[shift={(4.18,0)}]
  \path[outer boundary] \outerpath;

  \begin{scope}
    \clip \outerpath;

    \path[fill=black!8]
      (-1.05,-0.01) ellipse [x radius=3.18,y radius=2.03];

    \path[fill=black!31,fill opacity=0.50]
      (1.05,-0.01) ellipse [x radius=3.18,y radius=2.03];

    \draw[draw=black!68,line width=0.75pt,dashed]
      (-1.05,-0.01) ellipse [x radius=3.18,y radius=2.03];
    \draw[draw=black!72,line width=0.75pt,densely dotted]
      (1.05,-0.01) ellipse [x radius=3.18,y radius=2.03];
  \end{scope}

  \vertexset
  \path[draw=black!72,line width=0.85pt] \outerpath;

  \node[font=\small\bfseries] at (0,2.02)
    {(b) Overlapping complements inside \(V(G)\)};

  \draw[draw=black!68,line width=0.75pt,dashed]
    (-2.72,-1.95)--(-2.36,-1.95);
  \node[anchor=west,font=\small] at (-2.25,-1.95)
    {\(V(G)\setminus\mathcal B_R\)};
  \node[font=\scriptsize,text=black!65] at (-1.48,-2.28)
    {\(Q\)-side};

  \draw[draw=black!72,line width=0.75pt,densely dotted]
    (0.30,-1.95)--(0.66,-1.95);
  \node[anchor=west,font=\small] at (0.77,-1.95)
    {\(V(G)\setminus\mathcal A_Q\)};
  \node[font=\scriptsize,text=black!65] at (1.56,-2.28)
    {\(R\)-side};
\end{scope}

\draw[->,line width=0.75pt]
  (-0.50,0)--(0.50,0);
\node[font=\scriptsize] at (0,0.32) {complements};

\node[font=\small] at (0,2.70)
  {\(R=P\setminus Q\)};

\end{tikzpicture}
\caption{The same two boundaries in two views.  The supported sets
\(\mathcal A_Q\) and \(\mathcal B_R\) are exactly the two differences
of the overlapping complementary domains shown on the right.}
\label{fig:general-two-block-overview}
\end{figure}

\subsection{The list-to-list bootstrap}

We now make the preceding outline precise. As in the warm-up, a vertex is \emph{active} when its list has the maximum currently allowed size, and goodness and badness are always defined relative to a fixed witness coloring. 

\begin{theorem}[General bootstrap]\label{thm:general-bootstrap}
Fix an integer \(k\geq3\). If \((k-1)\)-list-coloring over arbitrary palettes has a randomized algorithm running in \(\Oh(a^n)\) time for some \(a<2\), then \(k\)-list-coloring over arbitrary palettes has a randomized algorithm running in
\[
   \Oh\bigl((2-\varepsilon)^n\bigr)
\]
time for some \(\varepsilon>0\) depending only on \(k,a\). 
\end{theorem}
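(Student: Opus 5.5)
The plan mirrors the structure of Lemma~\ref{lem:pair-bootstrap}, with the binary interpolation replaced by Corollary~\ref{cor:two-block-list} as the terminal solver.

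\textbf{Step 1: the easy solver.} Apply Lemma~\ref{lem:seed} with the given $k$ and base $a$. This supplies $\beta_0,\Delta_1,C_S$. Set $\alpha=\beta_0/2$ and choose a large constant $r$ (to be fixed later). Put $B=k\Delta_1$ and $\Delta=r(1+B)$. The easy solver $E$ uses Theorem~\ref{thm:low-degree} when at least $\alpha s$ vertices have degree at most $\Delta$, and the seed solver otherwise. It has some base $C<2$ and succeeds whenever either easy condition holds. We run only normalization rule~(N), so that every edge $uv$ satisfies $L(u)\cap L(v)\neq\varnothing$. We do not propagate singletons, since the vertex set should stay fixed for the final two-block accounting.

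\textbf{Step 2: the constellation.} If neither easy condition holds for a fixed witness $c$, then more than $\beta_0 n/2$ vertices are simultaneously bad, active and of degree greater than $\Delta$. We greedily sample $t=\delta n$ vertex-disjoint stars, for a small constant $\delta$. Each star consists of a uniformly random remaining center $v$ together with $r$ uniformly random neighbors. All star vertices are removed from further choices, so disjointness costs only $O(\delta r n)$ vertices, and eligible centers remain a constant fraction as long as $\delta r\ll\beta_0$. As in the warm-up, a bad eligible center has at most $B$ neighbors colored from $L(v)$. Hence all $r$ leaves are colored outside $L(v)$ with probability at least $1/4$. In expectation, $\Omega(\beta_0\delta n)$ stars are \emph{correct}. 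Since each count is bounded by $t$, a reverse Markov argument shows that with constant probability at least $\mu n$ stars are correct, for some constant $\mu>0$.

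\textbf{Step 3: the palette split.} Each color is placed in $R$ independently with probability $1/4$, and $Q=P\setminus R$. For a correct star, $L(v)\subseteq R$ and $c(\text{leaves})\subseteq Q$ together involve at most $k+r$ colors. These colors need not be distinct, but the leaf witness colors avoid $L(v)$, so the two conditions concern disjoint color sets and the probability is at least $4^{-k}(3/4)^r$. Call such a star \emph{successful}. The expected number of successful stars is at least $\mu' n$ with $\mu'=\mu4^{-k}(3/4)^r$, and again with constant probability at least $\mu'n/2$ stars are successful.

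We then guess a subcollection $\mathcal S$ of exactly $m=\mu'' n$ stars from the constellation, taking $\mathcal S$ uniformly random. The probability that all $m$ guessed stars are successful is roughly $(\mu'/2\delta)^m$, that is, $2^{-m\log(2\delta/\mu')}$. For the stars in $\mathcal S$, we restrict each leaf list to $L(u)\cap Q$ and each center list to $L(v)\cap R$, rejecting the guess if some list becomes empty. This yields $|\mathcal A_Q|\geq rm$ and $|\mathcal B_R|\geq m$.

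\textbf{Step 4: balancing the two sides.} The weak side has only $m$ vertices. One option is to additionally guess restrictions on other vertices. The cleaner fix is to add, for each center, its own random bookkeeping so that both sides are linear. Both sets are linear either way, so Corollary~\ref{cor:two-block-list} gives a terminal cost of $\Oh(2^{n-m})$.

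\textbf{Main obstacle.} The difficulty is the accounting. The guessing cost is $2^{m\log(2\delta/\mu')}$, and $\log(1/\mu')$ grows linearly in $r$ through the factor $(4/3)^r$, so a saving of only $m$ on the $R$-side cannot pay for it. The intended resolution is to make the $R$-side saving proportional to $r$ as well. After restricting the leaves to $Q$, each leaf that was active is now forbidden from $R$. In addition, each leaf whose list already lay inside $Q$ is forbidden from $R$ anyway. So $\mathcal A_Q$ gains $rm$ vertices.

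For $\mathcal B_R$, the plan is to exploit the asymmetry differently. Run the whole procedure with the roles reversed in a second phase: restrict the leaves to $R$ instead of $Q$, using a palette split with probability $3/4$ for $R$, so that centers land in $Q$. Alternatively, use sufficiently many centers in a multi-round trial as in Lemma~\ref{lem:pair-bootstrap}, interleaving easy-solver calls with lucky moves of probability $p_0$. Either way the potential $\min(|\mathcal A_Q|,|\mathcal B_R|)$ should grow by $\Theta(r)$ per lucky move at a cost of $\log(1/p_0)=O(k)+O(r)\log(4/3)$ bits.

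The delicate point is to verify that $\log(4/3)<1$ leaves a positive margin. I expect this balancing, namely pairing a $Q$-heavy and an $R$-heavy family so that both supported sets grow linearly in $r$ per guessed bit, to be the crux. After it, $d$ and $\gamma<1$ are chosen exactly as in the warm-up, and the result follows. The final algorithm has one-sided error because every output coloring is verified.
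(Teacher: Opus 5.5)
\textbf{There is a genuine gap: the balancing step that makes the accounting work is missing.} Your Steps 1--3 match the paper's construction: the seed solver, a constellation of vertex-disjoint $r$-leaf stars, the biased split $\Pr[c\in R]=1/4$, and the bound $4^{-k}(3/4)^r$. But Step~4 leaves the balancing open, and the fixes you sketch fail.
\begin{itemize}
\item With one global partition, forcing leaves into the probability-$1/4$ side costs $2$ bits per leaf, while each leaf saves at most $1$ bit.
\item In a multi-round variant, each lucky move adds $r$ vertices to $\mathcal A_Q$ but only one center to $\mathcal B_R$. So $\min(|\mathcal A_Q|,|\mathcal B_R|)$ grows by $O(1)$ per move, not $\Theta(r)$.
\end{itemize}

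\textbf{The missing idea is that the $R$-side needs no guessing at all.} Once $R$ is sampled, $\mathcal B_R=\{v:L(v)\subseteq R\}$ is read off from the lists. On your good event it already contains the centers of \emph{all} successful stars in the whole constellation, so $|\mathcal B_R|\geq\mu'n/2$. The algorithm only checks this size, and no center list is restricted or guessed. So decouple the number of guessed stars from $|\mathcal B_R|$: guess only $m\leq\mu'n/(4r)$ stars and restrict only their leaves to $Q$. Then:
\begin{itemize}
\item $|\mathcal A_Q|\geq rm$ and $|\mathcal B_R|\geq rm$, so Corollary~\ref{cor:two-block-list} costs $\Oh(2^{n-rm})$.
\item A uniform $m$-subset of the $\delta n$ stars consists only of successful stars with probability at least $(\mu'/(4\delta))^m$.
\item The net exponent is $n-m\bigl(r\log(3/2)-2k-\log(4\delta/\mu)\bigr)$. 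Since $\mu/\delta$ does not depend on $r$, this is linearly below $n$ once $r$ is large.
\end{itemize}
Intuitively, each leaf saves one bit but costs only $\log(4/3)<1$ bit. This is exactly how the paper balances the two sides.

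\textbf{There is also a smaller gap in Step~2.} You account only for centers consumed by the used set $W$. But the leaves must be drawn from $N_H(v)$ with $H=G[V(G)\setminus W]$, and deleting $W$ can push a linear number of degrees to at most $\Delta$. For example, if the high degrees come from a few hub vertices, early leaves use them up. After that, neither your $1/4$ leaf bound nor your initial easy-solver call (made on $G$) covers the instance.

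The paper handles this with a low-degree exit. Whenever at least $\alpha n$ vertices of $H$ have degree at most $\Delta$, it enumerates the list-colorings of $G[W]$ and runs Theorem~\ref{thm:low-degree} on each residual instance on $H$. This costs $\Oh\bigl(k^{|W|}C_{\mathrm{LD}}^{\,n-|W|}\bigr)$. Two consequences follow:
\begin{itemize}
\item The fraction $|W|/n$, i.e.\ your $\delta$, must be fixed after $C_{\mathrm{LD}}$ is known, hence after $r$ and $\Delta$.
\item The reverse-Markov count of clean stars must credit the exit; the paper scores exit steps as one.
\end{itemize}
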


\begin{proof}
Apply Lemma~\ref{lem:seed} with the present values of \(k,a\), and let \(\beta_0,\Delta_1,C_S\) be the constants it supplies, taking \(\beta_0\leq1\). Put
\[
   \alpha:=\frac{\beta_0}{4},
   \qquad
   \rho:=\frac{\beta_0}{16}.
\]
Choose an integer \(r\geq2\) sufficiently large that
\[
   r\log_2(3/2)>2k+\log_2(4/\rho),
\]
and set
\[
   B:=k\Delta_1,
   \qquad
   \Delta:=r(1+B).
\]
Let \(1\leq C_{\mathrm{LD}}<2\) be the base supplied by Theorem~\ref{thm:low-degree} for these \(k,\alpha,\Delta\). Choose \(\delta>0\) sufficiently small that
\[
   \delta\leq\frac{\beta_0}{4},
   \qquad
   k^\delta C_{\mathrm{LD}}^{1-\delta}<2.
\]
This order is important: the sampling size \(r\) is fixed before the degree threshold \(\Delta\), and the fraction \(\delta\) is chosen after the bounded-degree algorithm's base is known.

Finally, define
\[
   \gamma:=\frac{\rho}{2}\,4^{-k}(3/4)^r,
   \qquad
   \kappa:=r-\log_2(2/\gamma)
          =r\log_2(3/2)-2k-\log_2(4/\rho)>0,
\]
and, for an input with \(n\) vertices, put
\[
   m:=\left\lfloor\frac{\delta n}{r+1}\right\rfloor,
   \qquad
   \ell:=\left\lfloor\frac{\gamma m}{4r}\right\rfloor.
\]
If \(\ell=0\), then \(n\) is bounded by a constant depending only on the fixed parameters, and the instance may be solved by exhaustive search. We henceforth assume \(\ell\geq1\). Reject an empty list immediately and exhaust rule~(N) from Lemma~\ref{lem:normalization}. Singleton vertices are retained. Every trial described below starts from a fresh copy of this instance, and every subroutine call and sample uses fresh independent randomness.

First run the seed solver on a copy of the input. If it finds a coloring, return it. Otherwise construct a constellation of \(m\) vertex-disjoint stars as follows. Maintain a set \(W\) of vertices used in previous stars, initially empty, and let
\[
   H:=G[V(G)\setminus W].
\]
Before each step, if at least \(\alpha n\) vertices have degree at most \(\Delta\) in \(H\), enumerate all proper list-respecting colorings of \(G[W]\). For each such coloring, on a fresh copy of the lists on \(H\), delete its colors from the neighboring lists and invoke the bounded-degree solver, skipping branches with empty lists. If a coloring is found, extend it by the coloring of \(W\) and return it; otherwise end the trial. We call this the \emph{low-degree exit}.

The low-degree exit is valid because the \(\alpha n\) low-degree vertices form at least an \(\alpha\)-fraction of \(V(H)\), and fixing the colors of \(W\) does not increase their degrees or list sizes. Moreover, \(|W|\leq\delta n\), so its cost is at most
\[
   \Oh\!\left(k^{|W|}C_{\mathrm{LD}}^{n-|W|}\right)
   \leq
   \Oh\!\left((k^\delta C_{\mathrm{LD}}^{1-\delta})^n\right).
\]
Thus it has base smaller than two. On a colorable instance the enumerated branch agreeing with a witness is retained, and the solver succeeds with probability at least \(1/2\), after constant amplification if needed.

If the low-degree exit does not apply, choose uniformly an active vertex \(v\in V(H)\) with \(\deg_H(v)>\Delta\), and choose a uniformly random \(r\)-element set \(T\subseteq N_H(v)\). If there is no such vertex, end the trial. Otherwise we add \((v,T)\) to our constellation and put
\[
   W\leftarrow W\cup\{v\}\cup T.
\]
The stars are vertex-disjoint by construction, although edges among their leaves or between different stars are allowed.

We next analyze this constellation relative to a fixed witness coloring \(c\). If at most \(\beta_0n\) active vertices are bad, the first call to the seed solver already succeeds with constant probability. Suppose therefore that more than \(\beta_0n\) active vertices are bad. Call a sampled star \((v,T)\) \emph{clean} if
\[
   c(T)\cap L(v)=\varnothing.
\]
As long as the low-degree exit has not occurred, more than
\[
   \beta_0n-|W|-\alpha n\geq\frac{\beta_0n}{2}
\]
of the original bad active vertices remain unused and have degree greater than \(\Delta\) in \(H\). Their lists have not changed, and deleting vertices cannot increase the number of neighbors with witness color in any one color. Hence an eligible center exists, and the probability that the chosen center is bad is at least \(\beta_0/2\).

Condition on choosing such a center. Since it is bad, at most \(k\Delta_1=B\) of its remaining neighbors have witness color in \(L(v)\). Expose the \(r\) sampled neighbors one at a time. Before each draw at least \(\Delta-r=rB\) neighbors remain available, so each draw avoids these exceptional neighbors with probability at least \(1-1/r\). Thus
\[
   \Pr\!\left[c(T)\cap L(v)=\varnothing
       \,\middle|\,v\text{ is bad}\right]
   \geq(1-1/r)^r\geq\frac14.
\]
Consequently, conditional on every preceding constellation, the next star is clean with probability at least \(\beta_0/8=2\rho\), unless the low-degree exit occurs.

To account for that possible exit, give each completed step score one if its star is clean and zero otherwise. If the low-degree exit occurs, give that step and all remaining steps score one. The total score \(Y\) satisfies
\[
   \mathbb E[Y]\geq2\rho m,
   \qquad 0\leq Y\leq m,
   \qquad \Pr[Y\geq\rho m]\geq\rho.
\]
The last inequality follows from \(\mathbb E[Y]\leq\rho m+m\Pr[Y\geq\rho m]\). Therefore, with probability at least \(\rho\), either the low-degree exit occurs or the final constellation has at least \(\rho m\) clean stars. 

Suppose the constellation is completed, and write its stars as \((v_i,T_i)\), \(1\leq i\leq m\). Independently put each color of \(P\) in \(R\) with probability \(1/4\), and in \(Q\) with probability \(3/4\). Call a clean star \emph{successful} if
\[
   L(v_i)\subseteq R,
   \qquad c(T_i)\subseteq Q.
\]
For a clean star the two sets of colors in these requirements are disjoint, so
\[
   \Pr[\text{star }i\text{ is successful}]
   =4^{-|L(v_i)|}(3/4)^{|c(T_i)|}
   \geq4^{-k}(3/4)^r.
\]
If at least \(\rho m\) stars are clean, the number \(X\) of successful stars therefore satisfies \(\mathbb E[X]\geq2\gamma m\) and \(0\leq X\leq m\). The same expectation argument gives
\[
   \Pr[X\geq\gamma m]\geq\gamma.
\]
Note that we didn't assume or use independence between stars, as their lists and witness colors may overlap arbitrarily.

For the sampled partition, compute the set~$\mathcal B_R$ of all centers~$v$ of stars in the constellation for which $L(v)\subseteq R$.
End the trial if \(|\mathcal B_R|<r\ell\). Otherwise choose uniformly an \(\ell\)-element subset \(J\subseteq[m]\), let
\[
   A:=\bigcup_{i\in J}T_i,
\]
and, on a fresh copy of the input, restrict \(L(u)\) to \(L(u)\cap Q\) for every \(u\in A\). Skip this selection if an empty list is produced; otherwise invoke Corollary~\ref{cor:two-block-list}.

We first verify the deterministic saving for every non-aborting selection. The stars are disjoint, so \(|A|=r\ell\). We have \(A\cap\mathcal B_R=\varnothing\), and all the lists on \(\mathcal B_R\) remain unchanged. The two-block terminal consequently runs in
\[
   \Oh\bigl(2^{n-|A|}+2^{n-|\mathcal B_R|}\bigr)
   =\Oh(2^{n-r\ell}).
\]
This is a worst-case bound and does not rely on the selection having preserved \(c\).

On the event \(X\geq\gamma m\), the distinct centers of successful stars show that \(|\mathcal B_R|\geq\gamma m\geq r\ell\), so the cardinality check succeeds. Since \(\ell\leq\gamma m/(4r)\), a uniform selection of \(\ell\) stars consists entirely of successful stars with probability at least
\[
   \prod_{j=0}^{\ell-1}\frac{X-j}{m-j}
   \geq(\gamma/2)^\ell.
\]
Such a selection preserves the witness. Make
\[
   N_{\mathrm{sel}}:=\left\lceil3(2/\gamma)^\ell\right\rceil
\]
independent selections for this partition, returning a coloring if any succeeds. Conditional on \(X\geq\gamma m\), the probability that all fail is at most \(e^{-3}\). Their total cost is
\[
   \Oh\!\left((2/\gamma)^\ell 2^{n-r\ell}\right)
   =\Oh(2^{n-\kappa\ell}).
\]
This is why we chose a biased palette partition: the cost \(4^k\) is paid once per selected star, whereas its \(r\) leaves each contribute a saving, and the factor \((4/3)^r\) from guessing their colors' side is smaller than \(2^r\).

Combining the constellation and palette probabilities, one entire trial on a hard colorable instance succeeds with probability at least \(\rho\gamma/2\). Indeed, a low-degree exit succeeds with probability at least \(1/2\); a completed packing with \(\rho m\) clean stars succeeds with probability at least \(\gamma(1-e^{-3})\geq\gamma/2\). If the seed condition holds instead, the first call succeeds with probability at least \(1/2\), which is also sufficient. Repeat the entire trial independently
\[
   N_{\mathrm{tr}}:=\left\lceil\frac{2n}{\rho\gamma}\right\rceil
\]
times. The probability of missing a coloring is at most \(e^{-n}\).

It remains to combine the running times. The seed solver and low-degree exit are invoked at most once per trial. Only the two-block calls are repeated \(N_{\mathrm{sel}}\) times. Since
\[
   \ell=\frac{\gamma\delta}{4r(r+1)}n+O(1),
\]
the total cost is bounded by \(\Oh((2-\varepsilon)^n)\), where
\[
   2-\varepsilon
   :=\max\left\{
       C_S,\ k^\delta C_{\mathrm{LD}}^{1-\delta},\quad
       2^{\,1-\kappa\gamma\delta/(4r(r+1))}
     \right\}<2.
\]
All other work, including the \(N_{\mathrm{tr}}\) outer trials, contributes only polynomial factors for fixed \(k,a\). Every acceptance is sound: colorings returned by subroutines are verified, fixing colors in \(W\) is checked against the original lists and edges, and the terminal instances only shrink lists. This proves the theorem.
\end{proof}

For completeness, Algorithm~\ref{alg:general-bootstrap} gives the entire reduction. Calls to the seed solver, the bounded-degree solver, and the two-block algorithm are left as the named black boxes already analyzed above; the sampling and list modifications are shown explicitly.

\vspace{11pt}
\begin{algorithm}[H]
\caption{The list-to-list bootstrap}
\label{alg:general-bootstrap}
\KwIn{A list-coloring instance \(I=(G,L)\) over an arbitrary palette \(P\), with \(|L(v)|\leq k\) for every \(v\)}
\KwOut{\(\mathsf{YES}\) or \(\mathsf{NO}\)}
Let the constants and repetition counts be those in Theorem~\ref{thm:general-bootstrap}\;
\lIf{a list is empty}{\Return \(\mathsf{NO}\)}
\(I_0\gets\textsc{Normalize-(N)}(I)\); \((G,L)\gets I_0\)\;
\lIf{\(\ell=0\)}{\Return \(\textsc{ExhaustiveListColoring}(I_0)\)}
\For{\(j\gets1\) \KwTo \(N_{\mathrm{tr}}\)}{
    \lIf{\(\textsc{SeedSolver}(I_0)\) finds a coloring}{\Return \(\mathsf{YES}\)}
    \(W\gets\varnothing\)\;
    \For{\(i\gets1\) \KwTo \(m\)}{
        \(H\gets G[V(G)\setminus W]\)\;
        \If{at least \(\alpha n\) vertices of \(H\) have degree at most \(\Delta\)}{
            Enumerate proper list-colorings of \(G[W]\), propagate them, and run the bounded-degree solver on each feasible residual instance on \(H\)\;
            \lIf{an extension is found}{\Return \(\mathsf{YES}\)}
            \NextTrial\;
        }
        \lIf{no active vertex of \(H\) has degree greater than \(\Delta\)}{\NextTrial}
        Choose such a vertex \(v_i\) uniformly, and a uniform \(r\)-subset \(T_i\subseteq N_H(v_i)\)\;
        \(W\gets W\cup\{v_i\}\cup T_i\)\;
    }
    Put each color in \(R\) independently with probability \(1/4\); set \(Q\gets P\setminus R\)\;
    \(\mathcal B_R\gets\{v:L(v)\subseteq R\}\)\;
    \lIf{\(|\mathcal B_R|<r\ell\)}{\NextTrial}
    \For{\(h\gets1\) \KwTo \(N_{\mathrm{sel}}\)}{
        Choose a uniform \(\ell\)-subset \(J\subseteq[m]\); set \(A\gets\bigcup_{i\in J}T_i\)\;
        \(I'\gets I_0\); replace \(L_{I'}(u)\) by \(L(u)\cap Q\) for all \(u\in A\)\;
        \If{no list is empty}{
            \lIf{\(\textsc{TwoBlockListColoring}(I';Q,R)\) returns \(\mathsf{YES}\)}{\Return \(\mathsf{YES}\)}
        }
    }
}
\Return \(\mathsf{NO}\)\;
\end{algorithm}
\vspace{11pt}

We finally deduce as a corollary our main theorem.

\begin{theorem}[$k$-Coloring and List Coloring]\label{thm:main}
For every fixed integer \(k\geq1\), there is an \(\varepsilon_k>0\) such that \(k\)-coloring can be solved with exponentially small one-sided error in
\[
   \Oh\bigl((2-\varepsilon_k)^n\bigr)
\]
time. The same bound holds for \(k\)-list-coloring over an arbitrary palette.
\end{theorem}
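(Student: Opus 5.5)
The plan is to prove the list-coloring statement first and then read off $k$-coloring as a special case. We argue by induction on $k$, with Theorem~\ref{thm:general-bootstrap} as the inductive step. Since $k$-coloring is $k$-list-coloring over the palette $[k]$ with every list equal to $[k]$, the first assertion follows immediately from the second.

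\emph{Base cases.} For $k=1$, every list has size at most one. We reject if some list is empty, and otherwise the coloring is forced; we then check in polynomial time that no edge joins two vertices carrying the same singleton color. For $k=2$, list-coloring with lists of size at most two over an arbitrary palette is a $2$-SAT instance. Each vertex $v$ with $L(v)=\{c_1,c_2\}$ gets one Boolean variable selecting its color. Each edge $uv$ and each color $c\in L(u)\cap L(v)$ contributes the clause forbidding both endpoints from taking $c$. Singleton lists are handled by fixing the variable, or by the propagation of Lemma~\ref{lem:normalization}. This runs in polynomial time, so the case $k=2$ holds with any base $a<2$, for instance $a=1$ up to polynomial factors.

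\emph{Inductive step.} For $k\geq3$, suppose $(k-1)$-list-coloring over arbitrary palettes has a randomized $\Oh(a_{k-1}^n)$ algorithm with $a_{k-1}<2$. Then Theorem~\ref{thm:general-bootstrap} gives a randomized $\Oh((2-\varepsilon)^n)$ algorithm for $k$-list-coloring over arbitrary palettes, where $\varepsilon>0$ depends only on $k$ and $a_{k-1}$. We set $a_k:=2-\varepsilon$. Since $k$ is fixed, only finitely many applications are needed. The resulting $\varepsilon_k$ is a positive constant depending only on $k$, even though it may be tiny.

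\emph{Error probability.} The step I expect to need the most care is checking that the error type survives the iteration: the hypothesis of Theorem~\ref{thm:general-bootstrap} only asks for a randomized algorithm, and we must verify the output is again one with exponentially small one-sided error. Errors are one-sided because every accepted coloring is verified against the original lists and edges, so a no-instance is never accepted, as the proof of Theorem~\ref{thm:general-bootstrap} notes. The failure probability on yes-instances inside each call to the $(k-1)$ algorithm can be amplified by polynomially many repetitions. This keeps the per-call error below $e^{-n}$ divided by the number of calls, so a union bound over the exponentially many calls only costs a polynomial factor. The outer repetition count $N_{\mathrm{tr}}$ then brings the total false-negative probability to at most $e^{-n}$. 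Finally, the decision algorithm yields an explicit coloring within the same bound by Lemma~\ref{lem:list-decision-to-search}.
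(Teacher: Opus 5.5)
Your proposal is correct and follows essentially the same route as the paper: the base case $k\le 2$ is handled by 2-SAT over an arbitrary palette, then Theorem~\ref{thm:general-bootstrap} is applied successively for list sizes $3,\dots,k$, and the result is specialized to $L(v)=[k]$. Your extra bookkeeping on one-sided error is consistent with the paper; the union bound over all calls is more than you need, since verification rules out false positives and only the branch agreeing with the witness must succeed, but it does no harm.
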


\begin{proof}
For \(k\leq2\), list-coloring is polynomial-time solvable by 2-SAT~\cite{APT1979}, even over an arbitrary palette: introduce a Boolean variable for the choice at each vertex, and forbid each pair of adjacent choices that use the same color. Singleton lists give forced choices. For \(k\geq3\), apply Theorem~\ref{thm:general-bootstrap} successively for list sizes \(3,4,\ldots,k\). The saving at every level depends only on the maximum list size, so the output algorithm at one level satisfies precisely the hypothesis needed at the next. Taking \(L(v)=[k]\) for every vertex gives ordinary \(k\)-coloring.
\end{proof}

We remark that ordinary $K$-coloring is the special case $L(v)=P=[K]$ for every vertex. Conversely, list coloring over a palette of size $K$ reduces to $K$-coloring by adding a $K$-clique representing the colors and joining each vertex to the clique vertices corresponding to its forbidden colors. Thus, for fixed $K$, the two formulations are equivalent up to $K$ additional vertices; on the other hand, for an unbounded~$K$ the~$k$-list-coloring problem is strictly richer.

\section{Discussion and open problems}\label{sec:discussion}

Our main contribution is a resolution to the long-standing natural question about the exact running times of graph coloring algorithms; we showed that for any~$k\in \mathbb{N}$ there exists~$\varepsilon_k>0$ such that~$k$-coloring can be solved in~$(2-\varepsilon_k)^n$ time. This comes in comparison to the~$\Oh(2^n)$ time algorithm that generally computes the chromatic number of a graph.
Our algorithm also solves $k$-list-coloring for fixed $k$, even when the overall palette is unbounded.

In Appendix~\ref{sec:constants}, we repeat the algorithm while performing a tedious bookkeeping, which shows that without further optimization, the savings~$\varepsilon_k$ given by our algorithms are very small: they behave like the reciprocal of an exponential tower of height~$\Theta(k)$.
The most immediate quantitative problem is thus to improve the asymptotic behavior of $\varepsilon_k$. For~$k$-SAT, which exhibits a similar behavior (general SAT has a~$\Oh(2^n)$ time algorithm while every fixed~$k$-SAT has an algorithm running in~$(2-\varepsilon'_k)^n$ time), an extensive body of works accumulated across decades to improve the constants~$\varepsilon'_k$ or conjecture their asymptotic behavior~\cite{MoSp5,Rodosek96,PPZ99,PPSZ05,schoning1999probabilistic,Hertli14a,Hertli14b,ScSt17,hansen2019faster,ImPa01}. We expect a similar progress to now be possible for~$k$-coloring. The reciprocal-tower estimate in Section~\ref{sec:constants} seems unlikely, or at least currently unjustified, to describe the intrinsic complexity of $k$-coloring.

A separate question remains for binary-list interpolation, where the other lists need not have bounded size. If $b$ of the $n$ vertices have lists of size at most two, what is the smallest palette-independent $q\in (1,2)$ for which one can achieve running time
\[
   \Oh\bigl(2^{n-b}q^b\bigr)?
\]
Theorem~\ref{thm:binary-interpolation} gives $q_K=2^{1-1/\binom K2}$ on a palette of size $K$.  Under ETH, we excluded $q=1$. Is there a palette-independent constant $q<2$? A positive answer would simplify the warm-up reductions and give an interpolation bound independent of both the palette size and the sizes of the other lists.

Two further natural questions are whether the randomization can be removed and whether a sub-\(2^n\) running time can be achieved using only polynomial space. The latter is open even for computing the chromatic number: all known \(\Oh(2^n)\)-time algorithms use exponential space~\cite{WuEtAl2026,GaspersLee2023}.

\section*{Acknowledgments}
Shortly after the first version of this manuscript was posted, the author was contacted by Kevin Pratt, who had concurrently and independently obtained a sub-$2^n$ algorithm for $k$-coloring and posted his manuscript shortly thereafter~\cite{pratt2026breaking2nbarriergraph}.
The two proofs share some mechanisms and both build on the framework and low-degree algorithm of~\cite{Zamir2021}, but their additional main components are substantially different. In particular, Pratt's approach gives better quantitative bounds on the constants~$\varepsilon_k$, while the recursive approach developed here extends naturally to list coloring.

The first version of this manuscript established the list-coloring extension only when the overall color palette has bounded size. The present revision removes this restriction by a small modification of the argument. This extension was developed after useful discussions with Kevin Pratt as well as AI-assisted exploration.

\bibliographystyle{alpha}
\bibliography{refs}

\appendix

\section{An ETH-based lower bound for size-two lists interpolation}
\label{sec:eth}

We prove the following proposition.

\begin{proposition}[Palette-independent interpolation requires \(q>1\)]
\label{prop:no-q-one}
Suppose that there is a constant $q\in[1,2]$, independent of the palette, such that the following holds for every fixed finite palette~$P$: an instance over~$P$ with $u$ vertices whose lists have size at least three and $b$ vertices whose lists have size at most two can be solved in
\[
   \Oh\!\left(2^u q^b\right)
\]
time.  Assuming ETH, there is an absolute constant $q_{\mathrm{ETH}}>1$ such that necessarily $q\geq q_{\mathrm{ETH}}$.  In particular, no such algorithm exists with $q=1$.
\end{proposition}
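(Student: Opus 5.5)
We reduce from Traxler's lower bound~\cite{Traxler2008}: assuming ETH, there are a fixed domain size $d$, a fixed frequency bound $f$, and a constant $\alpha>0$ such that $(d,2)$-CSP on $N$ variables, in which each variable occurs in at most $f$ constraints, cannot be solved in $O(2^{\alpha N\log d})$ time. The plan is to encode such a CSP as a list-coloring instance over a palette $P$ that depends only on $d$ and $f$. The encoding should have $u=O(N)$ vertices with long lists and $b=O(N)$ vertices with lists of size two, with both hidden constants depending only on $d,f$. An $\Oh(2^uq^b)$ algorithm then runs in $2^{c_1N}q^{c_2N}$ time. Compared with $d^{\alpha N}$, this forces $c_1+c_2\log q\geq\alpha\log d$. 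The key is to make $c_1$ small relative to $\alpha\log d$; then $\log q\geq(\alpha\log d-c_1)/c_2>0$, and this quantity is $q_{\mathrm{ETH}}$.

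\textbf{Encoding.} Each CSP variable $x$ with domain $[d]$ becomes one vertex $v_x$ with list $L(v_x)=\{x\text{-specific colors }(x,1),\dots,(x,d)\}$. We avoid a palette of size growing with $N$ by first properly coloring the constraint graph: bounded frequency gives bounded degree, hence $f+1$ classes. Variables in the same class then share the palette block $\{(j,1),\dots,(j,d)\}$, where $j$ is the class. Each forbidden pair ``$x=a\wedge y=b$'' is enforced with size-two gadget vertices. Introduce a fresh color $e$ from a bounded set, chosen by an edge-coloring of the bounded-degree auxiliary structure so that colors are reused only far apart. Add a vertex $w$ with $L(w)=\{(j_x,a)',e\}$, together with companion vertices of list size two that transmit ``$v_x$ uses color $(j_x,a)$'' into ``$w$ is forced to $e$''. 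The combination of $w$ and a symmetric gadget at $y$ then forbids the pair. The simplest version attaches to $v_x$ a neighbor $w$ with list $\{(j_x,a),e\}$: if $v_x=(j_x,a)$, then $w=e$. Another neighbor $w'$ of $v_y$ has list $\{(j_y,b),e\}$, and $w,w'$ are adjacent. If both literals hold, both gadget vertices must take $e$, which is a conflict. Otherwise at least one of them can escape to its non-$e$ color. We must check that this escape color does not conflict elsewhere. The remedy is to make each gadget's color $(j_x,a)$ appear only on $v_x$'s neighbors in this gadget, possibly by using fresh copies of colors indexed by a bounded edge-coloring. Thus $u=N$ (the lists of size $d\geq3$), $b\leq 2d^2fN$, and $|P|=O_{d,f}(1)$.

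\textbf{Parameters.} The main obstacle is the quantitative tradeoff. The lower bound gives $d^{\alpha N}$, while the cost of the long lists is only $2^N$, so we need $\alpha\log d>1$ strictly. Traxler's bound requires $\alpha\log d$ to grow with $d$, since $(d,2)$-CSP needs $d^{\alpha N}$ time. We therefore take $d$ large enough that $\alpha\log d\geq2$. Then $2^{N}q^{c_2N}\geq d^{\alpha N}$ gives $\log q\geq 1/c_2$, where $c_2=2d^2f$ is fixed. We expect the main care to be needed in verifying that the gadgets create no spurious conflicts, using the bounded-degree recoloring of palette blocks, and in confirming that Traxler's constant $\alpha$ is independent of $d$ for large $d$, which is the form of his theorem. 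Palette independence of $q$ is used exactly here: the palette size grows with the chosen $d$, but $d$ is a single absolute constant.
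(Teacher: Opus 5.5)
Your proposal is correct and follows essentially the paper's argument. Both reduce from Traxler's bounded-frequency $(d,2)$-CSP, keep the palette fixed via a proper coloring of the primal graph, and give each variable one vertex whose list is a $d$-color block. Both then encode each forbidden assignment with binary-list vertices forced onto a special color when the variable takes the offending value, and choose $d$ large so that $d^{\alpha}>2$ beats the $2^{N}$ paid for the long lists.

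The only real difference is in the gadget accounting. The paper shares one selector $s_{x,a}$ per variable--value pair, so only $dN$ selectors are added. This makes $q_{\mathrm{ETH}}=(d_0^{\alpha}/2)^{1/d_0}$ independent of the frequency bound $F(d)$. Your two fresh vertices per forbidden pair give $b=O(d^2F(d)N)$ and the weaker, but still absolute, $q_{\mathrm{ETH}}=2^{1/c_2}$.

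Your ``simplest version'' is already complete with a single shared extra color. Each gadget vertex has only two neighbors, $v_x$ and its partner, so you give the escaping vertex its non-special color and the partner the special one. The fresh colors, edge-colorings and companion vertices are therefore unnecessary. Replacing $(j_x,a)$ by a copy outside $L(v_x)$ would actually destroy the forcing, so do not adopt that ``remedy''.
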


\begin{proof}
We construct a reduction from bounded-frequency general $(d,2)$-CSP. As cited in Section~\ref{sec:background}, Traxler~\cite{Traxler2008} proved that there is an absolute constant $c>0$ such that, for every fixed domain size $d$, there is a constant $F(d)$ for which $(d,2)$-CSP on $n$ variables requires time $d^{cn}$ under ETH even when every variable occurs in at most $F(d)$ constraints. The important points for us are that $c$ is independent of~$d$ and that $F(d)$ is independent of~$n$.

Fix a domain size~$d$, and consider such a bounded-frequency CSP instance with variable set~$\mathcal X$. We build a list-coloring instance out of it. Absorb every unary constraint into a set $A_x\subseteq[d]$ of allowed values (colors) for its variable~$x$.  If some $A_x$ is empty, the instance is immediately unsatisfiable.  Every binary constraint on distinct variables $x,y$ is equivalently the conjunction of its forbidden assignments
\[
   (x\neq a)\lor(y\neq b),
\]
one for every pair $(a,b)\in[d]^2$ disallowed by that constraint.

Let $J$ be the \emph{primal graph} of the CSP: its vertices are the variables, and two variables are adjacent if they occur together in a binary constraint.  Since every variable occurs in at most $F(d)$ constraints, $J$ has maximum degree at most $F(d)$.  We can therefore greedily compute a proper coloring
\[
   \tau:\mathcal X\longrightarrow[F(d)+1].
\]
The value $\tau(x)$ will serve only as a role identifying the variable $x$ among the variables that may interact with it.

Construct a list-coloring instance over the palette
\[
   P_d=\{\bot\}\cup\bigl([F(d)+1]\times[d]\bigr).
\]
Since $d$ is fixed, this is a fixed palette.  For each CSP variable $x\in\mathcal X$, introduce a \emph{variable vertex} $v_x$ with list
\[
   L(v_x)=\{(\tau(x),a):a\in A_x\}.
\]
For every $x\in\mathcal X$ and every $a\in[d]$, introduce a \emph{selector vertex} $s_{x,a}$, add the edge $(v_x,s_{x,a})$, and set
\[
   L(s_{x,a})=\{\bot,(\tau(x),a)\}.
\]
Finally, for every forbidden assignment $(x=a,y=b)$, add the edge
\[
   (s_{x,a},s_{y,b}).
\]
All selector vertices have lists of size exactly two. See Figure~\ref{fig:eth-selector-reduction} for an illustration of the construction.

\begin{figure}[t]
\centering
\begin{tikzpicture}[
    x=1cm,y=1cm,
    every node/.style={font=\small},
    varvertex/.style={
      circle,draw=black,very thick,fill=white,
      minimum size=8.5mm,inner sep=0pt
    },
    selector/.style={
      circle,draw=black!65,line width=.45pt,fill=black!18,
      minimum size=6mm,inner sep=0pt
    },
    active/.style={
      circle,draw=black,very thick,fill=white,
      minimum size=6mm,inner sep=0pt
    },
    incidence/.style={draw=black!27,line width=.45pt},
    forcing/.style={draw=black!65,line width=.9pt},
    constraint/.style={draw=black,line width=1.4pt}
]

\coordinate (vx) at (-3.35,.90);
\coordinate (vy) at ( 3.35,.90);

\coordinate (xa) at (-1.45,-.25);
\coordinate (x1) at (-5.05,-.25);
\coordinate (x2) at (-4.15,-.25);
\coordinate (x3) at (-3.25,-.25);
\coordinate (x4) at (-2.35,-.25);

\coordinate (yb) at ( 1.45,-.25);
\coordinate (y1) at ( 5.05,-.25);
\coordinate (y2) at ( 4.15,-.25);
\coordinate (y3) at ( 3.25,-.25);
\coordinate (y4) at ( 2.35,-.25);

\foreach \u in {x1,x2,x3,x4}
  \draw[incidence] (vx)--(\u);
\foreach \u in {y1,y2,y3,y4}
  \draw[incidence] (vy)--(\u);

\draw[forcing] (vx)--(xa);
\draw[forcing] (vy)--(yb);
\draw[constraint] (xa)--(yb);

\node[varvertex] at (vx) {\(v_x\)};
\node[varvertex] at (vy) {\(v_y\)};

\foreach \u in {x1,x2,x3,x4,y1,y2,y3,y4}
  \node[selector] at (\u) {};

\node[active] at (xa) {\(\bot\)};
\node[active] at (yb) {\(\bot\)};

\node at (-1.45,-.75) {\(s_{x,a}\)};
\node at ( 1.45,-.75) {\(s_{y,b}\)};

\node[align=center] at (-3.35,1.68) {
  \(L(v_x)=\{(\tau(x),r):r\in A_x\}\)
};
\node[align=center] at (3.35,1.68) {
  \(L(v_y)=\{(\tau(y),r):r\in A_y\}\)
};

\node[align=center] at (-3.35,-1.35) {
  Add \(s_{x,r}\) for every \(r\in[d]\), with\\[-1pt]
  \(L(s_{x,r})=\{\bot,(\tau(x),r)\}\)
};
\node[align=center] at (3.35,-1.35) {
  Add \(s_{y,r}\) for every \(r\in[d]\), with\\[-1pt]
  \(L(s_{y,r})=\{\bot,(\tau(y),r)\}\)
};

\node at (0,.23) {
  \((x\neq a)\lor(y\neq b)\)
};

\end{tikzpicture}
\caption{Encoding one forbidden assignment of a binary CSP constraint.
Each CSP variable \(z\in\{x,y\}\) is represented by a variable vertex
\(v_z\) and one binary-list selector \(s_{z,r}\) for every \(r\in[d]\).
If \(v_x\) receives \((\tau(x),a)\), its edge to \(s_{x,a}\) forces that
selector to receive \(\bot\), and similarly \(y=b\) forces
\(s_{y,b}\) to receive \(\bot\).  The emphasized selector edge therefore
forbids the two assignments simultaneously.  Since
\(\tau(x)\neq\tau(y)\), the selectors' non-\(\bot\) colors do not create
an unintended conflict.}
\label{fig:eth-selector-reduction}
\end{figure}

We claim that the CSP instance is satisfiable if and only if the constructed list instance is colorable.  Suppose first that $f:\mathcal X\to[d]$ is a satisfying CSP assignment.  Color
\[
   v_x\ \text{by}\ (\tau(x),f(x)),
\]
color $s_{x,f(x)}$ by~$\bot$, and color every other selector $s_{x,a}$ by $(\tau(x),a)$.  Every edge between a variable vertex and one of its selectors is proper.  Now consider an edge $(s_{x,a},s_{y,b})$ corresponding to a forbidden assignment.  Its two endpoints cannot both receive~$\bot$, since that would mean $f(x)=a$ and $f(y)=b$.  If both endpoints receive their non-$\bot$ colors, then those colors are also different because $(x,y)\in E(J)$ and~$\tau$ is a proper coloring, hence $\tau(x)\neq\tau(y)$.  Thus the constructed coloring is proper.

Conversely, consider any list coloring of the constructed instance.  The color of $v_x$ uniquely determines a value $f(x)\in A_x$ through
\[
   c(v_x)=(\tau(x),f(x)).
\]
The edge $(v_x,s_{x,f(x)})$ then forces $c(s_{x,f(x)})=\bot$.  If $f$ selected a forbidden assignment $(x=a,y=b)$, both endpoints of the edge $(s_{x,a},s_{y,b})$ would consequently receive~$\bot$, contradicting properness.  Therefore $f$ satisfies every CSP constraint, proving the claim.

It remains to count the two types of lists.  Let
\[
   t=\bigl|\{x\in\mathcal X:|A_x|\leq2\}\bigr|.
\]
There are $n-t$ variable vertices with lists of size at least three and $t$ variable vertices with lists of size at most two.  In addition, there are exactly $dn$ selector vertices, all with binary lists.  Hence the constructed instance has
\[
   u=n-t,
   \qquad
   b=dn+t.
\]
The assumed interpolation algorithm would solve it, for $q\leq2$, in
\begin{align*}
   \Oh\!\left(2^u q^b\right)
   &=\Oh\!\left(2^{n-t}q^{dn+t}\right)\\
   &=\Oh\!\left((2q^d)^n(q/2)^t\right)\\
   &\leq \Oh\!\left((2q^d)^n\right).
\end{align*}

Choose a sufficiently large fixed integer $d_0$ such that $d_0^c>2$ and $\left(d_0^c/2\right)^{1/d_0}<2$, and define
\[
   q_{\mathrm{ETH}}
   =\left(\frac{d_0^c}{2}\right)^{1/d_0}>1.
\]
If $q<q_{\mathrm{ETH}}$, then
\[
   2q^{d_0}<d_0^c,
\]
and the resulting algorithm would contradict Traxler's lower bound for bounded-frequency $(d_0,2)$-CSP.
\end{proof}

This proposition therefore leaves open a still interesting possibility: is there a universal constant $q<2$, independent of the palette size, for size-two list interpolation?

\section{Plucking up the courage to specify the constants}
\label{sec:constants}
In this section, $K$ denotes the maximum list size, with no bound on the palette size. Throughout the paper, we focused on the existence of a positive \(\varepsilon_K\) for Theorem~\ref{thm:main}; we did not, on the other hand, carry any quantitative estimates between steps of the overall algorithm or reductions. Due to the extensive number of `moving parts' and choices of constants throughout the algorithms, as well as these quantitative bounds being rather grim anyway, this choice was useful for readability. In this section, we repeat and revisit the overall analysis, somewhat informally, to present loose bounds for the asymptotic behavior of  \(\varepsilon_K\) as given by our algorithm without further optimizations. This section is thus unnecessary for any of the statements or proofs in the paper. Our only purpose here is to give the adventurous reader a sense of what \(\varepsilon_K\) is guaranteed by this work, as well as present an explicit baseline for future improvements. Proceed at your own risk.

Our bookkeeping shows that the guaranteed saving \(\varepsilon_K\) is roughly described by the reciprocal of a tower of exponentials of height \(\Theta(K)\).  We make no attempt to optimize the choice of constants. The main loss comes from repeatedly invoking the low-degree algorithm of~\cite{Zamir2021}. That algorithm in turn uses the subset-removal lemma of~\cite[Theorem~1.8]{Zamir2021}. As with many removal- and regularity-type combinatorial arguments, the dependence supplied by that lemma is enormous. At every list-size level, the saving inherited from the preceding level determines a degree threshold; the removal lemma is then applied at that threshold, and the saving proved from it is doubly exponentially smaller. Iterating over the list sizes is what produces the tower behavior.

\subsection{The seed solver}

Fix \(3\leq k\leq K\), and consider one step from \((k-1)\)-list-coloring to \(k\)-list-coloring over arbitrary palettes. Suppose that the algorithm already constructed for \((k-1)\)-list-coloring runs in
\[
   \Oh\!\left(2^{(1-\eta)n}\right),
   \qquad
   a:=2^{1-\eta},
\]
where \(0<\eta\leq1\). The saving in the ordinary exponential base is
\[
   2-a=2\bigl(1-2^{-\eta}\bigr)=\Theta(\eta),
\]
so it suffices to follow the base-two exponent saving~\(\eta\).

We briefly recall the two parameters appearing in the seed solver of Lemma~\ref{lem:seed}. The parameter \(\Delta_1\) is the threshold in the definition of a good active vertex: such a vertex has more than \(\Delta_1\) neighbors receiving one common color from its list in the fixed witness coloring. The parameter \(\beta_0\) is the fraction of active vertices that may be bad while the seed solver is still guaranteed to succeed. Writing
\[
   H:=\log(k/a)=\Theta(\log k),
\]
the analysis of Lemma~\ref{lem:seed} gives the seed solver base-two exponent
\[
   1-\eta+
   \left(
      4\frac{\ln\Delta_1+1}{\Delta_1}+\beta_0
   \right)H.
\]
We therefore set
\[
   \beta_0:=\frac{\eta}{4H}
\]
and choose \(\Delta_1\) to be the least sufficiently large integer satisfying
\[
   4H\frac{\ln\Delta_1+1}{\Delta_1}
   \leq\frac{\eta}{4}.
\]
The seed solver then has exponent at most \(1-\eta/2\), and the parameters have the scales
\[
   \beta_0=\Theta\!\left(\frac{\eta}{\log k}\right),
   \qquad
   \Delta_1
   =
   \Theta\!\left(
      \frac{\log k}{\eta}
      \log\frac{\log k}{\eta}
   \right).
\]
We shall also use the immediate lower bound
\[
   \Delta_1
   \geq\frac{16H}{\eta}
   =
   \Omega\!\left(\frac{\log k}{\eta}\right).
\]

\subsection{One bootstrap step}

\paragraph{Packing stars and partitioning the palette.}
If the seed solver is not guaranteed to succeed, more than \(\beta_0n\) active vertices are bad. As in the proof of Theorem~\ref{thm:general-bootstrap}, set
\[
   \alpha:=\frac{\beta_0}{4},
   \qquad
   \rho:=\frac{\beta_0}{16}.
\]
Here \(\alpha n\) is the threshold for invoking the low-degree algorithm on the unused vertices. Since the packing uses at most \(\delta n\leq\beta_0n/4\) vertices, at least \(\beta_0n/2\) bad active vertices remain eligible whenever this exit does not apply. Consequently, the next star is clean with probability at least \(\beta_0/8=2\rho\).

Choose
\[
   r:=\max\left\{2,
       \left\lceil\frac{2(2k+\log(4/\rho))}{\log(3/2)}\right\rceil
       \right\},
   \qquad
   \Delta:=r(1+k\Delta_1).
\]
The role of \(\Delta\) is to ensure that the \(r\) sampled neighbors avoid the at most \(k\Delta_1\) neighbors whose witness colors lie in the center's list with probability at least \(1/4\). The role of \(r\) is to amortize the cost of selecting stars after the palette partition. With
\[
   \gamma:=\frac{\rho}{2}\,4^{-k}(3/4)^r,
   \qquad
   \kappa:=r-\log(2/\gamma),
\]
our choice gives
\[
   \frac{r\log(3/2)}2\leq\kappa\leq r\log(3/2).
\]
In particular, \(\kappa=\Theta(r)\). Since \(\rho=\Theta(\eta/\log k)\),
\[
   r=O\!\left(k+\log\frac{\log k}{\eta}\right),
   \qquad
   \gamma\geq2^{-O(k)}\left(\frac{\eta}{\log k}\right)^{O(1)}.
\]
Combining this with the estimate for \(\Delta_1\), and using \(k\leq K\), gives
\[
   \alpha=\Theta\!\left(\frac{\eta}{\log k}\right),
   \qquad
   \Omega(1/\eta)\leq\Delta\leq\left(\frac K\eta\right)^{O(1)}.
\]
Only these estimates will matter below.

\paragraph{The saving supplied by the low-degree solver.}
It remains to estimate the saving obtained when at least \(\alpha n\) vertices have degree at most~\(\Delta\). Put
\[
   z_\Delta:=-\ln\bigl(1-2^{-(\Delta+1)}\bigr).
\]
The quantitative bounds in the low-degree algorithm of~\cite[Sections~4.3 and~5]{Zamir2021} give a base-two exponent saving of at least a universal constant times
\[
   g_{\mathrm{LD}}
   :=
   \frac{\alpha z_\Delta}{\Delta(\Delta+1)}
   \exp\!\left(-\frac{\Delta^2\ln k}{z_\Delta}\right).
\]
This follows by substituting
\[
   |S|\geq\frac{\alpha n}{\Delta+1},
   \qquad
   C_{\mathrm{rem}}=\frac{\ln k}{z_\Delta},
   \qquad
   \rho(\Delta,C_{\mathrm{rem}})
   >
   \frac{1}{
      \Delta\exp(1+C_{\mathrm{rem}}\Delta^2)}
\]
into the running-time bound proved there.

Now \(z_\Delta=\Theta(2^{-\Delta})\). Together with
\[
   \alpha
   =
   \Omega\!\left(\frac{\eta}{\log K}\right)
   \quad\text{and}\quad
   \Omega(1/\eta)
   \leq\Delta
   \leq
   (K/\eta)^{O(1)},
\]
this gives
\[
   2^{-\,2^{(K/\eta)^{O(1)}}}
   \leq
   g_{\mathrm{LD}}
   \leq
   2^{-\,2^{\Omega(1/\eta)}}.
\]
This is the doubly exponential loss responsible for the final tower.

\paragraph{Combining the costs of the step.}
Take a sufficiently small universal constant \(c_{\mathrm{LD}}>0\), and set
\[
   g:=c_{\mathrm{LD}}g_{\mathrm{LD}},
   \qquad
   C_{\mathrm{LD}}:=2^{1-g},
   \qquad
   \delta:=\min\left\{\frac{\beta_0}{4},\frac{g}{2\log k}\right\}.
\]
We may assume \(0<g\leq1/2\). The low-degree algorithm has running time \(\Oh(C_{\mathrm{LD}}^n)\), and the cost of enumerating the colors of the used vertices before invoking it has base-two exponent at most
\[
   (1-g)(1-\delta)+\delta\log k
   \leq1-g+\delta\log k
   \leq1-g/2.
\]
The seed solver has exponent at most \(1-\eta/2\). Neither cost is multiplied by the exponential number of star selections: only the terminal two-block calls are repeated that many times.

The packing contains \(m=\delta n/(r+1)+O(1)\) stars, and each selection uses \(\ell=\gamma m/(4r)+O(1)\) of them. After accounting for all selections, the terminal exponent is
\[
   1-\frac{\kappa\gamma\delta}{4r(r+1)}.
\]
Thus, for a sufficiently small universal constant \(c_0>0\), we may choose the new exponent saving to be
\[
   \eta_{\mathrm{new}}
   :=c_0\min\left\{\eta,\ g,
       \frac{\kappa\gamma\delta}{r(r+1)}\right\}.
\]
Here \(\eta_{\mathrm{new}}\) denotes the saving certified by these parameter choices. The number of outer trials is polynomial in \(n\) for fixed parameters and does not change this exponent.

The additional factors \(\gamma\), \(1/r\), and \(1/\log k\) cost at most \(2^{O(K)}(K/\eta)^{O(1)}\) in the reciprocal saving. The low-degree term is still the asymptotically dominant loss. For a suitable explicit choice of the constants hidden above,
\[
   2^{-\,2^{(K/\eta)^{O(1)}}}
   \leq\eta_{\mathrm{new}}
   \leq2^{-\,2^{\Omega(1/\eta)}}.
\]

\subsection{Iterating over all list sizes}

For \(h\geq0\), define
\[
   \tw_0(x):=x,
   \qquad
   \tw_{h+1}(x):=2^{\tw_h(x)}.
\]
Fix a universal constant \(c_0>0\) small enough for the running-time analysis above. At every level, define
\[
   \eta_{k+1}:=c_0\min\left\{\eta_k,\ g_{k+1},
       \frac{\kappa_{k+1}\gamma_{k+1}\delta_{k+1}}
            {r_{k+1}(r_{k+1}+1)}\right\},
\]
where the subscripted quantities are the parameters in the step constructing the \((k+1)\)-list-coloring algorithm. Thus \(\eta_k\) is an exponent saving for the algorithm constructed for \(k\)-list-coloring over arbitrary palettes. We start with \(\eta_2=1\), since \(2\)-list-coloring is solvable in polynomial time, and put
\[
   x_k:=\frac1{\eta_k}.
\]
The estimate for one bootstrap step implies that there are universal constants \(c>0\) and \(D\geq1\) such that
\[
   2^{\,2^{c x_k}}
   \leq
   x_{k+1}
   \leq
   2^{\,2^{(Kx_k)^D}}.
\]

We first derive the lower bound on \(x_K\).  Choose a universal constant \(x_0\) such that
\[
   2^{c x}\geq x
   \qquad\text{for every }x\geq x_0.
\]
By choosing \(c_0\) small enough, we may ensure \(x_3\geq x_0\). At every subsequent level,
\[
   x_{k+1}
   \geq
   2^{\,2^{c x_k}}
   \geq
   2^{x_k}.
\]
Iterating this inequality through the remaining list sizes gives
\[
   x_K\geq\tw_{K-O(1)}(2).
\]

For the reverse direction, fix a sufficiently large universal integer \(A\), and set
\[
   h_k:=AK+3(k-2).
\]
We claim inductively that
\[
   x_k\leq\tw_{h_k}(2).
\]
The claim is immediate for \(k=2\).  Suppose it holds at level \(k\), and write \(y=\tw_{h_k}(2)\).  Since \(h_k\geq AK\), the constant \(A\) may be chosen so that \(y\geq K\) and
\[
   (Kx_k)^D
   \leq
   (Ky)^D
   \leq
   y^{2D}
   \leq
   2^y
   =
   \tw_{h_k+1}(2).
\]
The upper recurrence now gives
\[
   x_{k+1}
   \leq
   2^{\,2^{\tw_{h_k+1}(2)}}
   =
   \tw_{h_k+3}(2)
   =
   \tw_{h_{k+1}}(2).
\]
In particular,
\[
   x_K
   \leq
   \tw_{AK+3(K-2)}(2)
   \leq
   \tw_{(A+3)K}(2).
\]

Set \(\varepsilon_K:=2(1-2^{-\eta_K})\). We finally note that
\[
   \frac{x_K}{2\ln 2}
   \leq
   \frac1{\varepsilon_K}
   \leq
   x_K.
\]

This proves the following.

\begin{proposition}[Quantitative estimate]
There are universal constants \(C_1,C_2>0\) such that, for every sufficiently large \(K\), the saving specified by the parameter choices above satisfies
\[
   \tw_{\lfloor C_1K\rfloor}(2)
   \leq
   \frac1{\varepsilon_K}
   \leq
   \tw_{\lceil C_2K\rceil}(2).
\]
In particular, the reciprocal of the saving obtained above has tower height \(\Theta(K)\).
\end{proposition}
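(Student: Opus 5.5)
The proposition is essentially the summary of the preceding bookkeeping, so the plan is to package the per-level recurrence and then iterate it. Write $\eta_k$ for the base-two exponent saving certified at list size $k$, and set $x_k=1/\eta_k$, starting from $\eta_2=1$. The first step is to establish, uniformly in $3\le k\le K$, a two-sided one-step recurrence
\[
   2^{\,2^{c x_k}}\le x_{k+1}\le 2^{\,2^{(Kx_k)^D}}
\]
with universal constants $c>0$ and $D\ge1$. For the upper bound on $x_{k+1}$, I will use the lower bound on $\eta_{k+1}$, which is the minimum of three terms.
\begin{itemize}
\item The term $\eta_k$ is harmless.
\item The terms $\gamma,\delta,1/r,1/\log k$ together lose only $2^{O(K)}(K/\eta)^{O(1)}$.
\item The low-degree term $g_{\mathrm{LD}}$ dominates. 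Substituting $\alpha=\Omega(\eta/\log K)$, $\Delta\le (K/\eta)^{O(1)}$ and $z_\Delta=\Theta(2^{-\Delta})$ gives $g_{\mathrm{LD}}\ge 2^{-2^{(K/\eta)^{O(1)}}}$.
\end{itemize}
For the lower bound on $x_{k+1}$, I will use that $\eta_{k+1}\le c_0 g$ together with $\Delta\ge \Omega(1/\eta)$. This makes $\exp(-\Delta^2\ln k/z_\Delta)\le 2^{-2^{\Omega(1/\eta)}}$.

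The second step is the lower tower bound. I will choose $c_0$ small so that $x_3\ge x_0$, where $2^{cx}\ge x$ for all $x\ge x_0$. Then $x_{k+1}\ge 2^{x_k}$ at every level, and induction gives $x_K\ge \tw_{K-O(1)}(2)$. This is at least $\tw_{\lfloor C_1K\rfloor}(2)$ for large $K$ and any $C_1<1$.

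The third step is the upper tower bound, by induction on the claim $x_k\le\tw_{h_k}(2)$ with $h_k=AK+3(k-2)$. Set $y=\tw_{h_k}(2)$. Taking $A$ large guarantees $y\ge K$ and $y^{2D}\le 2^y$. Then $(Kx_k)^D\le y^{2D}\le\tw_{h_k+1}(2)$, and two more exponentials give $x_{k+1}\le\tw_{h_k+3}(2)$. This yields $x_K\le\tw_{(A+3)K}(2)$.

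Finally, I convert to $\varepsilon_K=2(1-2^{-\eta_K})$. The elementary inequalities $\eta\ln2\cdot 2^{-\eta}\le 1-2^{-\eta}\le \eta\ln 2$ for $\eta\in(0,1]$ give $x_K/(2\ln2)\le 1/\varepsilon_K\le x_K$, since $2^{-\eta_K}\ge1/2$. The constant factors are absorbed by adjusting $C_1$ and $C_2$, because dividing a tower by a constant loses at most one level.

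The main obstacle is the first step. I need to verify honestly that the low-degree saving from~\cite{Zamir2021} has the stated form $g_{\mathrm{LD}}$ with those exact parameter dependences. I also need to check that every other parameter is polynomial in $K/\eta$, so that the upper recurrence exponent $(Kx_k)^D$ is justified uniformly in $k$. Everything after that is mechanical tower arithmetic.
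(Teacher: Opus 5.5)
Your proposal is correct and follows the paper's own bookkeeping essentially verbatim: the same two-sided recurrence $2^{2^{cx_k}}\le x_{k+1}\le 2^{2^{(Kx_k)^D}}$ driven by $g_{\mathrm{LD}}$, the same choice of $c_0$ forcing $x_3\ge x_0$ for the lower tower, the same induction $x_k\le\tw_{h_k}(2)$ with $h_k=AK+3(k-2)$, and the same final conversion to $\varepsilon_K$. There are two harmless slips: $\delta=\min\{\beta_0/4,\,g/(2\log k)\}$ carries a factor of $g$, so it does not belong among the polynomially bounded terms (the third term of the minimum is nonetheless still at least $2^{-2^{(K/\eta)^{O(1)}}}$), and the inequalities you cite only give $1/\varepsilon_K\le x_K/\ln 2$ rather than $\le x_K$ (the sharper bound follows from concavity of $2(1-2^{-\eta})-\eta$ on $[0,1]$), a constant that is absorbed into $C_2$ anyway.
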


\section{AI Storytime}
The proof strategy for the $k$-coloring result was mine; nonetheless, this project was the first time I found interacting with AI tools useful and productive. This non-mathematical section describes the methods I found practical, together with some cautions and broader discussion.

I used OpenAI's ChatGPT 5.6 Sol model with its memory feature turned on. The model frequently searches previous chats, including seemingly unrelated ones. I even `caught' it using constructions I had described in earlier chats without attribution: a reference to the previous chat appeared in the animated `thinking' glimpse but not in the final answer. Thus, my account of the prompts I supplied might not reveal the full context I inadvertently provided the model with.

Approaching this project, I already had a cohesive proof strategy in mind, building upon prior works. Still, as a sanity check and out of curiosity, I initially withheld it from the AI and revealed it only as needed. I began by providing~\cite{Zamir2021} as the main context and~\cite{BHK2009,ZamirCSP2022,Zamir2022,BeigelEppstein2005} as additional reading, then asked it to generalize the attached paper by one step, from~$6$-coloring to~$7$-coloring. After long thinking, the model came back empty-handed.

Next, I gave it a promising lead, similar to what I would give an early-stage graduate student: an informal, concise but complete description of the high-level strategy at the beginning of Section~\ref{sec:warmup}. I explained what goes wrong in the natural extension of~\cite{Zamir2021}, the remaining obstruction of many size-two lists, and the desired reduction to the clean problem of list-coloring with many such lists. I then asked it to (a) verify the reduction, which I had described only informally, and (b) solve the resulting interpolation problem. The first task was easy: the model repeated the proof details from the paper, generalized them by one color, and reached the same obstruction I had in mind. The second was not: it again thought for a while and drew a blank. Generic encouragement or additional thinking time did not help.

I therefore supplied a more concrete direction. Although the fully general formulation is cleaner, our warm-up application has a bounded color palette, so the \emph{same} size-two list must occur on a constant fraction of the variables. I instructed it to focus on this easier case, at which point the model successfully completed the algorithmic argument. The needed observation was clearly legible from the algorithm it provided: once all short lists contain the same two colors, it suffices to determine which induced subgraphs of the remaining vertices are colorable without those colors.

The bad news, or another cautionary tale, is that instead of citing~\cite{BHKK2007}, which I had not provided as context, and using their all-subgraph coloring result as a black box, the model essentially reproduced (or copied) their entire proof with the new observation mixed into it. Thus, while essentially correct, the algorithm was unnecessarily complicated and displayed a severe lack of attribution. A user less familiar with the area could easily believe the AI had invented the convolution framework and unknowingly repeat published work without citation.

Generally, the model almost felt like simulating an early-stage graduate student. I provided proof frameworks and received a good indication of whether the details could be filled in, or of the precise remaining obstruction. Unlike a student, its occasional mistakes were more adversarially hidden, and its attributions unreliable. On the other hand, what might have emerged from a weekly meeting became a rapid prompt--response cycle. This already feels like a noticeable speed multiplier.

An immediate downside is the apparent loss of what I once considered the best problems for training such early-stage students. If a publicly available machine can now turn a sufficiently strong lead into a solution, without the corresponding understanding and effort, then it may soon become unreasonable to train students on such problems.

I also used AI for the arbitrary-palette extension and the figures. One may look at them and ask whether these outputs are anything to be proud of. I would refer that one to the manually drawn figures in my older papers and ask them to appreciate, at least, the relative improvement.

\end{document}